\renewcommand{\Re}{\mathrm{Re}}
\newcommand{\alt}{\widetilde{\al}}
\newcommand{\RW}{\mathfrak{R}}
\newcommand{\Teuk}{\mathfrak{T}}
\newcommand{\QM}{\mathrm{QM}}
\newcommand{\alo}{{\mathring{\alpha}}}
\newcommand{\psio}{{\mathring{\psi}}}
\newcommand{\beo}{{\mathring{\beta}}}
\renewcommand{\gao}{{\mathring{\gamma}}}
\newcommand{\deo}{{\mathring{\delta}}}
\newcommand{\epo}{{\mathring{\epsilon}}}
\newcommand{\ellmode}{\ell}
\newcommand{\mm}{\mathfrak{m}}
\newcommand{\ablin}{\accentset{\scalebox{.6}{\mbox{\tiny (1)}}}{\underline{\alpha}}}
\newcommand{\alin}{\accentset{\scalebox{.6}{\mbox{\tiny (1)}}}{{\alpha}}}
\newcommand{\Olin}{\Omega^{-1}\accentset{\scalebox{.6}{\mbox{\tiny (1)}}}{\Omega}}
\newcommand{\glin}{\accentset{\scalebox{.6}{\mbox{\tiny (1)}}}{\slashed{g}}}
\newcommand{\bmlin}{\accentset{\scalebox{.6}{\mbox{\tiny (1)}}}{b}}
\newcounter{parentsubequation}
\newenvironment{subsubequations}{%
  \refstepcounter{equation}%
  \protected@edef\theparentsubequation{\theequation}%
  \setcounter{parentsubequation}{\value{equation}}%
  \setcounter{equation}{0}%
  \def\theequation{\theparentsubequation\alph{equation}}%
  \ignorespaces
}{%
  \setcounter{equation}{\value{parentsubequation}}%
  \ignorespacesafterend
}
\title{Linear Stability of Schwarzschild-Anti-de Sitter spacetimes III: \\ Quasimodes and sharp decay of gravitational perturbations}
\author[1]{Olivier Graf\thanks{olivier.graf@univ-grenoble-alpes.fr}}
\author[2,3]{Gustav Holzegel\thanks{gholzegel@uni-muenster.de}}
\affil[1]{\small Univ.~Grenoble~Alpes, CNRS, IF, 38000 Grenoble, France \vskip.2pc \ }
\affil[2]{\small Universit\"at M\"unster,
Mathematisches~Institut, Einsteinstrasse~62, 48149~M\"unster,~Bundesrepublik~Deutschland \vskip.2pc \ }
\affil[3]{\small Imperial College London,
Department of Mathematics,
South~Kensington~Campus,~London~SW7~2AZ,~United~Kingdom}
\begin{document}
\maketitle
\begin{abstract}
  In this last part of the series we prove that the slow (inverse logarithmic) decay in time of solutions to the linearised Einstein equations on Schwarzschild-Anti-de Sitter backgrounds obtained in~\cite{Gra.Hol24,Gra.Hol24a} is in fact optimal by constructing quasimode solutions for the Teukolsky system. The main difficulties compared with the case of the scalar wave equation treated in earlier works arise from the first order terms in the Teukolsky equation, the coupling of the Teukolsky quantities at the conformal boundary and ensuring that the relevant quasimode solutions satisfy the Teukolsky-Starobinsky relations. The proof invokes a quasimode construction for the corresponding Regge-Wheeler system (which can be fully decoupled at the expense of a higher order boundary condition) and a reverse Chandrasekhar transformation which generates solutions of the Teukolsky system from solutions of the Regge-Wheeler system. Finally, we provide a general discussion of the well-posedness theory for the higher order boundary conditions that typically appear in the process of decoupling.
  
\end{abstract}

{
  \hypersetup{linkcolor=black}
  \tableofcontents
}

\hypersetup{linkcolor=MidnightBlue}

\section{Introduction}\label{sec:intro}

This is the third and final paper of our series studying the time evolution of gravitational perturbations on the black hole exterior of a Schwarzschild-Anti-de Sitter (Schwarzschild-AdS) spacetime. In Part 1 \cite{Gra.Hol24}, the authors established the linear stability of the Schwarzschild-AdS spacetime by showing that in a suitably (initial data) normalised double null gauge, all linearised geometric quantities decay inverse logarithmically in time to a member of  the $4$-parameter family of linearised Kerr-AdS solutions, the latter being identifiable directly at the level of initial data. A key ingredient of the proof in \cite{Gra.Hol24}, established independently in Part 2 of the series \cite{Gra.Hol24a}, was to prove inverse logarithmic decay rates for the pair of (linearised) extremal null curvature quantities $\al^{[\pm 2]}$, which are spin-weighted, complex valued functions on the Schwarzschild-AdS manifold, also known as the Teukolsky variables. 

\subsection{The Teukolsky system and informal statement of the main result} \label{sec:introteu}
We recall that the linearised Einstein equations imply decoupled equations for the aforementioned Teukolsky variables $\al^{[\pm 2]}$. For convenience, we collect them here in the standard $(t,r,\vartheta, \varphi)$ coordinates on the Schwarzschild-AdS manifold $(\mathcal{M},g_{M,k})$ (see \eqref{eq:gSAdS2} below): 
\begin{subequations}\label{eq:Teukoriginal}
  \begin{align}
    \begin{aligned}
      0 & = \Box_{g_{M,k}}\al^{[+2]} + \frac{2}{r^2}\frac{\d\De}{\d r}\pr_r\al^{[+2]}  + \frac{4}{r^2}\le(\frac{r^2}{2\De}\frac{\d\De}{\d r}-2r\ri)\pr_t\al^{[+2]} + \frac{4}{r^2}i\frac{\cos\varth}{\sin^2\varth}\pr_\varphi\al^{[+2]}\\
      & \quad + \frac{2}{r^2}\le(1+15k^2r^2-2\cot^2\varth\ri)\al^{[+2]},
    \end{aligned}
  \end{align}
  and
  \begin{align} 
      \begin{aligned}
      0 & = \Box_{g_{M,k}}\al^{[-2]} - \frac{2}{r^2}\frac{\d\De}{\d r}\pr_r\al^{[-2]}  - \frac{4}{r^2}\le(\frac{r^2}{2\De}\frac{\d\De}{\d r}-2r\ri)\pr_t\al^{[-2]} - \frac{4}{r^2}i\frac{\cos\varth}{\sin^2\varth}\pr_\varphi\al^{[-2]}\\
      & \quad + \frac{2}{r^2}\le(-1+3k^2r^2-2\cot^2\varth\ri)\al^{[-2]},
    \end{aligned}
  \end{align}
    where $\Delta := r^2 + k^2 r^4 - 2Mr$, and $\Box_{g_{M,k}}$ is the d'Alembertian operator associated with the metric $g_{{M,k}}$,
  \begin{align*}
  \Box_{g_{M,k}} & = -\frac{r^2}{\De}\pr_t^2 + \frac{1}{r^2}\pr_r\le(\De\pr_r\ri) + \frac{1}{r^2\sin\varth}\pr_\varth\le(\sin\varth\pr_\varth\ri) + \frac{1}{r^2\sin^2\varth}\pr_\varphi^2.
  \end{align*}
\end{subequations}
We recall also (see \cite{Gra.Hol24} for a derivation) that fixing the conformal class of the non-linear metric on the conformal boundary for the perturbations determines the following coupled \emph{conformal boundary conditions at infinity} for the $\al^{[\pm 2]}$ (with the $\ast$ denoting complex conjugation):
\begin{subequations}\label{eq:TeukBC}
\begin{align}
  \widetilde{\al}^{[+2]} - \left(\widetilde{\al}^{[-2]}\right)^\ast & \xrightarrow{r\to+\infty} 0,\label{eq:TeukBCDirichlet} \\
  r^2\pr_r\widetilde{\al}^{[+2]} + r^2\pr_r\big(\widetilde{\al}^{[-2]}\big)^\ast  & \xrightarrow{r\to+\infty} 0,\label{eq:TeukBCNeumann}
\end{align}
\end{subequations}
where $  \widetilde{\al}^{[+2]} =  \De^2r^{-3}\al^{[+2]}$ and $\widetilde{\al}^{[-2]}:= r^{-3}\al^{[-2]}$.  The main result of \cite{Gra.Hol24a} can then be informally stated as 
\begin{theorem}[\cite{Gra.Hol24a}] \label{theo:logupper}
Solutions to the system (\ref{eq:Teukoriginal}) with the boundary conditions (\ref{eq:TeukBC}) arising from suitably regular initial data prescribed on a spacelike hypersurface connecting the future event horizon $\mathcal{H}^+$ with the conformal boundary $\mathcal{I}$ decay (at least) inverse logarithmically in time. 
\end{theorem}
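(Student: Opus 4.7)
The plan is to reduce the Teukolsky system \eqref{eq:Teukoriginal}--\eqref{eq:TeukBC} to a Regge-Wheeler system, for which a robust $t$-decay theory is available on Schwarzschild-AdS, and then to transfer the resulting decay back to the Teukolsky quantities. First I would apply a Chandrasekhar transformation to each of $\widetilde\al^{[\pm 2]}$, i.e.\ introduce $\Psi^{[\pm 2]}$ as (appropriately rescaled) iterated $r^\ast$-derivatives; the new unknowns should satisfy decoupled Regge-Wheeler equations in the bulk, with no first-order radial derivatives and with a manifestly better potential. The price to pay is that the coupled Dirichlet/Neumann conditions \eqref{eq:TeukBCDirichlet}--\eqref{eq:TeukBCNeumann} translate, after commutation against the bulk equations, into higher-order but now \emph{decoupled} boundary conditions for $\Psi^{[\pm 2]}$ at $\mathcal{I}$. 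A preliminary step is therefore to establish well-posedness of the Regge-Wheeler system with such higher-order boundary conditions and to identify its solution (via a Cauchy problem with compatible data) with the one obtained from $\al^{[\pm 2]}$ by the Chandrasekhar transformation.

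Once at the Regge-Wheeler level I would prove integrated local energy decay with a logarithmic loss at the trapped photon sphere $r=3M$. The argument proceeds by decomposing $\Psi^{[\pm 2]}$ in $t$-frequencies and spin-weighted spherical harmonics, reducing to a family of radial ODEs on $(r_+,\infty)$, and constructing a Morawetz-type multiplier of the form $f(r)\pr_{r^\ast}+g(r)$ adapted to the Regge-Wheeler potential. Near the photon sphere the standard transversal-derivative degeneracy of the multiplier is compensated at the price of a logarithmic factor in the $t$-frequency $\omega$, exactly as in the Holzegel-Smulevici analysis of the scalar wave equation on Schwarzschild-AdS; at the event horizon the tangential derivatives are recovered via the red-shift commutator; at $\mathcal{I}$ the decoupled higher-order boundary conditions are used to absorb the Morawetz boundary contributions. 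Combining this degenerate Morawetz estimate with conservation (modulo controlled fluxes) of a modified energy, a pigeonhole/hierarchy argument in the spirit of Dafermos-Rodnianski then yields inverse logarithmic decay for the energy flux of $\Psi^{[\pm 2]}$; commuting with $\pr_t$ and the red-shift vector field upgrades this to pointwise statements.

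The final step is to transfer decay back from $\Psi^{[\pm 2]}$ to $\widetilde\al^{[\pm 2]}$. Since the Chandrasekhar transformation is differential, one recovers $\widetilde\al^{[\pm 2]}$ by integrating two transport equations in $r^\ast$ with source terms in $\Psi^{[\pm 2]}$; the integration constants are controlled by traces on $\mathcal{H}^+$ (or $\mathcal{I}$), themselves decaying logarithmically by an independent boundary analysis, so no rate is lost. Undoing the rescaling $\widetilde\al^{[+2]}=\De^2 r^{-3}\al^{[+2]}$ and $\widetilde\al^{[-2]}=r^{-3}\al^{[-2]}$ then delivers the theorem for the Teukolsky variables themselves.

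The hard part, in my view, is the interplay between the non sign-definite first-order terms in \eqref{eq:Teukoriginal} and the coupled boundary conditions \eqref{eq:TeukBC}: any direct energy argument at the Teukolsky level loses control either in the bulk or at $\mathcal{I}$. The Chandrasekhar transformation resolves the bulk obstruction, but building a workable theory for the higher-order boundary conditions it induces, and carrying out the boundary book-keeping at $\mathcal{I}$ where the AdS asymptotics interact non-trivially with the Regge-Wheeler potential, are the genuine technical hurdles.
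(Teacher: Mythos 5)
Theorem \ref{theo:logupper} is imported from \cite{Gra.Hol24a}; the present paper contains no proof of it, so your proposal can only be measured against the strategy that reference is reported (in Sections \ref{sec:introchandra}, \ref{sec:outlook} and Remark \ref{rem:actualnorms}) to follow. At the level of architecture your plan does match it: Chandrasekhar transformation to a Regge--Wheeler system, a well-posedness theory for the induced higher-order boundary conditions, a degenerate integrated decay estimate at the photon sphere combined with red-shift estimates at $\mathcal{H}^+$ and boundary estimates at $\mathcal{I}$, and a transfer back to $\widetilde{\al}^{[\pm2]}$. One structural inaccuracy: the Chandrasekhar transformation does \emph{not} decouple the boundary conditions for $\Psi^{[+2]}$ and $\Psi^{[-2]}$ individually --- the conditions \eqref{eq:RWBC} still mix $\Psi^{[+2]}$ with $\big(\Psi^{[-2]}\big)^\ast$. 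Decoupling is achieved only after passing to the combinations $\Psi^D$ and $\Psi^R$ of \eqref{eq:defPsiDR} (the latter involving the inverse angular operator), which satisfy \eqref{hiobc}; it is the Robin-type condition for $\Psi^R$ that yields a coercive energy identity, and without this step the boundary contributions at $\mathcal{I}$ in your multiplier identities do not close.

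The more serious problem is that the quantitative heart of your argument is misstated. You claim the degeneracy at $r=3M$ is ``compensated at the price of a logarithmic factor in the $t$-frequency $\omega$''. If the integrated-decay constant grew only logarithmically in the frequency, the low/high-frequency splitting and pigeonhole argument you invoke would produce decay arbitrarily close to polynomial (with derivative loss) --- a conclusion that is \emph{false}, as Theorem \ref{thm:main2} of this very paper shows via the quasimode construction. In \cite{Hol.Smu14} and \cite{Gra.Hol24a} the loss at the stably trapped set is \emph{exponential} in the angular (equivalently, time) frequency, and it is precisely this exponential loss which, after optimising the frequency cut-off $L\sim \log t^\star$, forces the inverse-logarithmic rate. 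As written, your trapping estimate is both unobtainable and inconsistent with the rate you then derive from it; repairing this step is not cosmetic, since the entire quantitative content of the theorem hinges on the exponential dependence.
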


We emphasise that Theorem \ref{theo:logupper} holds independently of the full system of linearised Einstein equations (``the system of gravitational perturbations") considered in \cite{Gra.Hol24}. On the other hand, it is of course natural to ask whether the solutions to the Teukolsky system in Theorem \ref{theo:logupper} always generate solutions to the system of gravitational perturbations. It turns out (as proven in \cite{Gra.Hol24}) that a necessary and sufficient condition is that the \emph{the $\al^{[\pm 2]}$ are in addition related by the Teukolsky-Starobinsky identities}, which one should think of as constraints arising from the validity of the full system gravitational perturbations.\footnote{In particular, the Teukolsky quantities associated with a solution of the system of gravitational perturbations will always satisfy these relations (showing immediately the necessity of the condition). See Lemma 2.15 of \cite{Gra.Hol24}. } 

\begin{definition}\label{def:TSid}
  We say that $\alt^{[\pm2]}$ satisfy the \emph{Teukolsky-Starobinsky identities} if
  \begin{subequations}\label{eq:TSidog}
    \begin{align}
      w^2(w^{-1}L)^4(\alt^{[-2]})^\ast & = \LL_{-1}\LL_{0}\LL_{1}\LL_{2}(\alt^{[+2]})^\ast - 12M\pr_t\alt^{[+2]}, \label{eq:TSidp2} \\
      w^2(w^{-1}\Lb)^4(\alt^{[+2]})^\ast & = \LL_{-1}^\dg\LL_{0}^\dg\LL_{1}^\dg\LL_{2}^\dg(\alt^{[-2]})^\ast + 12M\pr_t\alt^{[-2]},\label{eq:TSidm2}
    \end{align}
  \end{subequations}
  where $L=\partial_t + \frac{\Delta}{r^2} \partial_r$ and $\underline{L} = \partial_t - \frac{\Delta}{r^2} \partial_r$ are the null directions in Schwarzschild-AdS and for all $n\in\mathbb{Z}$
  \begin{align}\label{eq:defLLn}
    \LL_{n} & := \le(\pr_\varth + \frac{i}{\sin\varth} \pr_\varphi\ri) + n \cot\varth, & \LL_{n}^\dg & := \le(\pr_\varth - \frac{i}{\sin\varth} \pr_\varphi\ri) + n \cot\varth .
  \end{align}
\end{definition}

\begin{remark}\label{rem:elliptictraductions}
One easily checks that~\eqref{eq:TSidog} are indeed equivalent to the tensorial Teukolsky-Starobinsky identities for $\alin,\ablin$ given in Section 2.10 of~\cite{Gra.Hol24}. See Section \ref{sec:TSconstraints} below for details.
\end{remark}

\begin{theorem}[Producing full solutions from solutions to Teukolsky system; Proposition 3.13 of \cite{Gra.Hol24}]\label{thm:TeuktoGrav}
  Let $\alt^{[\pm2]}$ be solutions to the system (\ref{eq:Teukoriginal}) with the boundary conditions (\ref{eq:TeukBC}) arising from suitably regular initial data \underline{and satisfying in addition the Teukolsky-Starobinsky identities~\eqref{eq:TSidog}}. Then, there exists a solution to the system of gravitational perturbations on Schwarzschild-AdS  (regular at the horizon and conformal at infinity to the AdS metric, see~\cite{Gra.Hol24}), such that its associated Teukolsky quantities coincide with $\alt^{[\pm2]}$. 
\end{theorem}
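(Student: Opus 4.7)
The plan is to reconstruct a full solution to the linearised Einstein system from the given $\alt^{[\pm 2]}$ by running the standard Bianchi $\to$ null structure $\to$ metric hierarchy in the double null gauge of~\cite{Gra.Hol24}, with the Teukolsky-Starobinsky identities playing the role of the compatibility conditions that make the reconstruction consistent.

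First, starting from $\alt^{[+2]}$, one integrates the linearised null Bianchi equations along the ingoing ($\underline{L}$-directed) chain to successively define candidate linearised Weyl quantities corresponding to $\beta^{(1)}$, $(\rho^{(1)},\sigma^{(1)})$, $\underline\beta^{(1)}$, and finally a candidate for $\ablin$. Symmetrically, integrating the outgoing ($L$-directed) Bianchi chain from $\alt^{[-2]}$ produces the remaining linearised Weyl fields and, in particular, a candidate for $\alin$. At each step, the integration constant along the transport direction is fixed on a reference spacelike hypersurface by the prescribed initial data.

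Second, the identities~\eqref{eq:TSidog} are precisely the compatibility conditions ensuring that the two hierarchies produce a common set of linearised Weyl components, i.e.\ that the candidate $\ablin$ produced from $\alt^{[+2]}$ agrees with the prescribed $\alt^{[-2]}$ and vice versa. Concretely, composing four ingoing Bianchi transports on $\alt^{[+2]}$ yields the weighted radial operator $w^2(w^{-1}\underline{L})^4$ appearing on the left of~\eqref{eq:TSidm2}, while the angular operator $\LL_{-1}^\dg\LL_{0}^\dg\LL_{1}^\dg\LL_{2}^\dg$ on the right tracks the cumulative spin-weight shifts produced by the same iteration on the other hierarchy; the correction $12M\partial_t$ encodes the Schwarzschild mass term appearing in the Weyl curvature transport. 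The symmetric outgoing computation gives~\eqref{eq:TSidp2}. Hence all six linearised Weyl curvature components are consistently defined.

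Third, with all linearised curvature quantities in hand, one integrates the linearised null structure equations in the double null gauge to recover the linearised Ricci coefficients and metric perturbations, fixing the remaining integration constants by the initial data and the gauge conditions of~\cite{Gra.Hol24}. The linearised Einstein equations then split into evolution equations (automatic by construction) and constraints, the latter being propagated off the initial hypersurface by the usual propagation-of-constraints argument. The boundary conditions~\eqref{eq:TeukBC} guarantee conformality to the AdS metric at $\mathcal{I}$---the Dirichlet condition~\eqref{eq:TeukBCDirichlet} fixing the leading conformal trace and the Neumann-type condition~\eqref{eq:TeukBCNeumann} the subleading data required to close the null structure integration up to the boundary---while regularity at $\mathcal{H}^+$ is inherited from that of the data. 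The main obstacle is the second stage: matching the precise algebraic form of~\eqref{eq:TSidog}---with its weights $w^{\pm 2}$, coefficient $12M$, and particular angular operators $\LL_n,\LL_n^\dg$---to what is produced by composing four Bianchi transports requires careful spin-weight bookkeeping, and this is also the point at which the scalar form~\eqref{eq:TSidog} must be reconciled with the tensorial Teukolsky-Starobinsky identities for $\alin,\ablin$ alluded to in Remark~\ref{rem:elliptictraductions}.
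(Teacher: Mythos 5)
Your proposal follows essentially the same route as the proof this paper relies on: Theorem~\ref{thm:TeuktoGrav} is imported from \cite[Proposition 3.13]{Gra.Hol24}, which, as the paper states in Section~\ref{sec:outlook}, proceeds by directly integrating the system of gravitational perturbations in double null gauge, with the Teukolsky-Starobinsky identities~\eqref{eq:TSidog} playing exactly the role of compatibility conditions between the ingoing and outgoing Bianchi hierarchies that you describe. The present paper does not reprove the statement; the only proof-like material it contains is the alternative ``metric reconstruction'' of Appendix~\ref{sec:Hertzpotentials}, which is a genuinely different route (Wald duality applied to a Hertz potential): it produces an explicit metric perturbation $\mathfrak{h}[\alt_G^{[-2]}]$ solving the linearised Einstein equations, but its associated Teukolsky quantities are only the Teukolsky-Starobinsky transforms~\eqref{eq:altpm2fromalrpm2Gbis} of the input, so upgrading it to a proof of Theorem~\ref{thm:TeuktoGrav} would require surjectivity of that transformation (Remark~\ref{rem:surjectivity}) -- a gap your double-null approach avoids. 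Two caveats on your sketch: the Bianchi equations are transport equations with angular source terms rather than pure transports, so identifying~\eqref{eq:TSidog} with the matching condition between the two hierarchies requires the full spin-weight bookkeeping you defer; and in the asymptotically AdS setting the radial integration of the null structure and metric quantities must be anchored at the conformal boundary using~\eqref{eq:TeukBC} (and the gauge normalisation of \cite{Gra.Hol24}), not only at the initial slice. You correctly flag the former as the main obstacle, and both are handled in the cited proof rather than here.
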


While the validity of the Teukolsky-Starobinsky identities did not have to be used in \cite{Gra.Hol24} (having proven the upper bound of Theorem \ref{theo:logupper} for the \emph{larger} class of solutions to the Teukolsky system in \cite{Gra.Hol24a} was more than sufficient in \cite{Gra.Hol24}) it becomes relevant when we want to establish \emph{lower bounds} in the setting of Theorem \ref{theo:logupper}. In particular, we would like to construct solutions with very slow decay \emph{which in addition satisfy the Teukolsky-Starobinsky identities} because it is only in this way that our lower bounds for the Teukolsky system have relevance for the system of gravitational perturbations. Our main result can be informally stated as follows (see already Theorem \ref{thm:main2} in Section \ref{sec:lowerboundintro} for the formal statement):
\begin{theorem} \label{theo:loglower}
General solutions to the system (\ref{eq:Teukoriginal}) with the boundary conditions (\ref{eq:TeukBC}) arising from suitably regular initial data prescribed on a spacelike hypersurface connecting the future event horizon $\mathcal{H}^+$ with the conformal boundary $\mathcal{I}$  \underline{and satisfying in addition the Teukolsky-Starobinsky identities} cannot decay better than inverse logarithmically in time.  
\end{theorem}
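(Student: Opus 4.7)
My strategy follows the classical quasimode philosophy (Holzegel--Smulevici for the scalar wave case on Schwarzschild-AdS): for each large $\ell$, construct an approximate time-harmonic solution with real frequency $\om_\ell$, concentrated near the photon sphere, solving the system up to an error exponentially small in $\ell$. A standard argument then converts this into the desired logarithmic lower bound on decay: if there were a solution decaying faster than $1/\log t$, taking the quasimode profile as Cauchy data and evolving forward would, by Duhamel and the exponentially small quasimode error, force the true solution to remain comparable in energy to the (non-decaying) quasimode up to time $T\sim e^{c\ell}$, contradicting the assumed decay as $\ell\to\infty$. The main substance is therefore to construct quasimodes that \emph{(i)} solve the coupled Teukolsky system~\eqref{eq:Teukoriginal}, \emph{(ii)} satisfy the coupled conformal boundary conditions~\eqref{eq:TeukBC}, and crucially \emph{(iii)} satisfy the Teukolsky--Starobinsky identities~\eqref{eq:TSidog}, the latter being what allows Theorem~\ref{thm:TeuktoGrav} to promote the lower bound to the full system of gravitational perturbations.

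Rather than working directly with the coupled system, I would first construct quasimodes for the corresponding Regge--Wheeler scalar equation obtained via the Chandrasekhar transformation. Applying two $L$-derivatives (resp.~two $\underline{L}$-derivatives) to $\widetilde{\al}^{[+2]}$ (resp.~$\widetilde{\al}^{[-2]}$) produces a wave equation $\Box_{g_{M,k}}\Psi - V_{\mathrm{RW}}(r)\Psi=0$ on Schwarzschild-AdS with a regular potential, at the price that the conformal boundary conditions~\eqref{eq:TeukBC} translate into a \emph{higher order} boundary condition on $\Psi$ at $\mathcal{I}$. Separating variables $\Psi=e^{-i\om t}e^{im\varphi}S_{\ell m}(\varth)R(r)$ yields a radial ODE in the tortoise coordinate whose potential has a local maximum at the photon sphere $r=3M$ (classical trapping) and grows like $k^2 r^2$ at infinity (AdS confinement). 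A WKB / Agmon-type construction in the regime $\ell\gg 1$ then produces, for each sufficiently large $\ell$, a real frequency $\om_\ell$ and a radial profile $R_\ell$ solving the radial ODE, regular at the horizon and satisfying the higher order boundary condition, with $\|(\Box_{g_{M,k}} - V_{\mathrm{RW}})\Psi_\ell\|_{L^2}\leq C e^{-c\ell}\|\Psi_\ell\|_{L^2}$.

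Second, I would apply the reverse Chandrasekhar transformation to these Regge--Wheeler quasimodes to recover Teukolsky quasimodes $\widetilde{\al}^{[\pm 2]}$. The higher order boundary condition placed on $\Psi_\ell$ is designed precisely so that the coupled conditions~\eqref{eq:TeukBC} are recovered. Verifying the Teukolsky--Starobinsky identities~\eqref{eq:TSidog} is the delicate point: on a fixed spin-weighted spherical harmonic of parameter $\ell$, the angular operators $\LL_{-1}\LL_0\LL_1\LL_2$ and their daggered counterparts act as multiplication by the Teukolsky--Starobinsky scalar, so~\eqref{eq:TSidog} reduces to an algebraic relation between $\widetilde{\al}^{[+2]}$ and $\widetilde{\al}^{[-2]}$ on each mode. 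The two branches of the Chandrasekhar map (using $L$- vs.~$\underline{L}$-derivatives) intertwine the two sides of this relation via four transport-type equations, so one can fix the constants of integration in the reverse transformation so that the pair $\widetilde{\al}^{[\pm 2]}$ satisfies~\eqref{eq:TSidog} automatically, provided one starts from a Regge--Wheeler quasimode in a suitably symmetric class.

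The main obstacle, and the reason this requires substantial new work beyond the scalar wave case, is precisely this second step: the higher order boundary condition at $\mathcal{I}$ for the Regge--Wheeler quasimode and the Teukolsky--Starobinsky constraints for its reverse transform must be fulfilled \emph{simultaneously} by a single one-parameter family, and both conditions interact non-trivially with the first-order terms of~\eqref{eq:Teukoriginal} and with the boundary coupling in~\eqref{eq:TeukBC}. Once this compatibility is achieved, the final step (converting the existence of quasimodes into the lower bound on decay) proceeds along standard lines, using the uniform boundedness implicit in Theorem~\ref{theo:logupper} to absorb the error terms.
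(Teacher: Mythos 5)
Your overall architecture --- Regge--Wheeler quasimodes obtained via a Chandrasekhar transformation, a reverse transformation back to Teukolsky, Agmon localisation at the photon sphere, and a Duhamel argument --- is the same as the paper's. However, you gloss over precisely the two steps that require new ideas, and the mechanism you propose for the Teukolsky--Starobinsky identities would not work as stated. First, the eigenvalue problem: after decoupling the boundary conditions into a Dirichlet quantity $\Psi^D$ and a ``Robin'' quantity $\Psi^R$, the higher order condition \eqref{eq:RWBCdecoRobin} for $\Psi^R$ contains $\partial_t^2\Psi^R$, so the separated radial problem \eqref{sys:eigenvalueQMRW} has the sought eigenvalue $\om^2$ appearing \emph{inside the boundary condition at infinity}. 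This is not a standard Sturm--Liouville problem, and a WKB/Agmon construction does not by itself produce real eigenvalues here; the paper resolves it by a twisted weak formulation (Lemma~\ref{lem:weakformulationeigenvalueQMRW}) in which the $\om^2$-dependent boundary term is absorbed into a modified Rayleigh quotient \eqref{eq:minimizationproblemRWR}, yielding a compact self-adjoint solution operator and hence a real spectrum with $\om_\ell^2\sim \ell^2$. The same Hilbert basis gives well-posedness of the evolution problem with the Robin boundary condition (Theorem~\ref{thm:WPRobin}), which you need for the Duhamel step: the error must be evolved with the \emph{decoupled higher order} boundary condition so that the symmetry conditions guaranteeing Teukolsky--Starobinsky propagate, and this is not supplied by the boundedness statement of Theorem~\ref{theo:logupper} for the coupled system.

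Second, the Teukolsky--Starobinsky identities. The reverse Chandrasekhar transformation \eqref{eq:RevChandra} is a pure differential operator, so there are no ``constants of integration'' to adjust. Moreover \eqref{eq:TSidog} does not reduce to an algebraic relation on a single separated mode: because of the complex conjugations and the operators $\LL_{-1}\LL_0\LL_1\LL_2$, which send $S_{m\ell}$ to a multiple of $S_{-m\ell}$, the identities couple the mode $(\om,m)$ to the conjugate mode $(-\om,-m)$, and no single mode $e^{-i\om t}e^{im\varphi}S_{m\ell}(\varth)R(r)$ can satisfy them. The correct condition (Proposition~\ref{prop:RWTSsimple}) is that $\Psi^D_s=0$ and $\Psi^R_a=0$ in the sense of Definition~\ref{def:sa}, which forces the quasimodes to be the specific two-term sums of paired modes in \eqref{eq:sumsmodesPsiDR}; establishing this equivalence requires the composition identities of Lemmas~\ref{lem:compositionChandras} and~\ref{lem:compositionChandrasbis} together with a unique continuation argument from the conformal boundary. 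Your ``suitably symmetric class'' gestures at this, but without identifying the condition and proving the equivalence the construction is incomplete at exactly the point that distinguishes this theorem from the scalar wave case.
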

Note that this does not exclude solutions exhibiting faster decay but only that if one insists on a uniform decay bound for \emph{all} solutions, then the optimal rate is inverse logarithmic.\footnote{For the wave equation, one can in fact construct a large class of solutions which decay exponentially in time by a scattering construction \cite{Hoo24}. In view of the robustness of that construction, there should be no serious issue in generalising it to the Teukolsky system studied here.} Note also that it is through Theorem \ref{thm:TeuktoGrav} that we can interpret Theorem \ref{theo:loglower} as showing that inverse logarithmic decay is optimal for the system of gravitational perturbations. 

The proof of Theorem \ref{theo:loglower} proceeds by the construction of quasimodes, which can be viewed as special solutions to the Teukolsky system that are as close to time-periodic (=non-decaying) mode solutions as possible. In fact, one first constructs time-periodic functions (by solving an eigenvalue problem) which solve the Teukolsky system approximately and then perturbs them (with exponentially small errors) to actual slowly decaying solutions. We outline the details of the proof in Section \ref{sec:overviewlowerintro} but first compare with the case of the scalar wave equation, where such a construction has already been carried out successfully in \cite{Hol.Smu14, Gan14}. 



\subsection{Comparison with the linear wave equation and main difficulties}
About a decade ago, the second author in collaboration with J.~Smulevici established exact analogues of Theorem \ref{theo:logupper} and \ref{theo:loglower} for the massive wave equation, $\Box_g \psi + \alpha \psi=0$, with Dirichlet conditions imposed on $\psi$ on the more complicated Kerr-AdS black hole exterior.\footnote{Subject to additional conditions on the black hole parameters, specifically the validity of the Hawking-Reall bound.} While the background geometry considered in the present paper is simpler, there are nevertheless several novel analytical difficulties:
\begin{enumerate}
\item The (large) first order terms appearing in the Teukolsky equation (\ref{eq:Teukoriginal}). Indeed, the proof in \cite{Hol.Smu14} relied on separation of variables and, for the quasimode construction, the fact that the resulting one-dimensional operator was \emph{self-adjoint}. This is not true for the Teukolsky equation (\ref{eq:Teukoriginal}).
\item The coupling of the two equations (\ref{eq:Teukoriginal}) through the boundary condition (\ref{eq:TeukBC}).
\item The additional requirement that the generalised quasimode construction has to ensure validity of the Teukolsky-Starobinsky identities as described in Section \ref{sec:introteu}.
\end{enumerate}

\subsection{Overview and main ideas of the proof}\label{sec:overviewlowerintro}
To address the first difficulty above, one exploits a remarkable algebraic relation between solutions to the Teukolsky equations and solutions to the Regge-Wheeler equations first explored in the asymptotically flat context (and for mode solutions) by Chandrasekhar. We briefly review this algebraic structure in Section \ref{sec:introchandra}.

\subsubsection{Chandrasekhar transformations and Reverse Chandrasekhar Transformations} \label{sec:introchandra}
Let $w=\frac{\Delta}{r^2}$. For a pair of spin-weighted functions $\widetilde{\al}^{[\pm2]}$ one defines the \emph{Chandrasekhar transformations} of $\widetilde{\al}^{[\pm2]}$ to be the functions
\begin{subequations}\label{eq:Chandra}
  \begin{align}
    \psi^{[+2]} & := w^{-1}\Lb\widetilde{\al}^{[+2]}, & \Psi^{[+2]} & := w^{-1}\Lb\psi^{[+2]},\label{eq:Chandrap2}\\
    \psi^{[-2]} & := w^{-1}L\widetilde{\al}^{[-2]}, & \Psi^{[-2]} & := w^{-1}L\psi^{[-2]}.\label{eq:Chandram2}
  \end{align}
\end{subequations}
The point is that if $\alpha^{[\pm2]}$ satisfy the Teukolsky equations, then the $\Psi^{[\pm 2]}$ constructed from $\widetilde{\alpha}^{[\pm2]}$ satisfy the following \emph{Regge-Wheeler equations} (see Proposition~\ref{prop:Chandra};  the angular operators $\LL^{[\pm2]}$ are defined in (\ref{eq:defLLpm2}))
\begin{align}\label{eq:RW}
  \begin{aligned}
    0 & = -L\Lb \Psi^{[\pm2]} - \frac{\De}{r^4}\le(\LL^{[\pm2]}-\frac{6M}{r}\ri)\Psi^{[\pm2]} =: \RW^{[\pm2]}\Psi^{[\pm2]}.
  \end{aligned}
\end{align}
Moreover, one can show that the $\Psi^{[\pm 2]}$ obey the following boundary conditions:
\begin{subequations}\label{eq:RWBC}
  \begin{align}
    \Psi^{[+2]}-(\Psi^{[-2]})^\ast & \xrightarrow{r\to+\infty} 0,\label{eq:RWBCD}\\
    LL\Psi^{[+2]} + \Lb\Lb(\Psi^{[-2]})^\ast + \frac{1}{6M}\LL\le(\LL-2\ri)\le(L\Psi^{[+2]} - \Lb(\Psi^{[-2]})^\ast\ri) & \xrightarrow{r\to+\infty} 0,\label{eq:RWBCN}                    
  \end{align}
\end{subequations}
where $\LL := \LL^{[+2]}$ (note that $\big(\LL^{[-2]}\Psi\big)^\ast = \LL^{[+2]}\Psi^\ast $). We observe that the mixed Neumann condition for the Teukolsky variables (\ref{eq:TeukBCNeumann}) has been replaced by a mixed higher order boundary condition. One can also derive a first order mixed Neumann boundary condition (see (1.12) in \cite{Gra.Hol24a}), however that condition couples back to the $\widetilde{\al}^{[\pm2]}$ themselves and is inconvenient when trying to analyse (\ref{eq:RW}) by itself.  

In this paper we also define the following \emph{reverse Chandrasekhar transformations}. For a pair of spin-weighted functions $\Psi^{[\pm 2]}$ we define
\begin{subequations}\label{eq:RevChandra}
  \begin{align}
    \psi^{[+2],r} & := w^{-1}L\Psi^{[+2]}, & \widetilde{\al}^{[+2],r} & := w^2\le(\LL^{[+2]}(\LL^{[+2]}-2)+12M\pr_t\ri)\le(w^{-1}L\ri)\psi^{[+2],r},\label{eq:RevChandrap2}\\
    \psi^{[-2],r} & := w^{-1}\Lb\Psi^{[-2]}, & \widetilde{\al}^{[-2],r} & := w^2\le(\LL^{[-2]}(\LL^{[-2]}-2)-12M\pr_t\ri)\le(w^{-1}\Lb\ri)\psi^{[-2],r}.\label{eq:RevChandram2}
  \end{align}
\end{subequations}
The point is that given solutions $\Psi^{[\pm 2]}$ to the Regge-Wheeler equations~\eqref{eq:RW} the pair $\widetilde{\al}^{[\pm 2],r}$ defined above will satisfy the Teukolsky equations~\eqref{eq:Teuk}. Moreover, if the $\Psi^{[\pm2]}$ satisfy the Regge-Wheeler boundary conditions~\eqref{eq:RWBC}, $\alt^{[\pm2,r]}$ will satisfy the Teukolsky boundary conditions~\eqref{eq:TeukBC}.

We remark already that the (reverse) Chandrasekhar transformations are not injective without imposing further conditions. Their kernels can be understood explicitly. However, as this is not necessary for the constructions in this paper, we will not comment on this further here. 

\subsubsection{Quasimodes for the Regge-Wheeler problem} \label{sec:introqm}
The algebraic structure revealed in Section \ref{sec:introchandra} suggests a strategy to overcome the first difficulty. Since the Regge-Wheeler equation (\ref{eq:RW}) does not have first order terms, we can -- by its close analogy with the wave equation itself --  repeat the construction of quasimodes for the Regge-Wheeler equation as done for the scalar wave equation in~\cite{Hol.Smu14}. We give a very short discussion referring the reader to the introduction of \cite{Hol.Smu14} for details. See also Section \ref{sec:overviewQMR} below. The idea is to study the semi-classical Schr\"odinger-type problem 
\begin{align} \label{eq:SP}
- \partial_{r^\star}^2 \Psi_\ellmode + \frac{\Delta}{r^4} \left(\ellmode \left(\ellmode + 1\right) - \frac{6M}{r}\right) \Psi_\ellmode= \omega^2  \Psi_\ellmode
\end{align}
for the angular separated Regge-Wheeler equation. One exploits the maximum of the potential $V$ at $r=3M$ to construct bound states in $r \geq 3M$ (by putting a Dirichlet condition at $r=3M$) and cuts these off (near $r=3M$) to solutions defined on all of $[r_+,\infty)$. The error from the cut-off is exponential small in the angular momentum number $\ellmode$ by a classical Agmon estimate. In summary, the final objects $\Psi_\ellmode$ thus constructed are approximate solutions to the Regge-Wheeler equation in the sense that they solve the equation everywhere except in a small region near $r=3M$ and we have 
\begin{align}
  \big\Vert\Psi_\ellmode\big\Vert_{H^n(\Si_{t^\star}\cap\{3M\leq r \leq 3M+\de\})} \leq Ce^{-C^{-1}\ellmode}\big\Vert\Psi_\ellmode\big\Vert_{H^1(\Si_{t^\star_0})}
\end{align}
Such quasimodes for the Regge-Wheeler system are then transformed to quasimodes for the Teukolsky system using the reverse Chandrasekhar transformation (\ref{eq:RevChandra}) with analogous control on the error.  As in \cite{Hol.Smu14}, an application of Duhamel's principle produces the required candidate solutions for the desired lower bound~\eqref{est:thmmain}, which implies that solutions cannot decay faster than inverse logarithmically.


\subsubsection{Ensuring the boundary conditions and the Teukolsky-Starobinsky identities} 
In our discussion we have so-far ignored the fact that we need to perform the construction of Section \ref{sec:introqm}  ensuring the complicated mixed higher order boundary condition (\ref{eq:RWBC}) at the conformal boundary for the Regge-Wheeler quantities. (In \cite{Hol.Smu14} considerations were applied to a scalar with Dirichlet conditions.) We have also not addressed whether and which additional constraints should enter the construction of the $\Psi^{[\pm 2]}$ to ensure that the corresponding Teukolsky quantities $\alpha^{[\pm 2]}$ satisfy the Teukolsky-Starobinsky identities. 

To address the first point one introduces new quantities (the notation $\frac{1}{\mathcal{L}}$ denoting the inverse of $\mathcal{L}$)
\begin{align}
    \Psi^D = \Psi^{[+2]} - \big(\Psi^{[-2]}\big)^\ast \ \ \ \textrm{and} \ \ \ 
    \Psi^R = \le(\Psi^{[+2]} + \big(\Psi^{[-2]}\big)^\ast\ri) +\frac{12M}{\LL(\LL-2)}\pr_t\le(\Psi^{[+2]} - \big(\Psi^{[-2]}\big)^\ast\ri),
\end{align}
which also satisfy the Regge-Wheeler equation (\ref{eq:RW}) but decoupled (albeit still higher order) boundary conditions
 \begin{align} \label{hiobc}
          \Psi^D  \xrightarrow{r\to +\infty} 0 \ \ \ \textrm{and}  \ \ \ 
          2\pr_{t}^2\Psi^R + \frac{\LL(\LL-2)}{6M}\pr_{r^\star}\Psi^R + k^2\LL\Psi^R  \xrightarrow{r\to+\infty} 0 \, .
 \end{align}
  To address the second point, we establish -- by a sequence of algebraic manipulations -- that a necessary and sufficient condition for $\Psi^D$ and $\Psi^R$ to generate $\widetilde{\alpha}^{[\pm 2]}$ satisfying the Teukolsky-Starobinsky relations is that $\Psi^D_s=\Psi^D + (\mathcal{C} \Psi^D)^\star=0$ and $\Psi^R_a=\Psi^R - (\mathcal{C} \Psi^R)^\star=0$, where $\mathcal{C}$ is the conjugacy operator defined in Proposition \ref{lem:Hilbertbasis}.
  
The main difficulty with performing the construction outlined in Section \ref{sec:introqm} for $\Psi^R$ then is that (\ref{eq:SP}) has to be solved with the higher order boundary condition (\ref{hiobc}) at infinity (which, as it contains $T$-derivatives, involves itself the eigenvalue $\omega$ we are looking for) and to combine modes such that $\Psi^R$ and $\Psi^D$ satisfy $\Psi^R_a=0$ and $\Psi^D_s=0$. Specifically, the construction requires a well-posedness statement for the Regge-Wheeler equation with the higher-order boundary condition (\ref{hiobc}) for $\Psi^R$, which we provide in Section~\ref{sec:WPRobinRW}.

\subsection{Final comments}\label{sec:outlook}

The inverse logarithmic decay of solutions to wave-type equations being sharp is more generally related to the existence of stable trapped light rays, see for example~\cite{Ral71}. As mentioned above, for the wave equation on Anti-de Sitter black holes, this was first shown in~\cite{Hol.Smu14} and~\cite{Gan14}. See also~\cite{Ben21} for the case of black strings and black rings. The general idea is that the existence of stable trapped rays can be used to construct quasimodes, see~\cite{Ral76,de77} and~\cite{Zwo12,Dya.Zwo19} for general discussions.

For black holes it is well-known that the existence of quasimodes and the associated slow inverse logarithmic decay is closely related to the behaviour of quasinormal modes, more specifically the existence of an exponentially small resonance free region near the axis. We refer to \cite{War15} for a general definition of quasinormal modes in the asymptotically AdS black hole setting and to the papers \cite{Gan14, Gan17} for a construction of quasinormal modes from quasimodes, which can in particular be applied to the quasimodes constructed for the wave equation on Kerr-AdS in \cite{Hol.Smu14}. 
One may expect that these techniques (combined with the ones in this paper) can be applied to obtain analogous results on quasinormal modes for the Teukolsky system discussed here. 

For a numerical computation of quasinormal modes for the Teukolsky system in the more complicated Kerr-AdS case see~\cite{Dia.San13,Car.Dia.Har.Leh.San14}. In the latter papers, decoupled (but higher order) boundary conditions are derived individually for $\al^{[\pm 2]}$ themselves assuming they satisfy the Teukolsky-Starobinsky constraints.\footnote{At the mode level, these result in a Robin-type boundary condition for (the radial part of) each mode, which allows relatively straightforward numerical computation of the mode.} It is a natural question whether these higher order boundary conditions are well-posed. In Appendix \ref{sec:DiasWP} we provide a way to construct solutions with these boundary conditions directly. However, the proof for $\al^{[-2]}$ (say) goes through constructing auxiliary data for a $\al^{[+2]}$ from the Teukolsky-Starobinsky relations and applying the usual well-posedness result for the coupled system to construct the evolution of $\al^{[-2]}$ -- illustrating the fundamental role of the coupled system. We note that the situation is different for the higher order boundary condition (\ref{hiobc}) satisfied by the $\Psi^R$, as this leads to an actual energy estimate for $\Psi^R$ (exploited crucially in \cite{Gra.Hol24a}), which is the underlying reason the non-standard eigenvalue problem for $\Psi^R$ discussed above is treatable here. 

We finally comment on Theorem~\ref{thm:TeuktoGrav}. It was obtained in~\cite[Proposition 3.13]{Gra.Hol24} by directly integrating the system of gravitational perturbations in double null gauge. For readers unfamiliar with the double null gauge, we present in Appendix~\ref{sec:Hertzpotentials} an alternative way to construct a metric perturbation from a solution of the Teukolsky system based on duality (see \cite{Wal78} and references therein). While this allows displaying explicit formulae for (most of) the components of the system of gravitational perturbations, the Teukolsky quantities associated with this metric perturbation are \emph{not} the original ones but only related to them by a Teukolsky-Starobinsky transformation. Understanding this transformation (on quasimodes) is then sufficient to also interpret Theorem  \ref{theo:loglower} as providing an optimal decay for the system of gravitational perturbations without invoking Theorem~\ref{thm:TeuktoGrav}.



\section{Preliminaries}  \label{sec:preliminaries}
We collect the necessary preliminaries to make the paper self-contained referring readers to \cite{Gra.Hol24a} for details.

\subsection{The Schwarzschild-AdS background}\label{sec:defSAdS}

%
We fix $k>0$ and define the manifold $\mathcal{M}:=(-\infty,\infty)_{t^\star} \times [r_+,\infty)_r \times \SSS^2$, where $r=r_+$ is defined as the largest real zero of $\Delta := r^2 + k^2 r^4 - 2Mr$, and equip it with the metric 
\begin{align}
  \label{eq:gSAdSintro}
  g_{M,k} & = -\left(1 + k^2 r^2 -\frac{2M}{r}\right) (\d t^\star)^2 + \frac{4M}{r(1+k^2 r^2)} \d t^\star \d r +\frac{1 + k^2 r^2 +\frac{2M}{r}}{(1+k^2 r^2)^2} \d r^2 +r^2\le(\d\varth^2+\sin^2\varth\d\varphi^2\ri) .
\end{align}
We call the pair $(\mathcal{M}, g_{M,k})$ (the exterior of) the Schwarzschild-AdS spacetime, which is the unique spherically symmetric solution of $\mathrm{Ric}(g) = - \frac{k^2}{3}g$. We denote by $\mathcal{H}^+$ the set $\{ r= r_+\}$, which defines a null boundary of $(\mathcal{M}, g_{M,k})$ referred to as the \emph{future event horizon}. See the Penrose diagram in Figure~\ref{fig:penroseSAdS} below for a depiction of the geometry. 
Note that defining the \emph{radial tortoise coordinate} $r^\star$ and the time coordinate $t$ by
\begin{align*}
  \frac{\d r^\star}{\d r} & := \frac{r^2}{\De}, & r^\star\le(r=+\infty\ri) & = \frac{\pi}{2}, &  t & := t^\star - r^\star + \frac{1}{k}\arctan(k r) \, , 
\end{align*}
we obtain the more familiar Schwarzschildean form of the metric,
\begin{align}\label{eq:gSAdS2}
  g_{M,k} =  -\left(1+ k^2r^2 -\frac{2M}{r} \right)\d t^2  +\left(1+k^2r^2-\frac{2M}{r}\right)^{-1} \d r^2  +r^2\le(\d\varth^2+\sin^2\varth\d\varphi^2\ri),
\end{align}
which is well-defined on the interior of $\mathcal{M}$.

\begin{figure}[h!]
  \centering
  \includegraphics[height=6cm]{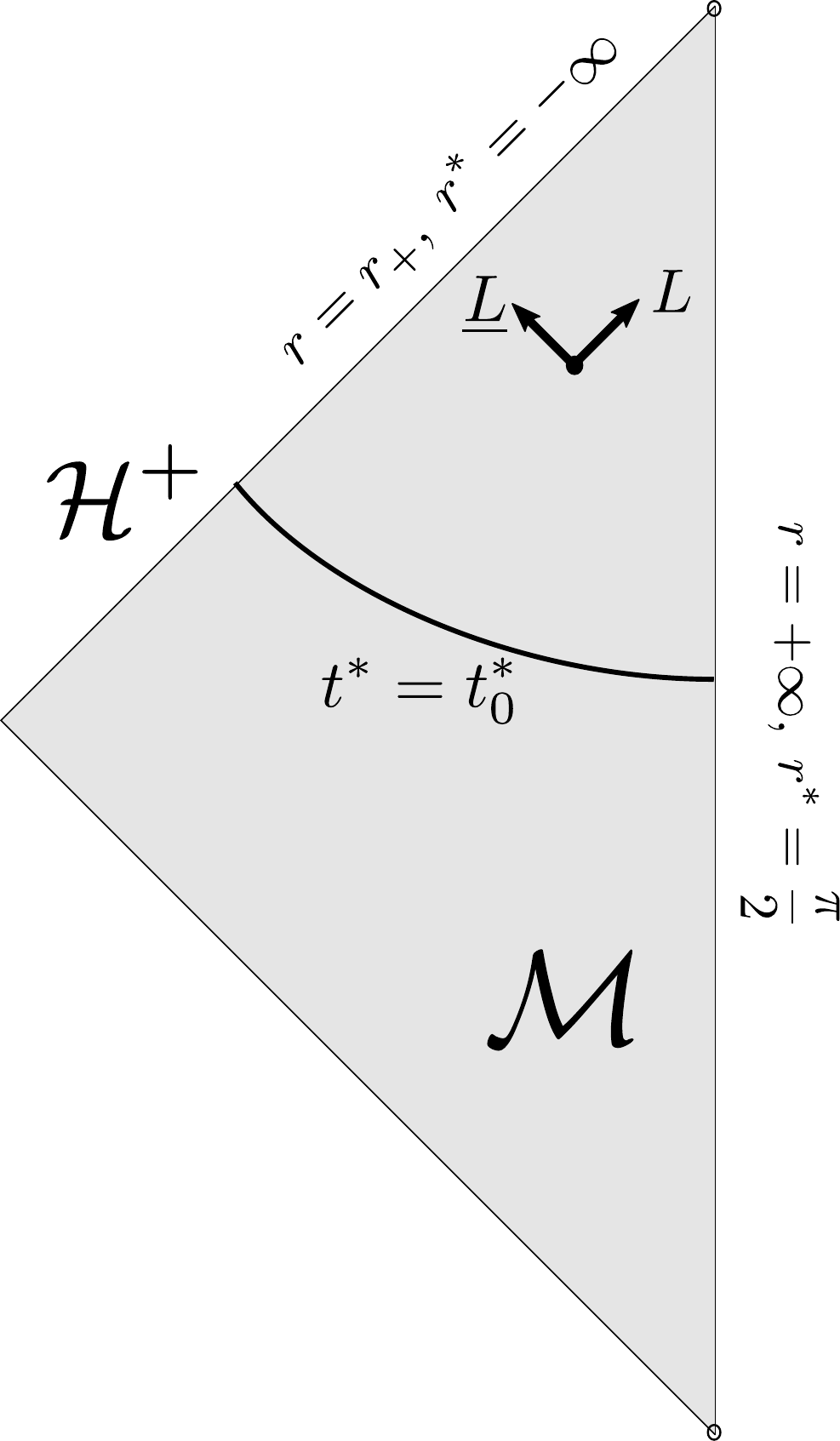}
  \caption{Penrose diagram of Schwarzschild-AdS spacetime $(\mathcal{M}, g_{M,k})$}
  \label{fig:penroseSAdS}
\end{figure}

We recall the pair of future directed null vectorfields $L,\Lb$, 
expressed in the above $(t,r,\varth,\varphi)$ coordinates, by
  \begin{align} \label{eq:defLLb}
    L  := \pr_t+\frac{\De}{r^2}\pr_r , \qquad \qquad \qquad  \Lb  := \pr_t-\frac{\De}{r^2}\pr_r \, . 
  \end{align}
Using the regular $(t^\star,r,\theta,\phi)$ coordinates one infers that $L$, $\Delta^{-1} \underline{L}$ extend regularly to the boundary $\mathcal{H}^+$ of $\mathcal{M}$.

\subsection{Regularity of spin-weighted functions on $\mathcal{M}$}\label{sec:regspinfunctions}
The spin-weighted functions considered in this paper will always be smooth on $\mathcal{M}\setminus \mathcal{H}^+$ (or this set intersected with $\{t^\star \geq t^\star_0\}$). We recall (see for instance equation (28) of \cite{Daf.Hol.Rod19a}) the analogue of the angular momentum vectorfields for spin $s$-weighted functions:
\begin{align}
\tilde{Z}_1 = -\sin \phi \partial_\theta + \cos \phi (-is \csc \theta -\cot \theta \partial_\phi) \ \  ,  \ \ \tilde{Z}_2 = -\cos \phi \partial_\theta - \sin \phi (-is \csc \theta -\cot \theta \partial_\phi)  \ \  ,  \ \ \tilde{Z}_3 = \partial_\phi.
\end{align}
We will call a spin-weighted function $\phi$ of weight $s=\pm2$ on $\mathcal{M}\setminus \mathcal{H}^+$ \emph{regular at the (future event) horizon} if
\[
L^p (\Delta^{-1} \underline{L})^q (\tilde{Z}_1)^{k_1} (\tilde{Z}_2)^{k_2} (\tilde{Z}_3)^{k_3} \phi
\] 
extends continuously to the future event horizon $\mathcal{H}^+$ for all $p,q,k_1,k_2,k_3 \in \mathbb{N}$.  

Moreover, we will call $\phi$ \emph{regular at the conformal boundary} (or shortly \emph{regular at infinity}) if 
\[
r^2(L-\underline{L})^p (L+\underline{L})^q  (\tilde{Z}_1)^{k_1} (\tilde{Z}_2)^{k_2} (\tilde{Z}_3)^{k_3} \phi
\] 
extends continuously to the conformal boundary $\mathcal{I}$ for all $p,q, k_1, k_2,k_3 \in \mathbb{N}$.  

We will sometimes apply the above definitions restricted to $(\mathcal{M} \setminus \mathcal{H}^+) \cap \{t^\star \geq t^\star_0\}$.

%
%
%
%
%

\subsection{The Teukolsky equations}\label{sec:defTeukintro}

We recall the quantities 
\begin{align*}
  \widetilde{\al}^{[+2]} & = \De^2r^{-3}\al^{[+2]}, & \widetilde{\al}^{[-2]} & = r^{-3}\al^{[-2]},
\end{align*}
defined in the introduction. A computation reveals that for $\alt^{[+2]},\alt^{[-2]}$, the Teukolsky equations~\eqref{eq:Teukoriginal} become
\begin{subequations}\label{eq:Teuk}
  \begin{align}
    0 & = -L\Lb \widetilde{\al}^{[+2]} + 2\frac{\De}{r^2}\pr_r\le(\log \frac{\De}{r^4}\ri)\Lb \widetilde{\al}^{[+2]} - \frac{\De}{r^4}\le(\LL^{[+2]} -2 + \frac{6M}{r}\ri) \widetilde{\al}^{[+2]} =: \Teuk^{[+2]}\alt^{[+2]}\label{eq:Teukp2},\\
    0 & = -L\Lb \widetilde{\al}^{[-2]} - 2\frac{\De}{r^2}\pr_r\le(\log \frac{\De}{r^4}\ri)L \widetilde{\al}^{[-2]} - \frac{\De}{r^4}\le(\LL^{[-2]} -2 + \frac{6M}{r}\ri) \widetilde{\al}^{[-2]} =: \Teuk^{[-2]}\alt^{[-2]},\label{eq:Teukm2}  
  \end{align}
\end{subequations}
where the angular operators $\LL^{[\pm2]}$ are defined by
\begin{align}\label{eq:defLLpm2}
  -\LL^{[\pm2]} & := \frac{1}{\sin\varth}\pr_\varth\le(\sin\varth\,\pr_\varth\ri) + \frac{1}{\sin^2\varth}\pr_\varphi^2 + 2(\pm2)i\frac{\cos\varth}{\sin^2\varth}\pr_\varphi - 4\cot^2\varth -4.
\end{align}
The well-posedness theory (see Theorem 1.6 in \cite{Gra.Hol24}) constructs smooth solutions to (\ref{eq:Teuk}) satisfying the boundary conditions (\ref{eq:TeukBC}) from smooth initial data prescribed on some spacelike slice $t^\star_0$. Moreover, these solutions are such that $\widetilde{\al}^{[+2]}$ and $\De^{-2}\widetilde{\al}^{[-2]}$ are regular at the future event horizon and $\widetilde{\al}^{[+2]}$ and $\widetilde{\al}^{[-2]}$ are regular at the conformal boundary. 
Hence in particular, the boundary conditions (\ref{eq:TeukBC}) make sense.

\begin{definition} \label{def:teukolskyp}
We will call $\alt^{[\pm2]}$ solutions of the Teukolsky problem on $\mathcal{M} \cap\{ t^\star \geq t^\star_0\}$ if $\alt^{[+2]},\De^{-2}\alt^{[-2]}$ are regular at the future event horizon $\mathcal{H}^+$ and $\alt^{[+2]},\alt^{[-2]}$ are regular at infinity (see Section~\ref{sec:regspinfunctions}), and if the $\alt^{[\pm2]}$ satisfy the Teukolsky system~\eqref{eq:Teuk} together with the boundary conditions~\eqref{eq:TeukBC}.
\end{definition}



\subsection{Chandrasekhar transformations}
We now discuss the transformation theory mapping solutions to the Teukolsky system to the Regge-Wheeler system. The conventions of Section \ref{sec:regspinfunctions} apply. We denote by $\mathfrak{F}^{[\pm2]}$ a (smooth) spin-weighted function.
\begin{proposition}\label{prop:Chandra}
  Assume that $\widetilde{\al}^{[\pm2]}$ satisfy the inhomogeneous Teukolsky equations
  \begin{align}
    \label{eq:TeukSAdSinho}
    \Teuk^{[\pm2]}\widetilde{\al}^{[\pm2]} & = \mathfrak{F}^{[\pm2]},
  \end{align}
Then, the following inhomogeneous Regge-Wheeler equations are satisfied
  \begin{align}
    \begin{aligned}
      \RW^{[+2]}\widetilde{\al}^{[+2]} = \mathfrak{F}^{[+2]} - 2w'\psi^{[+2]} - w\le(2-\frac{12M}{r}\ri)\widetilde{\al}^{[+2]}, \\
      \RW^{[-2]}\widetilde{\al}^{[-2]} = \mathfrak{F}^{[-2]} + 2w'\psi^{[-2]} - w\le(2-\frac{12M}{r}\ri)\widetilde{\al}^{[-2]}, \label{eq:Teukinho}
    \end{aligned}\\ \nonumber \\
    \begin{aligned}
      \RW^{[+2]}\psi^{[+2]} = \Lb \le(w^{-1}\mathfrak{F}^{[+2]}\ri) - w'\Psi^{[+2]} + 6Mw\widetilde{\al}^{[+2]}, \\
      \RW^{[-2]}\psi^{[-2]} = L \le(w^{-1}\mathfrak{F}^{[-2]}\ri) + w'\Psi^{[-2]} - 6Mw\widetilde{\al}^{[-2]}, \label{eq:Rwinterminho}
    \end{aligned}\\ \nonumber \\
    \begin{aligned}
      \RW^{[+2]}\Psi^{[+2]} = \Lb\le(w^{-1}\Lb \le(w^{-1}\mathfrak{F}^{[+2]}\ri)\ri), \\
      \RW^{[-2]}\Psi^{[-2]} = L\le(w^{-1}L \le(w^{-1}\mathfrak{F}^{[-2]}\ri)\ri),\label{eq:RWinhorev}
    \end{aligned}
  \end{align}
  where $\psi^{[\pm2]},\Psi^{[\pm2]}$ are the Chandrasekhar transformations defined by~\eqref{eq:Chandra} and $\RW^{[\pm2]}$ are the Regge-Wheeler operators of~\eqref{eq:RW}.
  
  If, moreover, $\widetilde{\al}^{[+2]}$ and $\widetilde{\al}^{[-2]}$ are regular at infinity and satisfy the conformal Teukolsky boundary conditions \eqref{eq:TeukBC} and $\mathfrak{F}^{[\pm2]}$ satisfies $\mathfrak{F}^{[\pm2]}\xrightarrow{r\to+\infty}0$, $L\mathfrak{F}^{[\pm2]}\xrightarrow{r\to+\infty}0$ (hence $\underline{L}\mathfrak{F}^{[\pm2]}\xrightarrow{r\to+\infty}0$), then
  \begin{itemize}
  \item $\psi^{[+2]},\psi^{[-2]}$ satisfy the following boundary conditions at infinity
  \begin{align}  \label{eq:RWintermBC}
  \psi^{[+2]} + (\psi^{[-2]})^\ast \xrightarrow{r\to+\infty} 0  \ \ \ \textrm{and} \ \ \  r^2\partial_r \psi^{[+2]} -r^2 \partial_r (\psi^{[-2]})^\ast  \xrightarrow{r\to+\infty} 0  \, .
  \end{align}
  \item $\Psi^{[+2]},\Psi^{[-2]}$ satisfy the boundary conditions~\eqref{eq:RWBC}.\\
  \end{itemize}  
Finally, if $\widetilde{\al}^{[+2]}$ and $\Delta^{-2} \widetilde{\al}^{[-2]}$ are regular at the horizon, then
$\psi^{[+2]}$, $\Delta^{-1} \psi^{[-2]}$ and $\Psi^{[\pm2]}$ are regular at the horizon.
\end{proposition}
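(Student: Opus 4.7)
The proof splits into verification of the three inhomogeneous Regge-Wheeler identities, the transformed boundary conditions, and the horizon regularity.

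\textbf{Equations.} The operators $\Teuk^{[+2]}$ and $\RW^{[+2]}$ differ only by first-order and zeroth-order terms whose coefficients are explicit functions of $w=\De/r^2$ and its derivative $w'$. A direct computation rewrites the difference $\Teuk^{[+2]}-\RW^{[+2]}$ as a first-order operator in $w^{-1}\Lb$ plus a multiplication operator involving $w(2-12M/r)$, and substituting $w^{-1}\Lb\alt^{[+2]}=\psi^{[+2]}$ yields the first identity; the spin $-2$ identity is entirely symmetric. To pass from $\alt^{[\pm2]}$ to $\psi^{[\pm2]}$ and $\Psi^{[\pm2]}$, I commute $w^{-1}\Lb$ (resp.\ $w^{-1}L$) past $\RW^{[+2]}$ (resp.\ $\RW^{[-2]}$). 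In the $(t,r)$-coordinates $[L,\Lb]=0$, $[\Lb,w]=-ww'$, and $\LL^{[\pm2]}$ commutes with $L,\Lb$; carrying out the commutators explicitly produces an identity of the schematic form
\begin{align*}
\RW^{[+2]}(w^{-1}\Lb f) \;=\; w^{-1}\Lb\bigl(\RW^{[+2]}f\bigr) \;+\; R(f),
\end{align*}
with $R$ an explicit first-order operator whose coefficients are polynomials in $w,w',M/r$. Substituting $f=\alt^{[+2]}$ and using the first identity already proven yields the equation for $\psi^{[+2]}$ with inhomogeneity $\Lb(w^{-1}\mathfrak{F}^{[+2]})$ and corrections $-w'\Psi^{[+2]}+6Mw\alt^{[+2]}$. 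Iterating once more produces the Regge-Wheeler equation for $\Psi^{[+2]}$, where the remaining lower-order correction terms telescope and the final inhomogeneity is $\Lb(w^{-1}\Lb(w^{-1}\mathfrak{F}^{[+2]}))$.

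\textbf{Boundary conditions.} Using $\Lb=\pr_t-w\pr_r$, $L=\pr_t+w\pr_r$ we have
\begin{align*}
\psi^{[+2]}+(\psi^{[-2]})^\ast &= w^{-1}\pr_t\bigl(\alt^{[+2]}+(\alt^{[-2]})^\ast\bigr) - \pr_r\bigl(\alt^{[+2]}-(\alt^{[-2]})^\ast\bigr),\\
\psi^{[+2]}-(\psi^{[-2]})^\ast &= w^{-1}\pr_t\bigl(\alt^{[+2]}-(\alt^{[-2]})^\ast\bigr) - \pr_r\bigl(\alt^{[+2]}+(\alt^{[-2]})^\ast\bigr).
\end{align*}
Expanding $\alt^{[\pm2]}$ in inverse powers of $r$ at infinity and using $w\sim k^2r^2$, the two Teukolsky conditions~\eqref{eq:TeukBC} force $\psi^{[+2]}+(\psi^{[-2]})^\ast\to 0$ and $r^2\pr_r\psi^{[+2]}-r^2\pr_r(\psi^{[-2]})^\ast\to 0$, which is~\eqref{eq:RWintermBC}. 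Iterating the same computation for $\Psi^{[\pm2]}=w^{-1}\Lb\psi^{[\pm2]}$ (the decay $\mathfrak{F}^{[\pm2]},L\mathfrak{F}^{[\pm2]}\to 0$ ensures that the intermediate expansions remain consistent with the inhomogeneous Regge-Wheeler equations) yields the Dirichlet condition~\eqref{eq:RWBCD}. The mixed higher-order condition~\eqref{eq:RWBCN} is then obtained by expanding $LL\Psi^{[+2]}$ and $\Lb\Lb(\Psi^{[-2]})^\ast$ near $\mathcal{I}$ and using the Regge-Wheeler equation~\eqref{eq:RW} (which at infinity relates $L\Lb\Psi^{[\pm2]}$ to $\LL^{[\pm2]}\Psi^{[\pm2]}$) to trade the leading transversal second derivative for the angular combination $\LL(\LL-2)$; the specific coefficient $\tfrac{1}{6M}$ tracks the $-6M/r$ term in the Regge-Wheeler potential.

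\textbf{Horizon regularity.} The vectorfields $L$ and $\De^{-1}\Lb$ extend regularly across $\mathcal{H}^+$. For spin $+2$, $\psi^{[+2]}=r^2(\De^{-1}\Lb)\alt^{[+2]}$ is manifestly regular, and $\Psi^{[+2]}=r^2(\De^{-1}\Lb)\psi^{[+2]}$ likewise. For spin $-2$, write $\alt^{[-2]}=\De^2\hat\alpha$ with $\hat\alpha$ regular; using $L\De=w\De'=\De\De'/r^2$, one finds that $L\alt^{[-2]}$ is $\De$ times a regular quantity, so that $\psi^{[-2]}=(r^2/\De)L\alt^{[-2]}$ is again $\De$ times a regular quantity, i.e.\ $\De^{-1}\psi^{[-2]}$ is regular. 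One further application of the same algebra yields the regularity of $\Psi^{[-2]}$ at the horizon.

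\textbf{Main obstacle.} The most delicate step is the verification of the higher-order boundary condition~\eqref{eq:RWBCN}: extracting the precise combination with coefficient $\tfrac{1}{6M}\LL(\LL-2)$ requires a careful matching of the leading asymptotics of $LL\Psi^{[+2]}$ and $\Lb\Lb(\Psi^{[-2]})^\ast$ against both the Teukolsky boundary conditions~\eqref{eq:TeukBC} and the Regge-Wheeler equations at $\mathcal{I}$. This is where the specific form of the Regge-Wheeler potential and the algebraic structure of the Chandrasekhar transformations conspire to yield the clean form~\eqref{eq:RWBCN}, and the bookkeeping has to be arranged so that the $\mathfrak{F}^{[\pm 2]}$-terms and commutator remainders do not obstruct the limits.
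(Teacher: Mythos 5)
Your treatment of the three inhomogeneous equations, of the first-level boundary conditions~\eqref{eq:RWintermBC}, and of the horizon regularity follows essentially the same route as the paper (direct computation, rewriting of the Teukolsky equation as a first-order relation for $\psi^{[\pm2]}$, and the regularity of $L$, $\Delta^{-1}\Lb$), and those parts are fine.

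However, there is a genuine gap in your derivation of the higher-order boundary condition~\eqref{eq:RWBCN}, precisely the step you yourself flag as the most delicate. Your sketch attributes the coefficient $\tfrac{1}{6M}$ to the $-6M/r$ term in the Regge-Wheeler potential, but that term is multiplied by $\De/r^4$ and contributes nothing in the limit $r\to+\infty$. The actual mechanism is different: one must track the (generically non-vanishing) limits $\alo:=\lim_{r\to\infty}\le(\alt^{[+2]}+(\alt^{[-2]})^\ast\ri)$ and $\psio:=\lim_{r\to\infty}\le(\psi^{[+2]}+(\psi^{[-2]})^\ast\ri)$ through the first-order relations
\begin{align*}
-w^{-1}\Lb\le(w^{-1}\mathfrak{F}^{[+2]}\ri) & = L\Psi^{[+2]} + \le(\LL^{[+2]}-\tfrac{6M}{r}\ri)\psi^{[+2]} + 6M\alt^{[+2]},
\end{align*}
and its spin $-2$ counterpart. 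Subtracting these yields $L\Psi^{[+2]}-\Lb(\Psi^{[-2]})^\ast\to -6M\alo$, adding them yields $L\Psi^{[+2]}+\Lb(\Psi^{[-2]})^\ast\to-\LL\psio$, and a further $\pr_t$-derivative combined with the Regge-Wheeler equation gives $LL\Psi^{[+2]}+\Lb\Lb(\Psi^{[-2]})^\ast\to\LL(\LL-2)\alo$. The boundary condition~\eqref{eq:RWBCN} is exactly the linear combination in which the two occurrences of $\alo$ cancel; the $6M$ originates from the coupling term $6M\alt^{[+2]}$ in the relation between $\Psi$, $\psi$ and $\alt$, not from the potential. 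An expansion of $LL\Psi^{[+2]}$ and $\Lb\Lb(\Psi^{[-2]})^\ast$ ``near $\mathcal{I}$'' that does not introduce and propagate the quantities $\alo$, $\psio$ cannot produce this cancellation, so as written this step of your argument would not close.
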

\begin{proof}
  First note that~\eqref{eq:Teukinho} is a rewriting of the Teukolsky equations~\eqref{eq:TeukSAdSinho} using that $\Teuk^{[+2]} = \RW^{[+2]} + 2\frac{w'}{w}\Lb + w\le(2-\frac{12M}{r}\ri)$ and $\Teuk^{[-2]} = \RW^{[-2]} - 2\frac{w'}{w}L + w\le(2-\frac{12M}{r}\ri)$. With the definitions~\eqref{eq:Chandra}, equations~\eqref{eq:TeukSAdSinho} rewrite as
  \begin{subequations}\label{eq:Teukrewrite}
    \begin{align}
      -w^{-1}\mathfrak{F}^{[+2]} & = L\psi^{[+2]} - \frac{w'}{w}\psi^{[+2]} + \le(\LL^{[+2]}-2+\frac{6M}{r} \ri) \alt^{[+2]}\\
      -w^{-1}\mathfrak{F}^{[-2]} & = \Lb\psi^{[-2]} + \frac{w'}{w}\psi^{[-2]} + \le(\LL^{[-2]}-2+\frac{6M}{r} \ri) \alt^{[-2]}.\label{eq:Teukrewritem2}
    \end{align}
  \end{subequations}
  
  Taking $\Lb$ and $L$ derivatives in the previous equations, we have the following intermediate equations
  \begin{subequations}\label{eq:psi}
    \begin{align}
      -\Lb\le(w^{-1}\mathfrak{F}^{[+2]}\ri) & =  L\Lb\psi^{[+2]} -\frac{w'}{w}\Lb\psi^{[+2]} + w\le(\LL^{[+2]} - \frac{6M}{r} \ri)\psi^{[+2]} +6Mw\alt^{[+2]},
    \end{align}
    and
    \begin{align}
      -L\le(w^{-1}\mathfrak{F}^{[-2]}\ri) & = L\Lb\psi^{[-2]} + \frac{w'}{w}L\psi^{[-2]} + w\le(\LL^{[-2]}-\frac{6M}{r}\ri)\psi^{[-2]} -6Mw\alt^{[-2]},
    \end{align}
  \end{subequations}
  where we used that $w^{-1}\le(\frac{w'}{w}\ri)' = \le(2-\frac{12M}{r}\ri)$. This proves~\eqref{eq:Rwinterminho}. Equations~\eqref{eq:psi} rewrite as
  \begin{subequations}\label{eq:psirewrite}
    \begin{align}
      -w^{-1}\Lb\le(w^{-1}\mathfrak{F}^{[+2]}\ri) & = L\Psi^{[+2]} + \le(\LL^{[+2]} -\frac{6M}{r} \ri) \psi^{[+2]} + 6M \alt^{[+2]} \\
      -w^{-1}L\le(w^{-1}\mathfrak{F}^{[-2]}\ri) & = \Lb\Psi^{[-2]} + \le(\LL^{[-2]}-\frac{6M}{r} \ri) \psi^{[-2]} - 6M\alt^{[-2]}.\label{eq:psirewritem2}
    \end{align}
  \end{subequations}  
  Taking $\Lb$ and $L$ derivatives in the previous equation, we obtain the desired Regge-Wheeler equations~\eqref{eq:RWinhorev}.

  Let us define
  \begin{align}\label{eq:alopsio}
    \alo & := \lim_{r\to+\infty} \le(\alt^{[+2]}+(\alt^{[-2]})^\ast\ri), & \psio & := \lim_{r\to+\infty} \le(\psi^{[+2]}+(\psi^{[-2]})^\ast\ri).
  \end{align}
  Directly from the boundary conditions~\eqref{eq:TeukBC}, we obtain
  \begin{align}\label{eq:psiBCD}
    \psi^{[+2]}-(\psi^{[-2]})^\ast \xrightarrow{r\to+\infty} 0.
  \end{align}
  From~\eqref{eq:Teukrewrite},~\eqref{eq:psiBCD} and~\eqref{eq:TeukBC}, and the vanishing of $\mathfrak{F}^{[\pm2]}$ at infinity, we have
  \begin{align*}
    L\psi^{[+2]}-\Lb(\psi^{[-2]})^\ast\xrightarrow{r\to+\infty} 0.
  \end{align*}
  Together with~\eqref{eq:psiBCD}, this proves that~\eqref{eq:RWintermBC} holds for $\psi^{[\pm2]}$. Using~\eqref{eq:psiBCD}, it also implies 
  \begin{align*}
    \begin{aligned}
      \Psi^{[+2]}-(\Psi^{[-2]})^\ast & = w^{-1}\le(L+\Lb\ri)\le(\psi^{[+2]}-(\psi^{[-2]})^\ast\ri) - w^{-1}L\psi^{[+2]}+w^{-1}\Lb(\psi^{[-2]})^\ast  \xrightarrow{r\to+\infty} 0,
    \end{aligned}
  \end{align*}
  which is the first desired boundary condition~\eqref{eq:RWBCD}. From~\eqref{eq:Teukrewrite},~\eqref{eq:alopsio} and the vanishing of $\mathfrak{F}^{[\pm2]}$ at infinity, we have
  \begin{align*}
    L\psi^{[+2]}+\Lb(\psi^{[-2]})^\ast\xrightarrow{r\to+\infty} -\le(\LL-2\ri)\alo.
  \end{align*}
  Using~\eqref{eq:alopsio}, this implies 
  \begin{align}\label{eq:PsiBCDalo}
    \begin{aligned}
      \Psi^{[+2]}+(\Psi^{[-2]})^\ast  & = w^{-1}\le(L+\Lb\ri)\le(\psi^{[+2]}+(\psi^{[-2]})^\ast\ri) - w^{-1}L\psi^{[+2]}-w^{-1}\Lb(\psi^{[-2]})^\ast \\
      & \xrightarrow{r\to+\infty} 2k^{-2}\pr_t\psio + k^{-2}\le(\LL-2\ri)\alo.
    \end{aligned}
  \end{align}
  From the equations~\eqref{eq:psirewrite}, the vanishing of (derivatives of) $\mathfrak{F}^{[\pm2]}$ at infinity and the boundary conditions~\eqref{eq:alopsio}~\eqref{eq:psiBCD}, we infer
  \begin{align}\label{eq:PsiBCLLb}
    L\Psi^{[+2]} - \Lb(\Psi^{[-2]})^\ast\xrightarrow{r\to+\infty} -6M\alo.
  \end{align}
  From the equations~\eqref{eq:psirewrite}, the vanishing of (derivatives of) $\mathfrak{F}^{[\pm2]}$ at infinity and~\eqref{eq:TeukBC}~\eqref{eq:alopsio}, we have
  \begin{align}\label{eq:PsiBCLLbalo}
    L\Psi^{[+2]} + \Lb(\Psi^{[-2]})^\ast \xrightarrow{r\to+\infty} -\LL\psio.
  \end{align}
  Taking a $2\pr_t$ derivative of the limit~\eqref{eq:PsiBCLLbalo}, we get
  \begin{align*}
    (L+\Lb)\le(L\Psi^{[+2]}+\Lb(\Psi^{[-2]})^\ast\ri) \xrightarrow{r\to+\infty} -2\LL\pr_t\psio.
  \end{align*}
  Using the above and~\eqref{eq:PsiBCDalo}, we get
  \begin{align*}
    (L+\Lb)\le(L\Psi^{[+2]}+(\Psi^{[-2]})^\ast\ri) + k^2\LL\le(\Psi^{[+2]}+(\Psi^{[-2]})^\ast\ri) & \xrightarrow{r\to+\infty} \LL\le(\LL-2\ri)\alo,
  \end{align*}
  which, using the Regge-Wheeler equation~\eqref{eq:RWinhorev} and the vanishing of (derivatives of) $\mathfrak{F}^{[\pm2]}$ at infinity, rewrites as
  \begin{align*}
    LL\Psi^{[+2]}+\Lb\Lb(\Psi^{[-2]})^\ast & \xrightarrow{r\to+\infty} \LL\le(\LL-2\ri)\alo.
  \end{align*}
  Combining the above and~\eqref{eq:PsiBCLLb}, the last desired boundary condition~\eqref{eq:RWBCN} follows.
  
Finally, if $\widetilde{\al}^{[+2]}$ and $\Delta^{-1} \widetilde{\al}^{[-2]}$ are regular at the horizon, then the regularity of $\psi^{[+2]}, \Delta^{-1}\psi^{[-2]}$ and $\Psi^{[\pm2]}$ follows directly from the definition of the Chandrasekhar transformations~\eqref{eq:Chandra} and the regularity of the vectorfields $L$ and $\Delta^{-1} \underline{L}$. This finishes the proof of the proposition.  
\end{proof}

For future reference, we make the following definition in analogy to Definition \ref{def:teukolskyp}:
\begin{definition} \label{def:solRW}
We will call $\Psi^{[\pm2]}$ solutions to the Regge-Wheeler problem (on $\mathcal{M}$ or $\mathcal{M} \cap \{ t^\star \geq t^\star_0\}$ depending on context) if the $\Psi^{[\pm2]}$ are regular at the future event horizon and regular at infinity and satisfy the Regge-Wheeler equations~\eqref{eq:RW} together with the boundary conditions~\eqref{eq:RWBC}.
\end{definition}

\subsection{Reverse Chandrasekhar transformations}


The following proposition is the reverse analogue to Proposition~\ref{prop:Chandra}.
\begin{proposition}[Reverse Chandrasekhar transformations]\label{prop:RevChandra}
  Assume that $\Psi^{[\pm2]}$ satisfy the inhomogeneous Regge-Wheeler equations
  \begin{align}
    \label{eq:RWinho}
    \RW^{[\pm2]}\Psi^{[\pm2]} & = \mathfrak{F}^{[\pm2]}.
  \end{align}
  Then, the reverse Chandrasekhar transformations $\widetilde{\al}^{[\pm2],r}$ defined by~\eqref{eq:RevChandra} satisfy the inhomogeneous Teukolsky equations
  \begin{align}
    \label{eq:TeukSAdSinhorev}
    \Teuk^{[+2]}\alt^{[+2],r} & = \le(\LL(\LL-2)+12M\pr_t\ri)\le(L\le(w L\le(w^{-1}\mathfrak{F}^{[+2]}\ri)\ri) - 2 w' L\le(w^{-1}\mathfrak{F}^{[+2]}\ri)\ri),\\
    \Teuk^{[-2]}\alt^{[-2],r} & = \le(\LL(\LL-2)-12M\pr_t\ri)\le(\Lb\le(w \Lb\le(w^{-1}\mathfrak{F}^{[-2]}\ri)\ri) + 2 w' \Lb\le(w^{-1}\mathfrak{F}^{[-2]}\ri)\ri).
  \end{align}
  If, moreover, $\Psi^{[\pm2]}$ are regular at infinity and satisfy the conformal Regge-Wheeler boundary conditions \eqref{eq:RWBC} and $\mathfrak{F}^{[\pm2]}$ satisfies $\mathfrak{F}^{[\pm2]}\xrightarrow{r\to+\infty}0$, $\underline{L}\mathfrak{F}^{[\pm2]}\xrightarrow{r\to+\infty}0$ (hence ${L}\mathfrak{F}^{[\pm2]}\xrightarrow{r\to+\infty}0$), then $\widetilde{\al}^{[\pm2],r}$ satisfy the conformal Teukolsky boundary conditions~\eqref{eq:TeukBC}.
  
  Finally, if $\Psi^{[\pm2]}$ are regular at the horizon, then the reverse Chandrasekhar transformed quantities $\widetilde{\al}^{[+2],r}$ and $\Delta^{-2}\widetilde{\al}^{[-2],r}$ are regular at the horizon.
\end{proposition}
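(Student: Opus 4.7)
The plan is to mirror the proof of Proposition~\ref{prop:Chandra} run in reverse, focusing on the $[+2]$ case; the $[-2]$ case is entirely analogous under the swaps $L \leftrightarrow \Lb$ and $+12M\pr_t \leftrightarrow -12M\pr_t$. Introduce the intermediate quantities $\psi^{[+2],r} := w^{-1}L\Psi^{[+2]}$ and $\chi^{[+2]} := w^{-1}L\psi^{[+2],r}$ so that, by definition, $\alt^{[+2],r} = w^2\bigl(\LL^{[+2]}(\LL^{[+2]}-2) + 12M\pr_t\bigr)\chi^{[+2]}$. The angular--time operator $\LL^{[+2]}(\LL^{[+2]}-2) + 12M\pr_t$ commutes both with the radial function $w$ and with the Teukolsky operator $\Teuk^{[+2]}$, since its constituent factors $\LL^{[+2]}$ and $\pr_t$ decouple from the radial and null structures. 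It therefore passes through, and the identity to verify reduces to
\begin{align*}
  \Teuk^{[+2]}\bigl(w^2 (w^{-1}L)^2\Psi^{[+2]}\bigr) = L\bigl(w L(w^{-1}\mathfrak{F}^{[+2]})\bigr) - 2w' L(w^{-1}\mathfrak{F}^{[+2]})
\end{align*}
whenever $\RW^{[+2]}\Psi^{[+2]} = \mathfrak{F}^{[+2]}$.

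The verification of this identity proceeds in two commutation steps that reverse the sequence~\eqref{eq:psi}--\eqref{eq:psirewrite}: commuting $w^{-1}L$ through $\RW^{[+2]}$ in the Regge--Wheeler equation for $\Psi^{[+2]}$ yields an inhomogeneous Regge--Wheeler-type equation for $\psi^{[+2],r}$ of the schematic form $\RW^{[+2]}\psi^{[+2],r} = L(w^{-1}\mathfrak{F}^{[+2]})$ plus zeroth- and first-order-in-$\Psi^{[+2]}$ corrections; a second application of $w^{-1}L$ produces the analogous equation for $\chi^{[+2]}$, in which the leftover corrections in $\Psi^{[+2]}$ are precisely those annihilated by the Teukolsky--Starobinsky operator sitting in front of $\chi^{[+2]}$ in the definition of $\alt^{[+2],r}$. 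Converting the Regge--Wheeler equation into a Teukolsky equation using the algebraic relation $\Teuk^{[+2]} = \RW^{[+2]} + 2(w'/w)\Lb + w(2-12M/r)$ recorded in the proof of Proposition~\ref{prop:Chandra} then gives the stated right-hand side. The main obstacle here is bookkeeping: tracking the cross term $-2w' L(w^{-1}\mathfrak{F}^{[+2]})$ through the computation requires careful handling of the $2(w'/w)\Lb$ correction in $\Teuk^{[+2]} - \RW^{[+2]}$ acting on $w^2\chi^{[+2]} = L^2\Psi^{[+2]} - (w'/w)L\Psi^{[+2]}$.

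The remaining two claims are then direct. For the conformal boundary conditions, expand $\alt^{[+2],r} - (\alt^{[-2],r})^\ast$ and $r^2\pr_r[\alt^{[+2],r} + (\alt^{[-2],r})^\ast]$ from the definitions~\eqref{eq:RevChandra} using $(\LL^{[-2]}\Psi)^\ast = \LL^{[+2]}\Psi^\ast$ and the reality of $\pr_t$; after conjugation the Teukolsky--Starobinsky prefactors combine so that the combinations $\chi^{[+2]} \mp (\chi^{[-2]})^\ast$ appear in the limit $r\to+\infty$. These limits are controlled by the Regge--Wheeler Dirichlet condition~\eqref{eq:RWBCD} and the higher-order condition~\eqref{eq:RWBCN}, together with the assumed vanishing of $\mathfrak{F}^{[\pm2]}$ and its first null derivatives at infinity, yielding exactly~\eqref{eq:TeukBCDirichlet} and~\eqref{eq:TeukBCNeumann}. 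For horizon regularity, observe that $w^{-1}\Lb = r^2\Delta^{-1}\Lb$ is built from the regular vectorfield $\Delta^{-1}\Lb$, so that $w^2 (w^{-1}\Lb)^2\Psi^{[-2]}$ vanishes as $\Delta^2$ at $\mathcal{H}^+$ and hence $\Delta^{-2}\alt^{[-2],r}$ extends regularly; for the $[+2]$ side, the identity $w^2 (w^{-1}L)^2\Psi^{[+2]} = L^2\Psi^{[+2]} - (w'/w) L\Psi^{[+2]}$ combined with the regularity of $L$ and of $w'/w = \pr_r w$ at $\mathcal{H}^+$ shows that $\alt^{[+2],r}$ extends regularly.
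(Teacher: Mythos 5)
Your proposal is correct and follows essentially the same route as the paper's proof: the same intermediate quantities, the same two-step commutation of the null derivatives through the Regge--Wheeler operator combined with the algebraic relation between $\Teuk^{[\pm2]}$ and $\RW^{[\pm2]}$, the same passage of $\LL(\LL-2)\pm12M\pr_t$ through the Teukolsky operator at the end, and the same limit and regularity arguments for the boundary conditions and the horizon. One point where your described mechanism differs from what actually happens: the $\Psi^{[+2]}$-terms cancel identically already at the level of the intermediate quantity, i.e.\ one obtains $\Teuk^{[+2]}\big(w^2(w^{-1}L)^2\Psi^{[+2]}\big) = L\big(wL(w^{-1}\mathfrak{F}^{[+2]})\big) - 2w'L(w^{-1}\mathfrak{F}^{[+2]})$ exactly, with no leftover corrections for the Teukolsky--Starobinsky operator to annihilate (and indeed that operator does not annihilate generic expressions in $\Psi^{[+2]}$, so a cancellation of that kind could not be relied upon). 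The operator $\LL(\LL-2)+12M\pr_t$ is needed instead for the boundary conditions (cf.~Remark~\ref{rem:duality}): the intermediate quantity alone fails \eqref{eq:TeukBC}, and in the limit each Teukolsky boundary combination mixes \emph{both} the sum and the difference of the intermediate quantities, with $\LL(\LL-2)$ weighting one and $12M\pr_t$ the other, so that the vanishing ultimately rests on the combination $2\pr_t\epo + k^2\LL\deo + \frac{\LL(\LL-2)}{6M}\gao = 0$ encoding the higher-order condition \eqref{eq:RWBCN}. This is somewhat more delicate than your ``prefactors combine'' summary, but it is bookkeeping rather than a gap.
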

\begin{proof}
  The inhomogeneous Regge-Wheeler equations~\eqref{eq:RWinho} rewrite as
  \begin{subequations}\label{eq:RWrewrite}
    \begin{align}
      -w^{-1}\mathfrak{F}^{[+2]} & = \Lb\psi^{[+2],r} - \frac{w'}{w}\psi^{[+2],r} + \le(\LL-\frac{6M}{r} \ri) \Psi^{[+2]},\\
      -w^{-1}\mathfrak{F}^{[-2]} & = L\psi^{[-2],r} + \frac{w'}{w}\psi^{[-2],r} + \le(\LL-\frac{6M}{r} \ri) \Psi^{[-2]}.
    \end{align}
  \end{subequations}
  Taking $L$ and $\Lb$ derivatives in~\eqref{eq:RWrewrite}, we get
  \begin{subequations}\label{eq:psireverse}
    \begin{align}
      -L\le(w^{-1}\mathfrak{F}^{[+2]}\ri) & = L\Lb\psi^{[+2],r} - \frac{w'}{w}L\psi^{[+2],r} + \le(\LL-2+\frac{6M}{r}\ri) w\psi^{[+2],r} + w 6M \Psi^{[+2]},\\
      -\Lb\le(w^{-1}\mathfrak{F}^{[-2]}\ri)  & = L\Lb\psi^{[-2],r} + \frac{w'}{w}\Lb\psi^{[-2],r} + \le(\LL-2+\frac{6M}{r}\ri) w\psi^{[-2],r}-w 6M\Psi^{[-2]},
    \end{align}
  \end{subequations}
  where we used again that $w^{-1}\le(\frac{w'}{w}\ri)' = 2 -\frac{12M}{r}$. Let us define the intermediate quantities
  \begin{align}\label{eq:defalrint}
    \widetilde{\al}^{[+2],r,int} & := w^2\le(w^{-1}L\ri)\psi^{[+2],r}, & \widetilde{\al}^{[-2],r,int} & := w^2\le(w^{-1}\Lb\ri)\psi^{[-2],r}.
  \end{align}
  With these definitions, equation~\eqref{eq:psireverse} rewrites as
  \begin{subequations}\label{eq:psireverserewrite}
    \begin{align}
      -w L\le(w^{-1}\mathfrak{F}^{[+2]}\ri) & = \Lb\alt^{[+2],r,int} + \le(\LL-2+\frac{6M}{r}\ri) w^2\psi^{[+2],r} + w^2 6M \Psi^{[+2]},\\
      -w \Lb\le(w^{-1}\mathfrak{F}^{[-2]}\ri) & = L\alt^{[-2],r,int} + \le(\LL-2+\frac{6M}{r}\ri) w^2\psi^{[-2],r}-w^2 6M\Psi^{[-2]}.
    \end{align}
  \end{subequations}
  Taking $L$ and $\Lb$ derivatives in~\eqref{eq:psireverserewrite} and substracting (resp. adding) $2\frac{w'}{w}$ times~\eqref{eq:psireverserewrite}, we get
  \begin{align*}
    L\le(w L\le(w^{-1}\mathfrak{F}^{[+2]}\ri)\ri) - 2 w' L\le(w^{-1}\mathfrak{F}^{[+2]}\ri) & = \Teuk^{[+2]}\alt^{[+2],r,int}, \\
    \Lb\le(w \Lb\le(w^{-1}\mathfrak{F}^{[-2]}\ri)\ri) + 2 w' \Lb\le(w^{-1}\mathfrak{F}^{[-2]}\ri) & = \Teuk^{[-2]}\alt^{[-2],r,int}.
  \end{align*}
  The right-hand side commutes with the operator $\LL(\LL-2)\pm 12M\pr_t$ and the desired inhomogeneous Teukolsky equations~\eqref{eq:TeukSAdSinho} for $\widetilde{\al}^{[\pm2]}$ follow.

  Let us define
  \begin{align}\label{eq:defPsirgal}
    \begin{aligned}
      \beo & := \lim_{r\to+\infty}\le(\Psi^{[+2]}-\big(\Psi^{[-2]}\big)^\ast\ri), & \gao & := \lim_{r\to+\infty}\le(L\Psi^{[+2]}-\Lb\big(\Psi^{[-2]}\big)^\ast\ri),\\
      \deo & := \lim_{r\to+\infty}\le(\Psi^{[+2]}+\big(\Psi^{[-2]}\big)^\ast\ri), & \epo & := \lim_{r\to+\infty}\le(L\Psi^{[+2]}+\Lb\big(\Psi^{[-2]}\big)^\ast\ri).
    \end{aligned}
  \end{align}
  With these definitions, the Regge-Wheeler boundary conditions~\eqref{eq:RWBC} rewrite as
  \begin{align}\label{eq:RWBCrewrite}
    \beo & = 0, & 2\pr_t\epo + k^2\LL\deo +\frac{\LL(\LL-2)}{6M}\gao & = 0. 
  \end{align}
  Directly from the definitions~\eqref{eq:RevChandra} and~\eqref{eq:defPsirgal}, one has
  \begin{align}\label{eq:psirBCD}
    \begin{aligned}
      \psi^{[+2],r}-\big(\psi^{[-2],r}\big)^\ast & \xrightarrow{r\to+\infty} k^{-2} \gao, \\
      \psi^{[+2],r}+\big(\psi^{[-2],r}\big)^\ast & \xrightarrow{r\to+\infty} k^{-2} \epo.
    \end{aligned}
  \end{align}
  Using the limits~\eqref{eq:psirBCD}, and taking the limit in equation~\eqref{eq:RWrewrite} using the definitions~\eqref{eq:defPsirgal} and the vanishing of $\mathfrak{F}^{[\pm2]}$ at infinity, we infer
  \begin{align*}
    \begin{aligned}
      L\psi^{[+2],r} - \Lb\big(\psi^{[+2],r}\big)^\ast & = (L+\Lb)\le(\psi^{[+2],r}-\big(\psi^{[-2],r}\big)^\ast\ri) - \Lb\psi^{[+2],r} + L\big(\psi^{[-2],r}\big)^\ast \\
      & \xrightarrow{r\to+\infty} 2k^{-2}\pr_t\gao +\LL \beo,
    \end{aligned}
  \end{align*}
  which, for the intermediate Teukolsky quantity defined in~\eqref{eq:defalrint}, rewrites as
  \begin{align}\label{eq:alrintBCD1}
    \alt^{[+2],r,int} - \big(\alt^{[-2],r,int}\big)^\ast & \xrightarrow{r\to+\infty}  2\pr_t\gao +k^2\LL \beo. 
  \end{align}
  Along the dual lines, we get
  \begin{align}\label{eq:alrintBCD2}
    \alt^{[+2],r,int}+\big(\alt^{[-2],r,int}\big)^\ast & \xrightarrow{r\to+\infty} 2\pr_t\epo +k^2\LL\deo.
  \end{align}
  Taking the limits in linear combinations of equation~\eqref{eq:psireverserewrite}, using the definitions~\eqref{eq:defPsirgal} and the vanishing of $\mathfrak{F}^{[\pm2]}$ and its derivatives at infinity, we get
  \begin{align}\label{eq:alrintBCN}
    \begin{aligned}
      \Lb\alt^{[+2],r,int}-L\big(\alt^{[-2],r,int}\big)^\ast & \xrightarrow{r\to+\infty} -k^2(\LL-2)\gao - 6Mk^4 \deo,\\
      \Lb\alt^{[+2],r,int}+L\big(\alt^{[-2],r,int}\big)^\ast & \xrightarrow{r\to+\infty} -k^2(\LL-2)\epo - 6Mk^4 \beo.
    \end{aligned}
  \end{align}
  Using the definitions~\eqref{eq:RevChandra} of $\alt^{[\pm2],r}$, the limits~\eqref{eq:alrintBCD1},~\eqref{eq:alrintBCD2} and the rewriting of the Regge-Wheeler boundary conditions~\eqref{eq:RWBCrewrite}, we have
  \begin{align*}
    \alt^{[+2],r} - \big(\alt^{[-2],r}\big)^\ast & = \LL(\LL-2)\le(\alt^{[+2],r,int} - \big(\alt^{[-2],r,int}\big)^\ast\ri) + 12M\pr_t\le(\alt^{[+2],r,int} + \big(\alt^{[-2],r,int}\big)^\ast\ri)\\
                                                 & \xrightarrow{r\to+\infty} \LL(\LL-2)\le(2\pr_t\gao +k^2\LL \beo\ri) + 12M\pr_t\le(2\pr_t\epo +k^2\LL\deo\ri) \\
                                                 & \quad\quad = k^2\LL^2(\LL-2)\beo + 12M\pr_t\le(2\pr_t\epo+k^2\LL\deo+\frac{\LL(\LL-2)}{6M}\gao\ri) \\
    & \quad\quad = 0.
  \end{align*}
  Similarly, using~\eqref{eq:alrintBCN} and~\eqref{eq:RWBCrewrite}, we have
  \begin{align*}
    \Lb\alt^{[+2],r}-L\big(\alt^{[-2],r}\big)^\ast & = \LL(\LL-2)\le(\Lb\alt^{[+2],r,int}-L\big(\alt^{[-2],r,int}\big)^\ast\ri) + 12M\pr_t\le(\Lb\alt^{[+2],r,int}+L\big(\alt^{[-2],r,int}\big)^\ast\ri) \\
                                                   & \xrightarrow{r\to+\infty} \LL(\LL-2)\le(-k^2(\LL-2)\gao - 6Mk^4 \deo\ri) + 12M\pr_t\le(-k^2(\LL-2)\epo - 6Mk^4 \beo\ri) \\
                                                   & \quad\quad = -72M^2k^4\pr_t\beo - 6Mk^2(\LL-2)\le(2\pr_t\epo+k^2\LL\deo+\frac{\LL(\LL-2)}{6M}\gao\ri) \\
    & \quad\quad = 0.
  \end{align*}
  From these two limits, we deduce that $\alt^{[\pm2],r}$ satisfy the Teukolsky conformal Anti-de Sitter boundary conditions~\eqref{eq:TeukBC}.

Finally, the regularity at the horizon of $\alt^{[\pm2],r}$ follows directly from the definition of the reverse Chandrasekhar transformations~\eqref{eq:RevChandra}, the regularity of $\Psi^{[\pm 2]}$ and that of the vectorfields $L$ and $\Delta^{-1}\underline{L}$. This finishes the proof of the proposition.  
\end{proof}

\begin{remark}
  The fact that the reverse Chandrasekhar transformed quantities satisfy the Teukolsky equations was obtained in~\cite[\S 30]{Cha83} -- although this was proved in the special $k=0$ case --, see relation (318) with coefficients given by (28), (327), (328).
\end{remark}

\begin{remark}\label{rem:duality}
  In Proposition~\ref{prop:Chandra} (see~\eqref{eq:RWinhorev}), we showed that for all $\alt^{[\pm2]}$,
  \begin{align}\label{eq:dualcompositions}
    \begin{aligned}
    \RW^{[+2]}\le((w^{-1}\Lb)^2\alt^{[+2]}\ri) & = \Lb\le(w^{-1}\Lb\le(w^{-1}\Teuk^{[+2]}\alt^{[+2]}\ri)\ri), \\
    \RW^{[-2]}\le((w^{-1}L)^2\alt^{[-2]}\ri) & = L\le(w^{-1}L\le(w^{-1}\Teuk^{[-2]}\alt^{[-2]}\ri)\ri).
    \end{aligned}
  \end{align}
  Note that on the RHS of~\eqref{eq:dualcompositions}, the Teukolsky operators $\Teuk^{[\pm2]}$ are composed with the \underline{formal adjoints} (for the $L^2(\d t\d r^\star\d\varth\sin\varth\d\varphi)$ inner product) of the Chandrasekhar transformations $(w^{-1}\Lb)^2,(w^{-1}L)^2$ respectively. Moreover, the Teukolsky operators $\Teuk^{[\pm2]}$ are the formal adjoints of the $\Teuk^{[\mp2]}$ operators and the Regge-Wheeler operators $\RW^{[\pm2]}$ are self-adjoint. Hence, following Wald's duality argument~\cite{Wal78}, it is immediate that the transformations
  \begin{align}\label{eq:defintchandraintro}
    \begin{aligned}
    \alt^{[+2],r,int} & := w^2\le(w^{-1}L\ri)^2\Psi^{[+2]}, & \alt^{[-2],r,int} & := w^2\le(w^{-1}\Lb\ri)^2\Psi^{[-2]},
    \end{aligned}
  \end{align}
  satisfy the Teukolsky equations~\eqref{eq:Teuk} if $\Psi^{[\pm2]}$ satisfy the Regge-Wheeler equations~\eqref{eq:RW}. However, as it is clear from the proof of Proposition~\ref{prop:RevChandra}, $\alt^{[\pm2],r,int}$ do not satisfy the desired boundary conditions~\eqref{eq:TeukBC} if $\Psi^{[\pm2]}$ satisfy the Regge-Wheeler boundary conditions~\eqref{eq:RWBC}. One retrieves the Teukolsky boundary conditions for~\eqref{eq:RevChandra} only after composition of~\eqref{eq:defintchandraintro} with the operators $(\LL(\LL-2)\pm12M\pr_t)$ (see Proposition~\ref{prop:RevChandra}). We discuss these ``Teukolsky-Starobinsky'' operators in Section~\ref{sec:compochandra}. Interestingly, they appear naturally when composing the Chandrasekhar transformations, as in the identities~\eqref{eq:TSidnotTS} and~\eqref{eq:compositionRW}, and as in the Teukolsky-Starobinsky identities~\eqref{eq:TSidog}. See also Lemma~\ref{lem:TSrewritesa}.
\end{remark}


\subsection{The decoupled Regge-Wheeler problem}\label{sec:decoupledRWproblem}
We have the following proposition.
\begin{proposition}[Decoupled Regge-Wheeler problem]\label{prop:RWdeco}
  Let $\Psi^{[+2]},(\Psi^{[-2]})^\ast,\Psi^D,\Psi^R$ be four spin-$+2$-weighted functions satisfying the relations\footnote{The elliptic operator $\LL(\LL-2)$ is invertible. See Section~\ref{sec:angTS} for further discussion.}
  \begin{align}\label{eq:defPsiDR}
    \begin{aligned}
    \Psi^D & = \Psi^{[+2]} - \big(\Psi^{[-2]}\big)^\ast, \\
    \Psi^R & = \le(\Psi^{[+2]} + \big(\Psi^{[-2]}\big)^\ast\ri) +\frac{12M}{\LL(\LL-2)}\pr_t\le(\Psi^{[+2]} - \big(\Psi^{[-2]}\big)^\ast\ri).
    \end{aligned}
  \end{align}
  Let $\mathfrak{F}^D,\mathfrak{F}^R$ be two spin-$+2$-weighted functions, vanishing at infinity. The following two items are equivalent.
  \begin{itemize}
  \item $\Psi^D,\Psi^R$ are solutions to the (inhomogeneous) Regge-Wheeler equations
    \begin{subequations}\label{sys:RWdeco}
      \begin{align}\label{eq:RWdecoinho}
        \RW\Psi^D & = \mathfrak{F}^D, & \RW\Psi^R & = \mathfrak{F}^R,
      \end{align}
      where $\RW=\RW^{[+2]}$, and $\Psi^D,\Psi^R$ are regular at infinity and satisfy the boundary conditions
      \begin{subsubequations}\label{eq:RWBCdeco}
        \begin{align} 
          \Psi^D & \xrightarrow{r\to +\infty} 0, \label{eq:RWBCdecoDirichlet}\\
          2\pr_{t}^2\Psi^R + \frac{\LL(\LL-2)}{6M}\pr_{r^\star}\Psi^R + k^2\LL\Psi^R & \xrightarrow{r\to+\infty} 0, \label{eq:RWBCdecoRobin}
        \end{align}
      \end{subsubequations}
      and $\Psi^{D},\Psi^R$ are regular at the horizon.
    \end{subequations}
  \item $\Psi^{[+2]},\Psi^{[-2]}$ are solutions to the inhomogeneous Regge-Wheeler equations
      \begin{align}\label{eq:RWdecoinhoinv}
        \begin{aligned}
          \RW^{[+2]}\Psi^{[+2]} & = +\half\mathfrak{F}^D +\half\le(\mathfrak{F}^R-\frac{12M}{\LL(\LL-2)}\pr_t\mathfrak{F}^D\ri),\\
          \RW^{[-2]}\Psi^{[-2]} & = -\half\big(\mathfrak{F}^D\big)^\ast +\half\le(\mathfrak{F}^R-\frac{12M}{\LL(\LL-2)}\pr_t\mathfrak{F}^D\ri)^\ast,
        \end{aligned}
      \end{align}
      and $\Psi^{[\pm2]}$ are regular at infinity and satisfy the boundary conditions~\eqref{eq:RWBC} and are regular at the horizon. 
  \end{itemize}
In the following, we call~\eqref{sys:RWdeco} the (inhomogeneous) \emph{decoupled Regge-Wheeler problem}.
\end{proposition}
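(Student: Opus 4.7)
The plan is to view (\ref{eq:defPsiDR}) as an invertible linear change of variables on pairs of spin-$+2$-weighted functions, with inverse
\begin{align*}
  \Psi^{[+2]} & = \half\Psi^D + \half\Psi^R - \frac{6M}{\LL(\LL-2)}\pr_t\Psi^D, \\
  \big(\Psi^{[-2]}\big)^\ast & = -\half\Psi^D + \half\Psi^R - \frac{6M}{\LL(\LL-2)}\pr_t\Psi^D,
\end{align*}
where invertibility of $\LL(\LL-2)$ on spin-$+2$-weighted functions is recalled in Section~\ref{sec:angTS}. The equivalence then has three facets: the bulk equations, the boundary conditions, and regularity at $\mathcal{H}^+$ and $\mathcal{I}$. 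The bulk equations transform in a completely elementary way, since $\pr_t$ and $\LL$ commute with $\RW = \RW^{[+2]}$ and $\RW^{[+2]}\big(\Psi^{[-2]}\big)^\ast = \big(\RW^{[-2]}\Psi^{[-2]}\big)^\ast$ (a direct consequence of $\big(\LL^{[-2]}\Psi\big)^\ast = \LL^{[+2]}\Psi^\ast$). A direct linear computation then extracts the sources in~(\ref{eq:RWdecoinhoinv}) from those in~(\ref{sys:RWdeco}), and conversely.

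The Dirichlet part of the decoupled boundary condition is literally~(\ref{eq:RWBCD}). The non-trivial step is to match the higher order condition~(\ref{eq:RWBCN}) with the Robin-type condition~(\ref{eq:RWBCdecoRobin}) on $\Psi^R$. Set $A := \Psi^D$ and $B := \Psi^{[+2]} + \big(\Psi^{[-2]}\big)^\ast$, so that $\Psi^R = B + \frac{12M}{\LL(\LL-2)}\pr_t A$. Using $L = \pr_t + \pr_{r^\star}$ and $\Lb = \pr_t - \pr_{r^\star}$, the combinations appearing in~(\ref{eq:RWBCN}) become
\begin{align*}
  LL\Psi^{[+2]} + \Lb\Lb\big(\Psi^{[-2]}\big)^\ast & = (\pr_t^2 + \pr_{r^\star}^2)B + 2\pr_t\pr_{r^\star}A, \\
  L\Psi^{[+2]} - \Lb\big(\Psi^{[-2]}\big)^\ast & = \pr_{r^\star}B + \pr_t A.
\end{align*}
Regularity at infinity ensures that the limits of $A$, $B$ and of their first $\pr_t$- and $\pr_{r^\star}$-derivatives exist. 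The Dirichlet condition~(\ref{eq:RWBCD}) gives $\lim A = 0$, hence $\lim \pr_t A = 0$, and evaluating the Regge-Wheeler equation~(\ref{eq:RW}) at infinity yields $\lim \pr_{r^\star}^2 B = \lim \pr_t^2 B + k^2 \LL \lim B$. Substituting these identities into the limit of~(\ref{eq:RWBCN}) and recognising $\lim \Psi^R = \lim B$ and $\frac{\LL(\LL-2)}{6M}\lim\pr_{r^\star}\Psi^R = \frac{\LL(\LL-2)}{6M}\lim \pr_{r^\star} B + 2\lim \pr_t\pr_{r^\star}A$ should recover exactly the limit of the left-hand side of~(\ref{eq:RWBCdecoRobin}). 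The reverse direction runs the identical algebra backwards.

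Finally, regularity at $\mathcal{H}^+$ and at $\mathcal{I}$ in the sense of Section~\ref{sec:regspinfunctions} is preserved by~(\ref{eq:defPsiDR}) and its inverse, since both only involve the operators $\pr_t$ and $\LL(\LL-2)^{\pm 1}$, which act continuously on the relevant spaces. The main obstacle I foresee is the boundary-condition step: the $\pr_{r^\star}^2 B$ term must be eliminated via~(\ref{eq:RW}) at infinity in such a way that what remains reassembles cleanly as $2\pr_t^2 \Psi^R + \frac{\LL(\LL-2)}{6M}\pr_{r^\star}\Psi^R + k^2\LL \Psi^R$. The book-keeping is straightforward but must be carried out carefully, and it is the technical core of the proof.
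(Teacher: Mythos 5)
Your proposal is correct and follows essentially the same route as the paper's proof: the bulk equations and regularity statements transform trivially under the change of variables, and the only substantive step is matching the higher order boundary condition~\eqref{eq:RWBCN} with the Robin condition~\eqref{eq:RWBCdecoRobin}, which you carry out via the same mechanism (the cross term $2\pr_t\pr_{r^\star}\Psi^D$ is exactly what converts $\pr_{r^\star}B$ into $\pr_{r^\star}\Psi^R$, and the $\pr_{r^\star}^2$ term is eliminated using the Regge--Wheeler equation at infinity together with $\Psi^D\to 0$). The paper organises this computation in terms of $L,\Lb$ and the quantity $\Psi^N=\Psi^{[+2]}+(\Psi^{[-2]})^\ast$ before passing to the limit, but the algebra is the same and your key identities check out.
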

\begin{remark}
  For further reference, we record that relation~\eqref{eq:defPsiDR} rewrites as
  \begin{align}\label{eq:defPsiDRinv}
    \begin{aligned}
      \Psi^{[+2]} & = +\half\Psi^D+\half\le(\Psi^R-\frac{12M}{\LL(\LL-2)}\pr_t\Psi^D\ri),\\
      \big(\Psi^{[-2]}\big)^\ast & = -\half\Psi^D+\half\le(\Psi^R-\frac{12M}{\LL(\LL-2)}\pr_t\Psi^D\ri).
    \end{aligned}
  \end{align}
\end{remark}
\begin{proof}[Proof of Proposition~\ref{prop:RWdeco}]
  The equivalence between equations~\eqref{eq:RWdecoinho} and~\eqref{eq:RWdecoinhoinv} as well as the equivalence of the regularity conditions at the horizon is a direct consequence from the fact that the transformations~\eqref{eq:defPsiDR} (and its inverse~\eqref{eq:defPsiDRinv}) commute with the operators $L,\Lb$ and $\LL$ and with all radial functions. Concerning the equivalence of the boundary conditions~\eqref{eq:RWBCdeco} and~\eqref{eq:RWBC}, let us first define the intermediate quantity $\Psi^N := \Psi^{[+2]}+\big(\Psi^{[-2]}\big)^\ast$. With this definition, we write
  \begin{align}\label{eq:pflimPsiDR}
    \begin{aligned}
    & LL\Psi^{[+2]} + \Lb\Lb\big(\Psi^{[-2]}\big)^\ast + \frac{\LL(\LL-2)}{6M}\le(L\Psi^{[+2]}-\Lb\big(\Psi^{[-2]}\big)^\ast\ri) \\
    = & \; \half\le(LL+\Lb\Lb\ri)\Psi^N + \half\le(LL-\Lb\Lb\ri)\Psi^D + \frac{\LL(\LL-2)}{12M}\le((L+\Lb)\Psi^D + (L-\Lb)\Psi^N\ri) \\
    = & \; \half\le(LL+\Lb\Lb\ri)\Psi^N + \frac{1}{12M}(L-\Lb)\le(12M\pr_t\Psi^D+ \LL(\LL-2)\Psi^N\ri) + \frac{\LL(\LL-2)}{6M}\pr_t\Psi^D \\
    = & \; \half\le(LL+\Lb\Lb\ri)\Psi^R + \frac{\LL(\LL-2)}{12M}(L-\Lb)\Psi^R \\
    & \; - \half\le(LL+\Lb\Lb\ri)\frac{12M}{\LL(\LL-2)}\pr_t\Psi^D + \frac{\LL(\LL-2)}{6M}\pr_t\Psi^D.
    \end{aligned}
  \end{align}
  If $\Psi^D \xrightarrow{r\to+\infty} 0$, and if $\Psi^D$ satisfies the Regge-Wheeler equation~\eqref{eq:RWdecoinho} with vanishing $\mathfrak{F}^D$ at infinity, we have
  \begin{align*}
    \LL(\LL-2)\pr_t\Psi^D \xrightarrow{r\to+\infty} 0,
  \end{align*}
  and
  \begin{align*}
    \half(LL+\Lb\Lb)\frac{12M}{\LL(\LL-2)}\pr_t\Psi^D & = \frac{12M}{\LL(\LL-2)}\pr_t\le(\pr_t^2+\pr_{r^\star}^2\ri)\Psi^D \\
                                                       & = \frac{12M}{\LL(\LL-2)}\pr_t\le(2\pr_t^2 + \frac{\De}{r^4}\le(\LL-\frac{6M}{r}\ri) - \mathfrak{F}^D\ri)\Psi^D \\
    & \xrightarrow{r\to+\infty} 0.
  \end{align*}
  Plugging the above two limits in~\eqref{eq:pflimPsiDR}, we get that if $\Psi^D \xrightarrow{r\to+\infty} 0$, and if $\Psi^D$ satisfies the Regge-Wheeler equation~\eqref{eq:RWdecoinho} with vanishing $\mathfrak{F}^D$ at infinity, the boundary condition~\eqref{eq:RWBCN} is equivalent to
  \begin{align}\label{eq:pflimPsiDR2}
    \begin{aligned}
      \half\le(LL+\Lb\Lb\ri)\Psi^R + \frac{\LL(\LL-2)}{12M}(L-\Lb)\Psi^R & \xrightarrow{r\to+\infty} 0.
    \end{aligned}
  \end{align}
  Now,~\eqref{eq:pflimPsiDR2} is equivalent to
  \begin{align*}
    \le(\pr_t^2+\pr_{r^\star}^2\ri)\Psi^R + \frac{\LL(\LL-2)}{6M}\pr_{r^\star}\Psi^R & = \half\le(LL+\Lb\Lb\ri)\Psi^R + \frac{\LL(\LL-2)}{12M}(L-\Lb)\Psi^R \xrightarrow{r\to+\infty} 0, 
  \end{align*}
  which, if $\Psi^R$ satisfies the Regge-Wheeler equation~\eqref{eq:RWdecoinho} with vanishing $\mathfrak{F}^R$ at infinity is equivalent to
  \begin{align*}
    2\pr_{t}^2\Psi^R + \frac{\LL(\LL-2)}{6M}\pr_{r^\star}\Psi^R + k^2\LL\Psi^R & \xrightarrow{r\to+\infty} 0.
  \end{align*}
  This finishes the proof of the equivalence of the boundary conditions and of the proposition.
\end{proof}

\subsection{Additional relations for the Chandrasekhar transformations}\label{sec:compochandra}
We have the following two lemmas. Their proofs are direct computations which are left to the reader.  
\begin{lemma}\label{lem:compositionChandras}
  If $\alt^{[+2]}$ be a solution to the Teukolsky equation~\eqref{eq:Teukp2}, then
  \begin{subequations}\label{eq:TSidnotTS}
    \begin{align}
      \le(\LL^{[+2]}(\LL^{[+2]}-2) - 12M\pr_t\ri)\alt^{[+2]} & = w^2(w^{-1}L)(w^{-1}L)(w^{-1}\Lb)(w^{-1}\Lb) \alt^{[+2]}, \label{eq:TSidp2bis}
    \end{align}
    If $\alt^{[-2]}$ be a solution to the Teukolsky equation~\eqref{eq:Teukm2}, then
    \begin{align}
      \le(\LL^{[-2]}(\LL^{[-2]}-2) + 12M\pr_t\ri)\alt^{[-2]} & = w^2(w^{-1}\Lb)(w^{-1}\Lb)(w^{-1}L)(w^{-1}L) \alt^{[-2]}.\label{eq:TSidm2bis}
    \end{align}
  \end{subequations}
\end{lemma}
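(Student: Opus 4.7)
The plan is to prove \eqref{eq:TSidp2bis} by direct iterative computation, reducing the fourth-order operator $w^2(w^{-1}L)^2(w^{-1}\Lb)^2$ acting on $\alt^{[+2]}$ to an expression purely in angular and time derivatives of $\alt^{[+2]}$ by repeated application of the Teukolsky equation \eqref{eq:Teukp2}, viewed as a first-order identity between the Chandrasekhar-transformed quantities. The companion identity \eqref{eq:TSidm2bis} will follow by the analogous calculation with $L$ and $\Lb$ interchanged.

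Writing $\psi := w^{-1}\Lb\alt^{[+2]}$ and $\Psi := (w^{-1}\Lb)\psi$, the proof of Proposition~\ref{prop:Chandra} has already produced, in the homogeneous case, the two first-order identities
\begin{align*}
  L\psi &= \frac{w'}{w}\psi - \le(\LL^{[+2]}-2+\frac{6M}{r}\ri)\alt^{[+2]}, \\
  L\Psi &= -\le(\LL^{[+2]}-\frac{6M}{r}\ri)\psi - 6M\,\alt^{[+2]},
\end{align*}
coming from \eqref{eq:Teukrewrite} and \eqref{eq:psirewrite}. First I would apply $w^{-1}L$ to the formula for $L\Psi$ and then use the product rule together with both identities above to re-express $L\psi$, together with $L\alt^{[+2]} = 2\pr_t\alt^{[+2]} - w\psi$ (which follows from $L+\Lb=2\pr_t$ and the definition of $\psi$). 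Multiplying through by $w^2$ then displays $w^2(w^{-1}L)^2\Psi$ as a linear combination of $(\LL^{[+2]})^2\alt^{[+2]}$, $\LL^{[+2]}\alt^{[+2]}$, $\pr_t\alt^{[+2]}$, $\alt^{[+2]}$ and $\psi$-terms, all with radial coefficients built from $w$, $w'$, $(w'/w)'$ and powers of $6M/r$.

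The main obstacle is the careful algebraic bookkeeping needed to see that all of the $\psi$-terms cancel and that the remaining radial coefficients collapse into the clean form on the left-hand side. The decisive simplification is the identity $w^{-1}(w'/w)' = 2 - \frac{12M}{r}$ (already used in the proofs of Propositions~\ref{prop:Chandra} and~\ref{prop:RevChandra}), combined with the elementary observation that $(\LL^{[+2]}-2+\frac{6M}{r})(\LL^{[+2]}-\frac{6M}{r})$ differs from $\LL^{[+2]}(\LL^{[+2]}-2)$ only by terms proportional to $6M/r$, which must be absorbed by the radial coefficients produced elsewhere in the computation. Since $\pr_t\alt^{[+2]}$ can only enter through the single substitution $L\alt^{[+2]}=2\pr_t\alt^{[+2]}-w\psi$, its coefficient is rigidly determined; a direct count yields $-12M\pr_t\alt^{[+2]}$.

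The identity \eqref{eq:TSidm2bis} follows by running the same scheme starting from the analogous first-order reformulations of \eqref{eq:Teukm2} provided in \eqref{eq:Teukrewritem2} and \eqref{eq:psirewritem2}, and using $\Lb\alt^{[-2]} = 2\pr_t\alt^{[-2]} - w\psi^{[-2]}$ with $\psi^{[-2]} = w^{-1}L\alt^{[-2]}$ in place of its $[+2]$ counterpart. The sign flip $-12M\pr_t \mapsto +12M\pr_t$ in the final expression is inherited from the opposite sign of the first-order $w'/w$-type term in the two Teukolsky equations \eqref{eq:Teukp2} and \eqref{eq:Teukm2}, equivalently from the fact that $L$ and $\Lb$ swap roles between the $[+2]$ and $[-2]$ branches of the Chandrasekhar transformation.
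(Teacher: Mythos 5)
Your proposal is correct and matches the paper's intent: the paper explicitly leaves Lemma~\ref{lem:compositionChandras} as a ``direct computation'', and your route---iterating the first-order reformulations \eqref{eq:Teukrewrite} and \eqref{eq:psirewrite} of the Teukolsky equation, substituting $L\alt^{[+2]}=2\pr_t\alt^{[+2]}-w\psi^{[+2]}$, and multiplying through by $w^2$---is precisely that computation. The cancellations you anticipate do occur: the $6Mw\psi^{[+2]}$ terms cancel pairwise, the residual radial coefficient $6M\tfrac{w'}{w}+\tfrac{12M}{r}-\tfrac{36M^2}{r^2}$ vanishes since $\tfrac{w'}{w}=-\tfrac{2}{r}+\tfrac{6M}{r^2}$, and the $\pr_t$ term arises exactly once with coefficient $-12M$, yielding \eqref{eq:TSidp2bis}.
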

\begin{lemma}\label{lem:compositionChandrasbis}
  Let $\Psi$ be a solution to the Regge-Wheeler equation~\eqref{eq:RW}. Then,
  \begin{align}\label{eq:compositionRW}
    \begin{aligned}
      (w^{-1}\Lb)(w^{-1}\Lb)w^2(w^{-1}L)(w^{-1}L)\Psi & = \le(\LL(\LL-2)-12M\pr_t\ri)\Psi,\\ \\
      (w^{-1}L)(w^{-1}L)w^2(w^{-1}\Lb)(w^{-1}\Lb)\Psi & = \le(\LL(\LL-2)+12M\pr_t\ri)\Psi.
    \end{aligned}
  \end{align}
\end{lemma}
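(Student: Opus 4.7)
The two identities are exchanged by the time-reversal involution $t \mapsto -t$ (under which $L \mapsto -\Lb$, $\Lb \mapsto -L$, while $w$ and $\LL$ and the Regge-Wheeler equation are invariant and $\pr_t$ changes sign). It therefore suffices to establish the first identity; my plan is a direct computation, as explicitly suggested by the paper (``left to the reader'').

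In tortoise coordinates $(t,r^\star)$, the vectorfields $L = \pr_t + \pr_{r^\star}$ and $\Lb = \pr_t - \pr_{r^\star}$ commute as differential operators, so that $L\Lb = \Lb L = \pr_t^2 - \pr_{r^\star}^2$, and the Regge-Wheeler equation~\eqref{eq:RW} reduces to
\[
L\Lb\,\Psi \;=\; -\tfrac{\De}{r^4}\bigl(\LL - \tfrac{6M}{r}\bigr)\Psi.
\]
The basic commutators $w^{-1} L\, w = L + w'$ and $w^{-1}\Lb\, w = \Lb - w'$ (with $w' = \mathrm{d} w/\mathrm{d} r$) allow one to push all factors of $w^{\pm 1}$ past $L$ and $\Lb$ at the cost of multiplicative correction terms. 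Applying these to rewrite the LHS as $w^{-1}\Lb\, w^{-1}\Lb\, w L\, w^{-1} L$ and then commuting all $w^{\pm 1}$'s to one side, the operator becomes a polynomial of degree at most four in the commuting operators $L, \Lb$ with coefficients polynomial in $w, w', w''$. Each monomial of top degree contains at least one mixed pair $L\Lb$, which is then iteratively eliminated using the Regge-Wheeler equation, each application lowering the derivative order by two and introducing the potential $-\tfrac{\De}{r^4}(\LL - 6M/r)$. A crucial algebraic simplification uses the identity $w^{-1}(w'/w)' = 2 - 12M/r$ already exploited in the proof of Proposition~\ref{prop:Chandra}, which collapses the $w''$ terms into the $6M/r$ piece of the Regge-Wheeler potential.

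After all reductions, the quartic derivatives in $\pr_{r^\star}$ cancel, the pure angular contributions collect into $\LL^2 - 2\LL = \LL(\LL-2)$, and the asymmetric contributions from commuting $L$'s versus $\Lb$'s past the various $w^{\pm 1}$'s (those terms which would cancel if $L$ and $\Lb$ acted symmetrically on $w$) collect into a single pure $\pr_t$ term with coefficient $-12M$. The main obstacle is purely bookkeeping, with no new conceptual input beyond Lemma~\ref{lem:compositionChandras}. Two useful checks are available during the reduction. First, both sides, viewed as operators on Regge-Wheeler solutions, must themselves solve Regge-Wheeler: automatic for the RHS because $\LL$ and $\pr_t$ commute with $\RW$, and for the LHS a consequence of Chandrasekhar transformation theory, since $w^2(w^{-1}L)^2$ maps Regge-Wheeler to Teukolsky (Remark~\ref{rem:duality}) and $(w^{-1}\Lb)^2$ maps Teukolsky back to Regge-Wheeler (Proposition~\ref{prop:Chandra}). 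Second, applying $w^2(w^{-1}L)^2$ to both sides of the claimed identity, and using that $\LL, \pr_t$ commute with $w^2(w^{-1}L)^2$, reduces it to Lemma~\ref{lem:compositionChandras} applied to the Teukolsky solution $w^2(w^{-1}L)^2 \Psi$; this shows that the two sides of Lemma~\ref{lem:compositionChandrasbis} agree after one further application of $w^2(w^{-1}L)^2$, providing a strong a priori consistency check on the direct computation.
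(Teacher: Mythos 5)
Your proposal is correct and follows the route the paper itself prescribes (a direct computation); the time-reversal reduction to a single identity and the duality-based consistency check via Lemma~\ref{lem:compositionChandras} and Remark~\ref{rem:duality} are both sound. The cancellations you assert do close: expanding $(w^{-1}\Lb)^2w^2(w^{-1}L)^2$ and reducing every mixed pair via $L\Lb\Psi=-\frac{\De}{r^4}\le(\LL-\frac{6M}{r}\ri)\Psi$, the coefficients of $L\Psi$ and of $\Lb\Psi$ each reduce to $-6Mw^2$ (yielding $-12Mw^2\pr_t$, with the $\pr_{r^\star}$ part cancelling), while the residual zeroth-order radial terms vanish precisely because $w'/w=-\frac{2}{r}+\frac{6M}{r^2}$ and $w^{-1}(w'/w)'=2-\frac{12M}{r}$ — note that these identities, and hence the whole computation, require the normalisation $w=\De/r^4$ implicit in the proofs of Propositions~\ref{prop:Chandra} and~\ref{prop:RevChandra}, not the $w=\De/r^2$ stated in Section~\ref{sec:introchandra}.
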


\begin{definition}\label{def:TSmodes}
  In the following, we will call
  \begin{align*}
    \le(\LL(\LL-2) - 12M\pr_t\ri) && \text{and} && \le(\LL(\LL-2) + 12M\pr_t\ri),
  \end{align*}
  respectively the growing and decaying Teukolsky-Starobinsky operators. Elements of their respective kernels will be called  \emph{growing} and \emph{decaying} \emph{Teukolsky-Starobinsky modes}.\footnote{For mode solutions, the eigenvalues of $\LL(\LL-2)-12M\pr_t$ and $\LL(\LL-2)+12M\pr_t$ coincide with the so-called ``radial Teukolsky-Starobinsky constants’’, see~\emph{e.g.}~\cite{Cha83}. These operators also appear in the so-called Teukolsky-Starobinsky identities which relate $\alt^{[+2]}$ and $\alt^{[-2]}$, see Section~\ref{sec:TSconstraints}.}
\end{definition}

\begin{remark}\label{rem:specialkernel}
  From the formulas~\eqref{eq:TSidnotTS} and~\eqref{eq:compositionRW} and the definitions of the (reverse) Chandrasekhar transformations~\eqref{eq:Chandra} and~\eqref{eq:RevChandra}, one can directly deduce that Teukolsky (resp. Regge-Wheeler) quantities with vanishing Chandrasekhar (resp. reverse Chandrasekhar) transformations are Teukolsky-Starobinsky modes as defined in Definition~\ref{def:TSmodes}.
\end{remark}


\subsection{Teukolsky-Starobinsky identities}\label{sec:TSconstraints}

We recall the Teukolsky Starobinsky identities introduced in Definition \ref{def:TSid}. To first verify the claim made in Remark \ref{rem:elliptictraductions}, we recall from Section 4.1.1  of~\cite{Gra.Hol24} the relations
 \begin{align*}
    \alt^{[+2]} & = \frac{\De}{r}\alin(\mm,\mm), & \alt^{[-2]} & = \frac{\De}{r}\ablin(\mm^\ast,\mm^\ast),
  \end{align*}
  where $\mm := \frac{1}{r}\le(\pr_\varth + \frac{i}{\sin\varth} \pr_\varphi\ri)$, and use the following easily verified formulae
  \begin{align*}
    r(\slashed{\mathcal{D}}_2^\star X) (\mm,\mm) & = - \LL_{-1}(X(\mm)), & r(\slashed{\mathcal{D}}_1^\star (f,g)) (\mm) & = -\LL_0 (f+ig), \\
    r(\slashed{div} - i \slashed{curl})X & = \LL_1(X(\mm^\ast)), & r(\slashed{\mathcal{D}}_2\alpha)(\mm^\ast) & = \half \LL_{2}(\alpha(\mm^\ast,\mm^\ast)), 
  \end{align*}
  where $f,g$ are scalars, $X$ is a $1$-tensor and $\alpha$ a symmetric traceless $2$-tensor tangent to the spheres. See for instance Section 2.3.3 of~\cite{Gra.Hol24} for the definition of the angular operators appearing on the left.

As we will construct $\alt^{[\pm2]}$ from Regge-Wheeler quantities using the (reverse) Chandrasekhar transformations, we want to understand which constraints the Teukolsky-Starobinsky identities~\eqref{eq:TSidog} impose on $\Psi^{[\pm2]}$. Before that, it will be useful to rewrite~\eqref{eq:TSidog} using the operators $\LL^{[\pm2]}$ and set an appropriate normalisation for the Hilbert basis of spherical modes.

\subsubsection{Angular decompositions}\label{sec:angTS}
\begin{lemma}[Angular Teukolsky-Starobinsky identities]\label{lem:angTS}
  The following operator identities hold true
  \begin{align}\label{eq:TSopid1}
    \begin{aligned}
      \LL_{-1}\LL^\dg_2\LL_{-1}\LL_{0}\LL_{1}\LL_{2} & = \LL_{-1}\LL_{0}\LL_{1}\LL_{2}\LL_{-1}^\dg\LL_2
    \end{aligned}
  \end{align}
  and
  \begin{align}\label{eq:TSopid2}
    \begin{aligned}
      \LL_{-1}^\dg\LL_{0}^\dg\LL_{1}^\dg\LL_{2}^\dg\LL_{-1}\LL_{0}\LL_{1}\LL_{2} & = \big(\LL_{-1}^\dg\LL_2\big)^2\big(\LL^\dg_{-1}\LL_2-2\big)^2.
    \end{aligned}
  \end{align}
  Moreover, we have
  \begin{align}
    \label{eq:TSvsLL}
    \LL^{[+2]} & = -\LL_{-1}\LL_2^\dg +2, & \LL^{[-2]} & = -\LL^\dg_{-1}\LL_2 +2. 
  \end{align}
\end{lemma}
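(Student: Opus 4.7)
The strategy is to verify the factorisation~(\ref{eq:TSvsLL}) by explicit coordinate computation and then use it to reduce~(\ref{eq:TSopid1}) and~(\ref{eq:TSopid2}) to statements about the ladder and intertwining properties of $\LL_n,\LL_n^\dg$ on spin-weighted spherical functions.

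For~(\ref{eq:TSvsLL}), I would expand
\[
  \LL_{-1}\LL_2^\dg = \le(\pr_\varth + \tfrac{i}{\sin\varth}\pr_\varphi - \cot\varth\ri)\le(\pr_\varth - \tfrac{i}{\sin\varth}\pr_\varphi + 2\cot\varth\ri)
\]
using $[\pr_\varth,\cot\varth] = -\sin^{-2}\varth$ and $[\pr_\varth,\sin^{-1}\varth] = -\cos\varth\sin^{-2}\varth$. The mixed $\pr_\varth\pr_\varphi$ cross-terms cancel; the $\pr_\varphi$-sector reproduces $\sin^{-2}\varth\,\pr_\varphi^2 + 4i\cos\varth\sin^{-2}\varth\,\pr_\varphi$; and the zeroth-order contribution evaluates to $-2-4\cot^2\varth$, which differs from the $-4-4\cot^2\varth$ appearing in the definition of $-\LL^{[+2]}$ precisely by $+2$. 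This yields $\LL^{[+2]}=-\LL_{-1}\LL_2^\dg+2$, and the computation for $\LL^{[-2]}=-\LL^\dg_{-1}\LL_2+2$ is identical up to the complex-conjugate swap $\pr_\varphi\leftrightarrow-\pr_\varphi$.

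For~(\ref{eq:TSopid1}), inserting~(\ref{eq:TSvsLL}) on both sides and cancelling the common term $2\LL_{-1}\LL_0\LL_1\LL_2$ reduces the identity to the intertwining relation
\[
  \LL^{[+2]}\,\LL_{-1}\LL_0\LL_1\LL_2 = \LL_{-1}\LL_0\LL_1\LL_2\,\LL^{[-2]}.
\]
I would prove this by telescoping four one-step intertwinings $\LL^{[s+1]}\LL_{-s} = \LL_{-s}\LL^{[s]}$ (for $s=-2,-1,0,1$), each reflecting that $\LL_{-s}$ is the standard edth spin-raising operator from spin $s$ to spin $s+1$ on the sphere. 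Each one-step identity is established by a coordinate computation analogous to that of~(\ref{eq:TSvsLL}), after expressing $\LL^{[s]}$ as $-\LL_{-s-1}\LL_s^\dg + c_s$ for an explicit constant $c_s$.

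For~(\ref{eq:TSopid2}), the cleanest route is a mode decomposition on the Hilbert basis of spin-$(-2)$ weighted spherical harmonics $\{{}_{-2}Y_{\ell m}\}_{\ell\geq2,\,|m|\leq\ell}$ (cf.\ Proposition~\ref{lem:Hilbertbasis}). Both sides are $\mathrm{SO}(3)$-invariant scalar operators preserving the spin weight, hence diagonal in this basis. By~(\ref{eq:TSvsLL}) and the eigenvalue $\LL^{[-2]}\,{}_{-2}Y_{\ell m}=\ell(\ell+1)\,{}_{-2}Y_{\ell m}$, the right-hand side acts as $\ell^2(\ell+1)^2\bigl(\ell(\ell+1)-2\bigr)^2 = (\ell-1)^2\ell^2(\ell+1)^2(\ell+2)^2$. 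On the left, $\LL_{-1}\LL_0\LL_1\LL_2$ maps ${}_{-2}Y_{\ell m}$ to a scalar multiple of ${}_{+2}Y_{\ell m}$ (the product of four standard edth-raising coefficients) and $\LL^\dg_{-1}\LL^\dg_0\LL^\dg_1\LL^\dg_2$ reverses the path, so the composition acts as $(\ell-1)^2\ell^2(\ell+1)^2(\ell+2)^2$, matching the right-hand side. The main obstacle will be bookkeeping: the coordinate expansion of~(\ref{eq:TSvsLL}) and the four one-step intertwinings behind~(\ref{eq:TSopid1}) involve many cotangent and cosecant terms that must cancel correctly, and for~(\ref{eq:TSopid2}) one must carefully track the ladder normalisations on spin-weighted harmonics; a coordinate-free proof of~(\ref{eq:TSopid2}) via repeated commutation of $\LL_n$ with $\LL_{n'}^\dg$ is possible but substantially lengthier.
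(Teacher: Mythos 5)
Your proof of \eqref{eq:TSvsLL} is a direct coordinate computation and coincides with the paper's; the cancellations you describe are exactly right, including the zeroth-order term $-2-4\cot^2\varth$. For \eqref{eq:TSopid1} and \eqref{eq:TSopid2} the paper argues purely algebraically: it verifies the three one-step commutation identities $\LL_2^\dg\LL_{-1}=\LL_0\LL_1^\dg+2$, $\LL_1^\dg\LL_0=\LL_1\LL_0^\dg$, $\LL_0^\dg\LL_1=\LL_2\LL_{-1}^\dg-2$ and applies them repeatedly. Your treatment of \eqref{eq:TSopid1} is the same argument in different clothing: after substituting \eqref{eq:TSvsLL} and cancelling $2\LL_{-1}\LL_0\LL_1\LL_2$, your four one-step intertwinings $\LL^{[s+1]}\LL_{-s}=\LL_{-s}\LL^{[s]}$ are precisely the paper's commutation identities rewritten through intermediate spin-$s$ Laplacians (note an indexing slip: you want $\LL^{[s]}=-\LL_{1-s}\LL_s^\dg+c_s$, not $-\LL_{-s-1}\LL_s^\dg+c_s$; for $s=2$ this gives $-\LL_{-1}\LL_2^\dg+2$ as in \eqref{eq:TSvsLL}). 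Where you genuinely diverge is \eqref{eq:TSopid2}: the paper obtains it from the same three commutation identities by pure operator algebra, whereas you invoke Schur's lemma on the spin-weighted harmonic decomposition and match eigenvalues. That route works, but with two caveats. First, you must compute the ladder normalisation of $\LL_n,\LL_n^\dg$ on ${}_{\pm2}Y_{\ellmode m}$ independently: you cannot import it from item 4 of Proposition~\ref{lem:Hilbertbasis}, because the paper proves that normalisation \emph{using} \eqref{eq:TSopid2} (via the integration by parts in \eqref{eq:alephrel2}), so quoting it here would be circular; only items 1--3 of that proposition (existence of the eigenbasis) are safe to use. Second, to upgrade agreement on the harmonics to the stated operator identity you should note that smooth spin-weighted functions expand in this basis with convergence in $C^\infty$. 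The paper's algebraic route avoids both issues and is shorter; yours buys a conceptual explanation of why the Teukolsky--Starobinsky constant $\big((\ellmode-1)\ellmode(\ellmode+1)(\ellmode+2)\big)^2$ appears.
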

\begin{proof}
  Identities~\eqref{eq:TSopid1} and~\eqref{eq:TSopid2} both follow from applying repeatedly the following commutation identities
  \begin{align*}
    \LL_{2}^\dg\LL_{-1} & = \LL_0\LL_{1}^\dg  + 2, & \LL^\dg_1\LL_0 & = \LL_1\LL^\dg_{0}, & \LL^\dg_{0}\LL_1 & = \LL_2\LL_{-1}^\dg -2,
  \end{align*}
  which are easily verified. Identities~\eqref{eq:TSvsLL} follow from direct computations using the definition~\eqref{eq:defLLpm2} of $\LL^{[\pm2]}$.
\end{proof}
\begin{proposition}\label{lem:Hilbertbasis}
  There exists a Hilbert basis $\le(S_{m\ellmode}(\varth)e^{+im\varphi}\ri)_{\ellmode\geq 2, |m|\leq \ellmode}$ of spin-$+2$-weighted complex-valued $L^2$ functions, so that, for all integers $\ellmode\geq 2$ and $|m|\leq \ellmode$,
  \begin{enumerate}
  \item\label{item:smoothreal} $S_{m\ellmode}(\varth)e^{+im\varphi}$ are smooth spin-$+2$-weighted complex-valued functions of $L^2$ norm $1$ and $S_{m\ellmode}^\ast=S_{m\ellmode}$,
  \item\label{item:eigenp2} $S_{m\ellmode}(\varth)e^{+im\varphi}$ are eigenvectors of $\LL^{[+2]}$:
    \begin{align}\label{eq:Smleigenp2}
      \LL^{[+2]}\le(S_{m\ellmode}(\varth)e^{+im\varphi}\ri) & = \ellmode(\ellmode+1)\le(S_{m\ellmode}(\varth)e^{+im\varphi}\ri),
    \end{align}
  \item\label{item:eigenm2} $S_{m\ellmode}(\varth)e^{-im\varphi}$ are eigenvectors of $\LL^{[-2]}$:
    \begin{align}\label{eq:Smleigenm2}
      \LL^{[-2]}\le(S_{m\ellmode}(\varth)e^{-im\varphi}\ri) & = \ellmode(\ellmode+1)\le(S_{m\ellmode}(\varth)e^{-im\varphi}\ri),
    \end{align}
  \item\label{item:TStransfo}
    The $S_{m\ellmode}$ and $S_{(-m)\ellmode}$ functions are related by the following Teukolsky-Starobinsky transformations
    \begin{align}\label{eq:TSangpmm}
      \begin{aligned}
        \LL_{-1}^\dg\LL_{0}^\dg\LL_{1}^\dg\LL_{2}^\dg\le(S_{m\ellmode}(\varth)e^{+im\varphi}\ri) & = (\ellmode-1)\ellmode(\ellmode+1)(\ellmode+2)\le(S_{-m\ellmode}(\varth)e^{+im\varphi}\ri) \\
        \LL_{-1}\LL_{0}\LL_{1}\LL_{2}\le(S_{m\ellmode}(\varth)e^{-im\varphi}\ri) & = (\ellmode-1)\ellmode(\ellmode+1)(\ellmode+2)\le(S_{-m\ellmode}(\varth)e^{-im\varphi}\ri).
      \end{aligned}
    \end{align}
  \end{enumerate}
 \end{proposition}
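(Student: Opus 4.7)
The plan is to invoke the classical spectral theory of the spin-weighted Laplacian $\LL^{[+2]}$ to produce a basis realising items (1)--(3), then fix the relative normalisation between the $+m$ and $-m$ sectors using the operator identities of Lemma~\ref{lem:angTS} so that item (4) holds with the stated explicit constant.

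For items (1) and (2), I would invoke that $\LL^{[+2]}$ is a positive, elliptic, essentially self-adjoint operator on $L^2(S^2,\sin\vartheta\,d\vartheta\,d\varphi)$ with discrete spectrum $\{\ell(\ell+1):\ell\geq 2\}$. Separating variables against the $\partial_\varphi$-eigenbasis produces simultaneous eigenvectors of the form $S_{m\ell}(\vartheta)e^{+im\varphi}$ with $|m|\leq \ell$; the angular ODE for $S_{m\ell}(\vartheta)$ has real coefficients and a one-dimensional eigenspace at each fixed $(\ell,m)$, so each $S_{m\ell}$ can be chosen real-valued and $L^2$-normalised. For item (3), inspection of~\eqref{eq:defLLpm2} yields the pointwise identity $(\LL^{[+2]}f)^\ast = \LL^{[-2]}f^\ast$ for any $f$; complex-conjugating~\eqref{eq:Smleigenp2} and using $S_{m\ell}^\ast=S_{m\ell}$ then produces~\eqref{eq:Smleigenm2}.

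For item (4), set $T_1 := \LL_{-1}\LL_0\LL_1\LL_2$ and $T_2 := \LL_{-1}^\dagger\LL_0^\dagger\LL_1^\dagger\LL_2^\dagger$. The proof rests on three elementary observations. First, the identity~\eqref{eq:TSopid1} together with~\eqref{eq:TSvsLL} yields the intertwining relation $\LL^{[+2]}T_1 = T_1\LL^{[-2]}$, so $T_1$ maps $\LL^{[-2]}$-eigenfunctions at eigenvalue $\ell(\ell+1)$ to $\LL^{[+2]}$-eigenfunctions at the same eigenvalue; since $T_1$ commutes with $\partial_\varphi$, azimuthal numbers are preserved. Second, a direct computation of formal $L^2(S^2,\sin\vartheta\,d\vartheta\,d\varphi)$-adjoints shows $\LL_n^\ast = -\LL_{1-n}^\dagger$, so that $T_1^\ast = T_2$ as unbounded operators. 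Third, the identity~\eqref{eq:TSopid2} combined with $\LL_{-1}^\dagger\LL_2 = 2-\LL^{[-2]}$ (from~\eqref{eq:TSvsLL}) implies that $T_2 T_1$ acts on any $\LL^{[-2]}$-eigenfunction at eigenvalue $\ell(\ell+1)$ by multiplication by $\left((\ell-1)\ell(\ell+1)(\ell+2)\right)^2$. Together these force
\[
\|T_1(S_{m\ell}(\vartheta)e^{-im\varphi})\|_{L^2}^2 = \langle S_{m\ell}e^{-im\varphi},\, T_2 T_1 (S_{m\ell}e^{-im\varphi})\rangle_{L^2} = \left((\ell-1)\ell(\ell+1)(\ell+2)\right)^2.
\]
The result $T_1(S_{m\ell}(\vartheta)e^{-im\varphi})$ takes the form $F_{m\ell}(\vartheta)e^{-im\varphi}$ with $F_{m\ell}$ real-valued (the only imaginary coefficient in $T_1$, namely $+i/\sin\vartheta\cdot\partial_\varphi$, becomes the real factor $m/\sin\vartheta$ on $e^{-im\varphi}$), and lies in the $\LL^{[+2]}$-eigenspace at eigenvalue $\ell(\ell+1)$ and azimuthal number $-m$ by the intertwining. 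One then \emph{defines} $S_{-m\ell}(\vartheta)$ (for $m>0$) to be the unique real unit-norm eigenfunction such that
\[
T_1(S_{m\ell}(\vartheta)e^{-im\varphi}) = (\ell-1)\ell(\ell+1)(\ell+2)\,S_{-m\ell}(\vartheta)e^{-im\varphi},
\]
which is the first line of~\eqref{eq:TSangpmm} by construction. The second line then follows from the pointwise identity $(T_1 f)^\ast = T_2 f^\ast$ (immediate from $(\LL_n f)^\ast = \LL_n^\dagger f^\ast$) applied to $f=S_{m\ell}(\vartheta)e^{-im\varphi}$, using the reality of $S_{\pm m\ell}$ and of the constant.

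The main obstacle is precisely the sign bookkeeping in the last step: the composition identity~\eqref{eq:TSopid2} only determines the Teukolsky-Starobinsky constant up to sign, and a single consistent choice must be made across both the $+m$ and $-m$ sectors. Sidestepping this by \emph{defining} $S_{-m\ell}$ (for $m>0$) via the $T_1$-transformation of $S_{+m\ell}$---rather than applying the spectral theorem independently in the $-m$ sector---cleanly resolves the ambiguity, at the cost of verifying a posteriori that the basis so constructed is indexed by $|m|\leq\ell$ with the correct eigenvalue and azimuthal number (which is automatic from the intertwining).
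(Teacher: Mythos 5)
Your strategy coincides with the paper's own proof in all essential respects: items (1)--(3) from standard Sturm--Liouville theory for $\LL^{[\pm2]}$, and item (4) from the intertwining identity~\eqref{eq:TSopid1}, the adjoint relation $(\LL_{-1}\LL_0\LL_1\LL_2)^\ast=\LL_{-1}^\dg\LL_0^\dg\LL_1^\dg\LL_2^\dg$, and the composition identity~\eqref{eq:TSopid2}, which together pin down the \emph{modulus} of the Teukolsky--Starobinsky constant as $(\ellmode-1)\ellmode(\ellmode+1)(\ellmode+2)$. The only real difference is how the residual sign is fixed: you \emph{define} $S_{-m\ellmode}$ (for $m>0$) through the transformation, whereas the paper introduces an undetermined constant $\aleph_{m\ellmode}$ and then flips the sign of $S_{m\ellmode}$ for $m<0$.

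There is, however, a gap at $m=0$, and it is precisely the case your sign-fixing device cannot reach. For $m=0$ the operator $\LL_{-1}\LL_0\LL_1\LL_2$ maps the one-dimensional axisymmetric eigenspace to itself, so $\LL_{-1}\LL_0\LL_1\LL_2 S_{0\ellmode}=\aleph_{0\ellmode}S_{0\ellmode}$ with $|\aleph_{0\ellmode}|=(\ellmode-1)\ellmode(\ellmode+1)(\ellmode+2)$ by your norm computation; but the sign of $\aleph_{0\ellmode}$ is invariant under replacing $S_{0\ellmode}$ by $-S_{0\ellmode}$ (both sides of the eigenvalue relation change sign together), so it cannot be normalised away and must be computed. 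The paper settles it by the integration by parts
\begin{align*}
\aleph_{0\ellmode} = \int_{S^2}S_{0\ellmode}\,\LL_{-1}\LL_0\LL_1\LL_2 S_{0\ellmode}\,\sin\varth\,\d\varth\,\d\varphi = \int_{S^2}\le|\LL_1\LL_2 S_{0\ellmode}\ri|^2\sin\varth\,\d\varth\,\d\varphi\geq0,
\end{align*}
using that $(\LL_{-1}\LL_0)^\ast=\LL_1^\dg\LL_2^\dg$ and that $\LL_n^\dg$ agrees with $\LL_n$ on axisymmetric functions; you need to add this step. A second, smaller omission: having defined $S_{-m\ellmode}$ only for $m>0$, the identities~\eqref{eq:TSangpmm} still have to be verified for \emph{negative} $m$, i.e.\ that $\LL_{-1}\LL_0\LL_1\LL_2$ applied to the newly defined $S_{-m\ellmode}(\varth)e^{+im\varphi}$ returns $(\ellmode-1)\ellmode(\ellmode+1)(\ellmode+2)S_{m\ellmode}(\varth)e^{+im\varphi}$. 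This does follow, but from the composition identity acting on the spin-$+2$ eigenspace (the conjugate of~\eqref{eq:TSopid2}) combined with your pointwise conjugation identity, not ``automatically from the intertwining'' as asserted; it should be written out.
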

 \begin{proof}
   The existence of a Hilbert basis satisfying items~\ref{item:smoothreal}, \ref{item:eigenp2} and~\ref{item:eigenm2} is a standard result of Sturm-Liouville theory for the operators $\LL^{[\pm2]}$. From the commutation relation~\eqref{eq:TSopid1}, relation~\eqref{eq:TSvsLL}, and item~\ref{item:eigenm2} of the lemma, we have that
   \begin{align*}
     \LL^{[+2]}\bigg(\LL_{-1}\LL_{0}\LL_{1}\LL_{2} \le(S_{m\ellmode}(\varth)e^{-im\varphi}\ri)\bigg) & = \ell(\ell+1)\bigg(\LL_{-1}\LL_{0}\LL_{1}\LL_{2} \le(S_{m\ellmode}(\varth)e^{-im\varphi}\ri)\bigg).
   \end{align*}
   Moreover the operators $\LL_n$ map regular spin-$(-n)$-weighted complex functions to regular spin-$(-n+1)$-weighted complex functions, hence $\LL_{-1}\LL_{0}\LL_{1}\LL_{2} \le(S_{m\ellmode}(\varth)e^{-im\varphi}\ri)$ is a regular spin-$+2$-weighted complex function, and using items~\ref{item:smoothreal} and~\ref{item:eigenp2} of the lemma, one deduces that there exists $\aleph_{m\ell}\in\mathbb{C}$ such that 
  \begin{align}\label{eq:L4S}
    \LL_{-1}\LL_{0}\LL_{1}\LL_{2} \le(S_{m\ellmode}(\varth)e^{-im\varphi}\ri) & = \aleph_{m\ell}\le(S_{-m\ellmode}(\varth)e^{-im\varphi}\ri),
  \end{align}
  for all $\ellmode\geq 2$ and all $|m|\leq\ellmode$. From~\eqref{eq:TSopid2},~\eqref{eq:TSvsLL},~\eqref{eq:Smleigenm2} and~\eqref{eq:L4S}, noting that $(\ellmode-1)\ellmode(\ellmode+1)(\ellmode+2) = \ellmode(\ellmode+1)\big(\ellmode(\ellmode+1)-2\big)$, one has that
  \begin{align}\label{eq:alephrel1}
   \aleph_{m\ell}\aleph_{-m\ell}^\ast & = \big((\ellmode-1)\ellmode(\ellmode+1)(\ellmode+2)\big)^2. 
  \end{align}
  Moreover, from~\eqref{eq:TSopid2},~\eqref{eq:TSvsLL},~\eqref{eq:Smleigenm2} and~\eqref{eq:L4S} and integration by parts, one has
  \begin{align}\label{eq:alephrel2}
    \begin{aligned}
      |\aleph_{m\ell}|^2 & = \int_{S^2}\le|\LL_{-1}\LL_{0}\LL_{1}\LL_{2} \le(S_{m\ellmode}(\varth)e^{-im\varphi}\ri)\ri|^2\,\sin\varth\d\varth\d\varphi \\
                         & = \int_{S^2}S_{m\ellmode}(\varth)e^{+im\varphi}\LL_{-1}^\dg\LL_{0}^\dg\LL_{1}^\dg\LL_{2}^\dg\LL_{-1}\LL_{0}\LL_{1}\LL_{2}\le(S_{m\ellmode}(\varth)e^{-im\varphi}\ri)\,\sin\varth\d\varth\d\varphi \\
                         & = \big((\ellmode-1)\ellmode(\ellmode+1)(\ellmode+2)\big)^2.
    \end{aligned}
  \end{align}
  It is immediate from~\eqref{eq:L4S}, the definition~\eqref{eq:defLLn} of the $\LL_n$ and the fact that the $S_{m\ell}$ are real-valued (item~\ref{item:smoothreal}), that $\aleph_{m\ell}\in\mathbb{R}$ for all $\ellmode\geq 2$ and all $|m|\leq\ellmode$. Hence, from~\eqref{eq:alephrel1} and~\eqref{eq:alephrel2}, we deduce that
  \begin{align}\label{eq:alephpm}
    \aleph_{m\ell} = \aleph_{-m\ell} = \pm(\ellmode-1)\ellmode(\ellmode+1)(\ellmode+2),
  \end{align}
  for all $\ellmode\geq 2$ and all $|m|\leq\ellmode$. Up to changing the sign of the $S_{m\ell}$ with $m<0$, one can thus assume that $\aleph_{m\ell} = \aleph_{-m\ell} = (\ellmode-1)\ellmode(\ellmode+1)(\ellmode+2)$ for all $m\neq0$. For $m=0$, multiplying~\eqref{eq:L4S} by $S_{0\ell}$ and integrating on the sphere, we have
  \begin{align*}
    \aleph_{0\ell} & = \int_{S^2}S_{0\ellmode}(\varth)\LL_{-1}\LL_{0}\LL_{1}\LL_{2}S_{0\ellmode}(\varth)\,\sin\varth\d\varth\d\varphi = \int_{S^2}\le|\LL_{1}\LL_{2}S_{0\ellmode}(\varth)\ri|^2\,\sin\varth\d\varth\d\varphi \geq 0.
  \end{align*}
  Thus, from~\eqref{eq:alephpm}, we infer that $\aleph_{0\ell} = (\ellmode-1)\ellmode(\ellmode+1)(\ellmode+2)$ and this finishes the proof of item~\ref{item:TStransfo}.
\end{proof}

\begin{definition}
    For a spin-$+2$-weighted complex valued function $\phi$, we denote by $\phi_{m\ellmode}$ the coefficients of $\phi$ on the basis $\le(S_{m\ellmode}(\varth)e^{+im\varphi}\ri)_{\ellmode\geq 2, |m|\leq \ellmode}$. For a spin-$-2$-weighted complex valued function $\phi$, we denote by $\phi_{m\ellmode}$ the coefficients of $\phi$ on the basis $\le(S_{m\ellmode}(\varth)e^{-im\varphi}\ri)_{\ellmode\geq 2, |m|\leq \ellmode}$.
\end{definition}

\begin{definition}
We define the conjugacy operators $\CC,\CC^\ast$ from spin-$+2$ to spin-$-2$ weighted (resp. spin-$-2$ to spin-$+2$) functions by their action on the basis elements as follows:
    \begin{align*}
      \CC &: \le(S_{m\ellmode}(\varth)e^{+im\varphi}\ri) \mapsto \le(S_{-m\ellmode}(\varth)e^{+im\varphi}\ri), & \CC^\ast &: \le(S_{m\ellmode}(\varth)e^{-im\varphi}\ri) \mapsto \le(S_{-m\ellmode}(\varth)e^{-im\varphi}\ri).
    \end{align*}
\end{definition}
Note that we have
    \begin{align*}
      \CC\CC^\ast & = \mathrm{Id}, & \CC^\ast\CC & = \mathrm{Id}.
    \end{align*}
 \begin{corollary}
 We can write 
    \begin{align}\label{eq:TSconjug}
      \LL_{-1}^\dg\LL_{0}^\dg\LL_{1}^\dg\LL_{2}^\dg & = \LL^{[-2]}(\LL^{[-2]}-2)\CC, & \LL_{-1}\LL_{0}\LL_{1}\LL_{2} & = \LL^{[+2]}(\LL^{[+2]}-2)\CC^\ast.
    \end{align}
\end{corollary}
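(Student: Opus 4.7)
The plan is to verify both identities by evaluating each side on a general element of the Hilbert basis from Proposition~\ref{lem:Hilbertbasis} and invoking linearity plus completeness of the basis to conclude equality as operators between the relevant spin-weighted function spaces.

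First I would handle the identity $\LL_{-1}^\dg\LL_0^\dg\LL_1^\dg\LL_2^\dg = \LL^{[-2]}(\LL^{[-2]}-2)\CC$. Applied to the spin-$+2$ basis element $S_{m\ell}(\vartheta)e^{+im\varphi}$, the left-hand side evaluates by the first line of~\eqref{eq:TSangpmm} to $(\ell-1)\ell(\ell+1)(\ell+2)\,S_{-m\ell}(\vartheta)e^{+im\varphi}$. For the right-hand side, $\CC$ produces $S_{-m\ell}(\vartheta)e^{+im\varphi}$, which, setting $m':=-m$, coincides with $S_{m'\ell}(\vartheta)e^{-im'\varphi}$ — an element of the spin-$-2$ basis. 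On this element $\LL^{[-2]}$ acts by multiplication by $\ell(\ell+1)$ via item~\ref{item:eigenm2} of Proposition~\ref{lem:Hilbertbasis}, so $\LL^{[-2]}(\LL^{[-2]}-2)$ multiplies by $\ell(\ell+1)\bigl(\ell(\ell+1)-2\bigr) = (\ell-1)\ell(\ell+1)(\ell+2)$, which matches the left-hand side.

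The second identity $\LL_{-1}\LL_0\LL_1\LL_2 = \LL^{[+2]}(\LL^{[+2]}-2)\CC^\ast$ is treated symmetrically: evaluate both sides on the spin-$-2$ basis element $S_{m\ell}(\vartheta)e^{-im\varphi}$, use the second line of~\eqref{eq:TSangpmm} for the LHS, re-identify the image of $\CC^\ast$ via $S_{-m\ell}(\vartheta)e^{-im\varphi} = S_{m'\ell}(\vartheta)e^{+im'\varphi}$ with $m' = -m$ so that it lies in the spin-$+2$ basis, and then apply the eigenvalue relation~\eqref{eq:Smleigenp2} for $\LL^{[+2]}$. Both sides again equal $(\ell-1)\ell(\ell+1)(\ell+2)\,S_{-m\ell}(\vartheta)e^{-im\varphi}$.

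There is no analytical obstacle here: the corollary is a bookkeeping consequence of the eigenvalue relations and Teukolsky-Starobinsky transformation formulas already established in Proposition~\ref{lem:Hilbertbasis}. The only point requiring care is the index convention — because $\CC$ and $\CC^\ast$ flip the sign of $m$ while preserving the $\varphi$-exponential, one must re-identify the image as a basis element of the opposite spin weight (with $m$ replaced by $-m$) in order to legitimately invoke the eigenvalue formulas~\eqref{eq:Smleigenp2}--\eqref{eq:Smleigenm2}.
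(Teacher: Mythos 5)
Your proof is correct and follows essentially the same route as the paper's (one-line) proof: evaluate both sides on the Hilbert basis of Proposition~\ref{lem:Hilbertbasis}, apply the transformation formulas~\eqref{eq:TSangpmm} and the eigenvalue relations~\eqref{eq:Smleigenp2}--\eqref{eq:Smleigenm2}, and use $\ell(\ell+1)\bigl(\ell(\ell+1)-2\bigr)=(\ell-1)\ell(\ell+1)(\ell+2)$. The care you take with the re-indexing $m\mapsto -m$ under $\CC$ and $\CC^\ast$ is the right point to flag, and your bookkeeping there is accurate.
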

\begin{proof}
Use the relations~\eqref{eq:TSangpmm}, \eqref{eq:Smleigenp2}, \eqref{eq:Smleigenm2} and that $\ellmode(\ellmode+1)(\ellmode(\ellmode+1)-2) = (\ellmode-1)\ellmode(\ellmode+1)(\ellmode+2)$.
\end{proof}

\begin{definition}[Symmetrised and anti-symmetrised quantities]\label{def:sa}
  For all spin-$\pm2$-weighted complex valued functions $\phi^{[\pm2]}$, define\footnote{Note that $(\CC\phi)^\ast = \CC^\ast\phi^\ast$.}
  \begin{align}\label{eq:defsa}
    \begin{aligned}
    \phi^{[+2]}_s & := \half \phi^{[+2]} + \half (\CC \phi^{[+2]})^\ast, & \phi_a^{[+2]} & := \half \phi^{[+2]} - \half (\CC \phi^{[+2]})^\ast,\\
    \phi^{[-2]}_s & := \half \phi^{[-2]} + \half \CC (\phi^{[-2]})^\ast, & \phi_a^{[-2]} & := \half \phi^{[-2]} - \half \CC (\phi^{[-2]})^\ast.
    \end{aligned}
  \end{align}
\end{definition}
\begin{remark}\label{rem:divcurlsa}
  The symmetrised and anti-symmetrised decompositions~\eqref{eq:defsa} have a direct interpretation in terms of Hodge decomposition of symmetric traceless $2$-tensors. Namely, using the formulae from Section \ref{sec:TSconstraints}, from Lemma~\ref{lem:angTS} and~\eqref{eq:TSconjug}, for all $S$-tangent symmetric traceless $2$-tensors $\Phi$, a direct calculation gives
  \begin{align*}
    \begin{aligned}
      \LL^{[+2]}(\LL^{[+2]}-2)\le(\Phi(\mathfrak{m},\mathfrak{m})\ri)_s & = 2r^2\LL_{-1}\LL_0\Divd\Divd\Phi, \\
      \LL^{[+2]}(\LL^{[+2]}-2)\le(\Phi(\mathfrak{m},\mathfrak{m})\ri)_a & = 2ir^2\LL_{-1}\LL_0\Curld\Divd\Phi,
    \end{aligned}
  \end{align*}
  where we recall that $\mathfrak{m}=\frac{1}{r}\le(\pr_\varth+\frac{i}{\sin\varth}\pr_\varphi\ri)$.
\end{remark}

For the next Lemma we recall the definition of a solution to the Teukolsky problem, Definition \ref{def:teukolskyp}. 

\begin{lemma}\label{lem:TSrewritesa}
  The following two items are equivalent.
  \begin{itemize}
  \item $(\alt^{[+2]},\alt^{[-2]})$ are solutions to the Teukolsky problem 
  \item both $(\alt^{[+2]}_s,\alt^{[-2]}_s)$ and $(\alt^{[+2]}_a,\alt^{[-2]}_a)$ are solutions to the Teukolsky problem
  \end{itemize}
  The following two items are equivalent.
  \begin{itemize}
  \item $(\alt^{[+2]},\alt^{[-2]})$ satisfy the Teukolsky-Starobinsky identities~\eqref{eq:TSidog},
  \item $(\alt^{[+2]}_s,\alt^{[-2]}_s)$ satisfy the following identities
    \begin{subequations}\label{eq:TSids}
      \begin{align}
        w^2(w^{-1}L)^4(\alt^{[-2]}_s)^\ast & = +\LL^{[+2]}(\LL^{[+2]}-2)\alt^{[+2]}_s-12M\pr_t\alt^{[+2]}_s,\label{eq:TSidv2p2s}\\
        w^2(w^{-1}\Lb)^4(\alt^{[+2]}_s)^\ast & = +\LL^{[-2]}(\LL^{[-2]}-2)\alt^{[-2]}_s+12M\pr_t\al_s^{[-2]}, \label{eq:TSidv2m2s}
      \end{align}
    \end{subequations}
    and $(\alt^{[+2]}_a,\alt^{[-2]}_a)$ satisfy the following identities
    \begin{subequations}\label{eq:TSida}
      \begin{align}
        w^2(w^{-1}L)^4(\alt^{[-2]}_a)^\ast & = -\LL^{[+2]}(\LL^{[+2]}-2)\alt^{[+2]}_a-12M\pr_t\alt^{[+2]}_a,\label{eq:TSidv2p2a}\\
        w^2(w^{-1}\Lb)^4(\alt^{[+2]}_a)^\ast & = -\LL^{[-2]}(\LL^{[-2]}-2)\alt^{[-2]}_a+12M\pr_t\alt_a^{[-2]}. \label{eq:TSidv2m2a}
      \end{align}
    \end{subequations}
  \end{itemize}
\end{lemma}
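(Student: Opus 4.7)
The plan is to recognise the symmetrisation and anti-symmetrisation as eigenspace projections associated with two commuting involutions, and then to check that every operator involved in the Teukolsky problem and in~\eqref{eq:TSidog} commutes with these involutions. Concretely, I would introduce on spin-$+2$ functions the involution $T\phi := (\CC\phi)^\ast$ and on spin-$-2$ functions the involution $T'\phi := \CC\phi^\ast$. The footnoted identity $(\CC\psi)^\ast = \CC^\ast\psi^\ast$ together with $\CC^\ast\CC = \mathrm{Id} = \CC\CC^\ast$ gives $T^2 = (T')^2 = \mathrm{Id}$, and comparison with Definition~\ref{def:sa} identifies $\alt^{[+2]}_s,\alt^{[+2]}_a$ (resp.~$\alt^{[-2]}_s,\alt^{[-2]}_a$) as the projections onto the $\pm 1$-eigenspaces of $T$ (resp.~$T'$). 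The crucial structural point is that $T$ and $T'$ commute with $L,\Lb,\pr_r,\pr_t$ (since $\CC,\CC^\ast$ act only on angular variables) and, thanks to Proposition~\ref{lem:Hilbertbasis}, with $\LL^{[+2]}$ and $\LL^{[-2]}$ respectively; they also preserve regularity at $\mathcal{H}^+$ and at $\mathcal{I}$ for the same reason.

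For the first equivalence these commutation properties immediately give $\Teuk^{[+2]}T = T\Teuk^{[+2]}$ and $\Teuk^{[-2]}T' = T'\Teuk^{[-2]}$, so applying the projections $\half(1\pm T)$ and $\half(1\pm T')$ to the Teukolsky equations produces the analogous equations for the symmetric and anti-symmetric parts. What remains is to verify that the coupled boundary conditions~\eqref{eq:TeukBC} survive the pairing: writing
\begin{align*}
\alt^{[+2]}_s - \bigl(\alt^{[-2]}_s\bigr)^\ast = \half \bigl(\alt^{[+2]} - (\alt^{[-2]})^\ast\bigr) + \half\,\CC^\ast\bigl((\alt^{[+2]})^\ast - \alt^{[-2]}\bigr),
\end{align*}
the first summand vanishes at infinity by~\eqref{eq:TeukBCDirichlet}, and the second equals $\CC^\ast$ applied to the complex conjugate of $-(\alt^{[+2]} - (\alt^{[-2]})^\ast)$, and hence vanishes for the same reason. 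Differentiating in $r$ yields the analogous statement for~\eqref{eq:TeukBCNeumann}, and the same computation works verbatim for the anti-symmetric pairing. The reverse implication then follows from $\alt^{[\pm 2]} = \alt^{[\pm 2]}_s + \alt^{[\pm 2]}_a$ by linearity.

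For the second equivalence the key idea is to use~\eqref{eq:TSconjug} to rewrite the four-$\LL$ product on the right-hand side of~\eqref{eq:TSidp2} as
\begin{align*}
\LL_{-1}\LL_0\LL_1\LL_2 (\alt^{[+2]})^\ast = \LL^{[+2]}(\LL^{[+2]} - 2)\,\CC^\ast(\alt^{[+2]})^\ast = \LL^{[+2]}(\LL^{[+2]}-2)\,T\alt^{[+2]},
\end{align*}
so that~\eqref{eq:TSidp2} takes the compact form
\begin{align*}
w^2(w^{-1}L)^4(\alt^{[-2]})^\ast = \LL^{[+2]}(\LL^{[+2]}-2)\,T\alt^{[+2]} - 12M\,\pr_t\alt^{[+2]},
\end{align*}
in which every operator commutes with $T$. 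Moreover, complex conjugation intertwines $T'$ on spin-$-2$ functions with $T$ on spin-$+2$ functions, so that $(\alt^{[-2]}_{s})^\ast$ and $(\alt^{[-2]}_{a})^\ast$ are precisely the $\pm 1$-eigenvectors of $T$ appearing on the left-hand side. Applying the projections $\half(1\pm T)$ and using $T\alt^{[+2]}_s = \alt^{[+2]}_s$, $T\alt^{[+2]}_a = -\alt^{[+2]}_a$ then yields~\eqref{eq:TSidv2p2s} and~\eqref{eq:TSidv2p2a} respectively, and conversely, adding these two identities recovers~\eqref{eq:TSidp2}. The conjugate identity~\eqref{eq:TSidm2} decomposes into~\eqref{eq:TSidv2m2s}--\eqref{eq:TSidv2m2a} by the same argument with $L$ and $\Lb$ swapped and with $T$ replaced by $T'$.

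The only non-trivial bookkeeping in the whole argument is keeping track of the interplay of $\CC$, $\CC^\ast$, complex conjugation and spin weights; once the algebraic identity $(\CC\phi)^\ast = \CC^\ast\phi^\ast$ and the eigenvalue relations for $T$ and $T'$ on $\alt^{[\pm 2]}_{s/a}$ are in hand, every step reduces to linear algebra.
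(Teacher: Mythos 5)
Your argument is correct and is exactly the paper's proof (which is stated there in two lines): the first equivalence follows because the antilinear involutions $\phi\mapsto(\CC\phi)^\ast$ and $\phi\mapsto\CC\phi^\ast$ commute with the Teukolsky operators and preserve the boundary conditions and regularity, and the second follows by rewriting the angular operators in~\eqref{eq:TSidog} via~\eqref{eq:TSconjug} and applying the eigenprojections $\tfrac12(1\pm T)$. The only blemish is an inconsequential sign: $(\alt^{[+2]})^\ast-\alt^{[-2]}$ equals $+\bigl(\alt^{[+2]}-(\alt^{[-2]})^\ast\bigr)^\ast$, not its negative, which of course does not affect the vanishing at infinity.
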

\begin{proof}
  The first equivalence is immediate using that the composition of $\mathcal{C}$ and complex conjugation commutes with the Teukolsky equation. The second equivalence is a direct consequence of~\eqref{eq:TSconjug}.
\end{proof}

\subsubsection{Regge-Wheeler Teukolsky-Starobinsky constraints}
We have the following proposition which translates the Teukolsky-Starobinsky constraints into conditions on the Dirichlet and ``Robin'' Regge-Wheeler quantities. 
\begin{proposition}\label{prop:RWTSsimple}
  Let $\alt^{[\pm2]}$ be two solutions of the Teukolsky problem~\eqref{eq:Teuk} on $\{t^\star\geq t^\star_0\}$. Let $\Psi^{[\pm2]},\psi^{[\pm2]}$ be the Chandrasekhar transformations of $\alt^{[\pm2]}$ and $\Psi^D,\Psi^R$ be the associated Regge-Wheeler quantities defined by~\eqref{eq:defPsiDR}.
  The following two items are equivalent.
  \begin{itemize}
  \item $\alt^{[+2]}_s$ and $\alt^{[-2]}_s$ satisfy the Teukolsky-Starobinsky relations~\eqref{eq:TSids},
  \item $\Psi^D_s=0$.
  \end{itemize}
  The following two items are equivalent.
    \begin{itemize}
  \item $\alt^{[+2]}_a$ and $\alt^{[-2]}_a$ satisfy the Teukolsky-Starobinsky relations~\eqref{eq:TSida},
  \item $\Psi^R_a=0$.
  \end{itemize}
\end{proposition}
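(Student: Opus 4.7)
The plan is to establish both equivalences by the same strategy: algebraic manipulations based on Lemma~\ref{lem:compositionChandras} combined with regularity and boundary-condition arguments that promote vanishing under $(w^{-1}L)^2$ to outright vanishing. For the first equivalence the reverse direction is direct algebra: starting from $\Psi^D_s=0$, i.e.\ $\Psi^{[+2]}_s = (\Psi^{[-2]}_s)^\ast$, I apply $w^2(w^{-1}L)^2$ to both sides. By~\eqref{eq:TSidp2bis} for the Teukolsky solution $\alt^{[+2]}_s$ the LHS becomes $(\LL(\LL-2)-12M\pr_t)\alt^{[+2]}_s$, while the RHS equals $w^2(w^{-1}L)^4(\alt^{[-2]}_s)^\ast$; equating yields~\eqref{eq:TSidv2p2s}. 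Conjugating $\Psi^D_s=0$ and applying $w^2(w^{-1}\Lb)^2$ via~\eqref{eq:TSidm2bis} yields~\eqref{eq:TSidv2m2s}. For the forward direction the same manipulations reversed give $w^2(w^{-1}L)^2\Psi^D_s=0$, so $w^{-1}L\Psi^D_s$ is constant along each integral curve of $L$. Regularity of $\Psi^D_s$ at infinity makes $r^2 L\Psi^D_s$ bounded on $\mathcal{I}$; combined with $w\sim k^2 r^2$ near $\mathcal{I}$, this constant must vanish, so $L\Psi^D_s\equiv 0$ and $\Psi^D_s$ depends only on $u=t-r^\star$ and the angular variables. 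The Dirichlet condition~\eqref{eq:RWBCdecoDirichlet} then forces $\Psi^D_s\equiv 0$.

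The second equivalence follows the same outline with $\Psi^R_a$ in place of $\Psi^D_s$, but the algebra is more intricate because of the $\tfrac{12M}{\LL(\LL-2)}\pr_t$ modification in the definition of $\Psi^R$. In the forward direction a longer calculation shows that the $\pm 12M\pr_t$ contributions coming from Lemma~\ref{lem:compositionChandras} cancel precisely against those produced by the modification term, yielding $w^2(w^{-1}L)^2\Psi^R_a=0$. The null-propagation argument of the previous paragraph then shows that $\Psi^R_a$ depends only on $u$. Inserting this into the Robin boundary condition~\eqref{eq:RWBCdecoRobin} reduces it, on each angular mode, to the constant-coefficient ODE $2f''(u)-\tfrac{\LL(\LL-2)}{6M}f'(u)+k^2\LL f(u)=0$; since the sum $\LL(\LL-2)/(12M)$ and product $k^2\LL/2$ of the roots of the characteristic polynomial are both strictly positive, all nonzero solutions grow exponentially as $u\to+\infty$, and regularity of $\De^{-1}\Lb\Psi^R_a$ at $\mathcal{H}^+$ then forces $\Psi^R_a\equiv 0$. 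For the reverse direction, I invert $\Psi^R_a=0$ to rewrite $\Psi^{[+2]}_a$ and $(\Psi^{[-2]}_a)^\ast$ as first-order-in-$\pr_t$ combinations of $\Psi^D_a$ and apply $w^2(w^{-1}L)^2$ using Lemma~\ref{lem:compositionChandras}; this places the TS deviation for~\eqref{eq:TSidv2p2a} in the kernel of $\LL(\LL-2)-12M\pr_t$. Actual vanishing is then obtained by combining this with the parallel $w^2(w^{-1}\Lb)^2$ computation for~\eqref{eq:TSidv2m2a} and exploiting the strict positivity of $\LL(\LL-2)$ on spin-$(+2)$ modes (eigenvalues $\geq 24$) together with the conformal coupling of $\alt^{[\pm 2]}$ at infinity~\eqref{eq:TeukBC} to exclude residual exponentially growing Teukolsky-Starobinsky-mode contributions.

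The most delicate point is expected to be this final kernel-elimination in the reverse direction of the second equivalence: unlike the Dirichlet case of the first equivalence, where the boundary condition annihilates any $u$-only residue immediately, here one cannot conclude vanishing from a single derivative identity and must instead combine the two Chandrasekhar compositions with the spectral structure of the Teukolsky-Starobinsky operators and the coupled Teukolsky boundary conditions. The algebraic skeleton of the proof is otherwise a clean sequence of applications of Lemma~\ref{lem:compositionChandras} and complex conjugation, entirely parallel between the symmetric and antisymmetric cases.
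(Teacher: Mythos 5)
Your forward implications and the first reverse implication are correct, but they take a partly different route from the paper. For the implications ``TS identities $\Rightarrow \Psi^D_s=0$'' and ``TS identities $\Rightarrow \Psi^R_a=0$'' the paper derives \emph{both} $w^2(w^{-1}L)^2$ and $w^2(w^{-1}\Lb)^2$ of the relevant quantity to be zero and then concludes purely algebraically via Lemma~\ref{lem:compositionChandrasbis}: the identities \eqref{eq:compositionRW} give $(\LL(\LL-2)\mp 12M\pr_t)\Psi=0$ simultaneously, whence $\pr_t\Psi=0$ and $\LL(\LL-2)\Psi=0$, and invertibility of $\LL(\LL-2)$ finishes. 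You instead use only one of the two Chandrasekhar compositions and replace the algebra by a null-transport argument ($L(w^{-1}L\Psi)=0$, the constant vanishes at $\mathcal{I}$, hence $\Psi$ is a function of $u$ alone) followed by the boundary condition \eqref{eq:RWBCdecoDirichlet} resp.\ \eqref{eq:RWBCdecoRobin} plus horizon regularity. This is a legitimate alternative (indeed, once $L\Psi=0$ the Regge--Wheeler equation itself already forces $\Psi=0$ because the potential $6M/r$ is $r$-dependent while $\Psi$ is not, so you could shorten your ODE discussion), at the cost of invoking regularity at $\mathcal{I}$ and $\mathcal{H}^+$ where the paper needs none.

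The genuine gap is in the reverse direction of the second equivalence, ``$\Psi^R_a=0\Rightarrow$ \eqref{eq:TSida}''. You correctly reduce to showing that the Teukolsky--Starobinsky deviations $A^{[+2]}_a$ and $A^{[-2]}_a$ (which lie in the kernels of $\LL(\LL-2)-12M\pr_t$ and $\LL(\LL-2)+12M\pr_t$ respectively, as in \eqref{eq:equivPsiRa}) must vanish, but your proposed ``kernel elimination'' is not an argument. A kernel element of $\LL(\LL-2)-12M\pr_t$ is, on each angular mode, $e^{+\Lambda_\ell t/(12M)}$ times an \emph{arbitrary} function of $r$ and the angles, so ``strict positivity of $\LL(\LL-2)$'' and the boundary coupling \eqref{eq:TeukBC} cannot by themselves exclude these contributions. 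The paper's argument has two steps you are missing: first, since $A^{[+2]}_a$ and $(A^{[-2]}_a)^\ast$ evolve with \emph{opposite} exponential rates while the coupled boundary conditions \eqref{eq:BCAaux} force their difference and the difference of their $\Lb$-, $L$-derivatives to vanish at $\mathcal{I}$, one concludes that each of them has vanishing Dirichlet \emph{and} Neumann trace at $\mathcal{I}$ individually; second — and this is the essential missing ingredient — one must invoke a unique continuation statement for the Teukolsky system from the conformal boundary (the $A^{[\pm2]}_a$ do satisfy the Teukolsky equations) to propagate this vanishing into the bulk. Without unique continuation, vanishing boundary data at $\mathcal{I}$ does not kill a bulk solution, and boundedness of the exponentially growing branch is not available a priori (it is essentially the content of the decay theorems, so assuming it would be circular). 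You flag this step as delicate but do not supply it; as written the implication is not proved.
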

\begin{proof}
  We start with the first equivalence. Using the definition of the Chandrasekhar transformation~\eqref{eq:Chandra} and the definition~\eqref{eq:defPsiDR} of $\Psi^D$ and the formula~\eqref{eq:TSidnotTS}, we have
  \begin{align*}
    \eqref{eq:TSids}~ & \Leftrightarrow
                        \begin{cases}
                          \le(\LL^{[+2]}(\LL^{[+2]}-2) - 12M\pr_t\ri)\alt^{[+2]}_s= w^2(w^{-1}L)^2\big(\Psi^{[-2]}_s\big)^\ast \\
                          \le(\LL^{[-2]}(\LL^{[-2]}-2) + 12M\pr_t\ri)\alt^{[-2]}_s= w^2(w^{-1}\Lb)^2\big(\Psi^{[+2]}_s\big)^\ast
                        \end{cases}\\
                      & \Leftrightarrow
                        \begin{cases}
                          \le(\LL^{[+2]}(\LL^{[+2]}-2) - 12M\pr_t\ri)\alt^{[+2]}_s = w^2(w^{-1}L)^2\Psi_s^{[+2]} - w^2(w^{-1}L)^2\Psi_s^D\\
                          \le(\LL^{[-2]}(\LL^{[-2]}-2) + 12M\pr_t\ri)\alt^{[-2]}_s = w^2(w^{-1}\Lb)^2\Psi^{[-2]}_s + w^2(w^{-1}\Lb)^2\big(\Psi^D_s\big)^\ast
                        \end{cases}\\
                      & \Leftrightarrow
                        \begin{cases}
                          \le(\LL^{[+2]}(\LL^{[+2]}-2) - 12M\pr_t\ri)\alt^{[+2]}_s = w^2(w^{-1}L)^2(w^{-1}\Lb)^2\alt^{[+2]}_s - w^2(w^{-1}L)^2\Psi^D_s\\
                          \le(\LL^{[-2]}(\LL^{[-2]}-2) + 12M\pr_t\ri)\alt^{[-2]}_s = w^2(w^{-1}\Lb)^2(w^{-1}L)^2\alt^{[-2]}_s + w^2(w^{-1}\Lb)^2\big(\Psi^D_s\big)^\ast
                        \end{cases}\\
                      & \Leftrightarrow
                        \begin{cases}
                          0 = w^2(w^{-1}L)^2\Psi^D_s\\
                          0 = w^2(w^{-1}\Lb)^2\Psi^D_s.
                        \end{cases}
  \end{align*}
  Using formula~\eqref{eq:compositionRW}, we have
  \begin{align*}
    \begin{cases}
      0  = w^2(w^{-1}L)^2\Psi^D_s\\
      0  = w^2(w^{-1}\Lb)^2\Psi^D_s
    \end{cases} & \implies 0 = (w^{-1}\Lb)^2(w^{-1}L)^2\Psi^D_s = (w^{-1}L)^2(w^{-1}\Lb)^2\Psi^D_s \\
                & \implies 0 = \le(\LL(\LL-2) - 12M\pr_t\ri)\Psi^D_s = \le(\LL(\LL-2) + 12M\pr_t\ri)\Psi^D_s \\
                & \implies 0 = \Psi^D_s,
  \end{align*}
  and the converse is evident.
  We turn to the second equivalence. Using again~\eqref{eq:Chandra},~\eqref{eq:defPsiDR} and~\eqref{eq:TSidnotTS}, we have
  \begin{align}\label{eq:equivPsiRa}
    \begin{aligned}
      \Psi^R_a = 0  \quad \Leftrightarrow 
      \begin{cases}
        w^2(w^{-1}L)^2\Psi^R_a= 0 \\
        w^2(w^{-1}\Lb)^2\Psi^R_a = 0 \\
      \end{cases}    \Leftrightarrow
      \begin{cases}
        \le(\LL^{[+2]}(\LL^{[+2]}-2)-12M\pr_t\ri)A^{[+2]}_a = 0 \\
        \le(\LL^{[-2]}(\LL^{[-2]}-2)+12M\pr_t\ri)A^{[-2]}_a = 0,
      \end{cases}
    \end{aligned}
  \end{align}
  with
  \begin{align*}
    A^{[+2]}_a & := \le(\LL^{[+2]}(\LL^{[+2]}-2)+12M\pr_t\ri)\alt^{[+2]}_a+ w^2(w^{-1}L)^2(\Psi^{[-2]}_a)^\ast,\\
    A^{[-2]}_a & := \le(\LL^{[-2]}(\LL^{[-2]}-2)-12M\pr_t\ri)\alt^{[-2]}_a  + w^2(w^{-1}\Lb)^2(\Psi^{[+2]}_a)^\ast.
  \end{align*}
  Now, the boundary conditions~\eqref{eq:TeukBC}, Teukolsky equations~\eqref{eq:Teuk} and their consequences (see the proof of Proposition~\ref{prop:Chandra}, in particular~\eqref{eq:PsiBCLLb}), imply that
  \begin{align}\label{eq:BCAaux}
    \begin{aligned}
      A^{[+2]}_a - (A^{[-2]}_a)^\ast & \xrightarrow{r\to+\infty}0, &  \Lb(A^{[+2]}_a) - L(A^{[-2]}_a)^\ast & \xrightarrow{r\to+\infty}0.  
    \end{aligned}
  \end{align}
  Thus, combining~\eqref{eq:equivPsiRa} and~\eqref{eq:BCAaux}, one has that
  \begin{align*}
    \Psi^R_a =0  \implies A^{[+2]}_a,~(A^{[-2]}_a)^\ast,~\Lb(A^{[+2]}_a),~L(A^{[-2]}_a)^\ast \xrightarrow{r\to+\infty}0 \implies A^{[+2]}_a = (A^{[-2]}_a)^\ast = 0, 
  \end{align*}
  where the last step follows by unique continuation.\footnote{More precisely, we use the fact that if a solution to the Teukolsky system satisfies both Dirichlet and Neumann conditions on both quantities at the conformal boundary, then the solution necessarily vanishes. While one could invoke general unique continuation results (see for instance Section 6 of \cite{Hol.Sha16} where the Teukolsky system is discussed on pure AdS), in our simple spherically symmetric setting one could decompose the solution into spherical modes to produce a $1+1$-PDE system for which local energy estimates imply the result. \label{footnoteUC}} The converse is immediate by~\eqref{eq:equivPsiRa} and this finishes the proof of the proposition.
\end{proof}
\begin{remark}
  Note that the proof of the first equivalence of Proposition~\ref{prop:RWTSsimple} uses only the Teukolsky equations~\eqref{eq:Teuk} and algebraic manipulations. In the proof of the second equivalence of Proposition~\ref{prop:RWTSsimple}, the first step~\eqref{eq:equivPsiRa} uses only the Teukolsky equations~\eqref{eq:Teuk} and algebraic manipulations. In general (\emph{i.e.} without imposing boundary conditions), it implies that the Teukolsky-Starobinsky identity~\eqref{eq:TSida} is equivalent to $\Psi^R_a=0$ modulo ``Teukolsky-Starobinsky modes'' (see Definition~\ref{def:TSmodes}).
\end{remark}

From Proposition~\ref{prop:RWTSsimple} we deduce the following corollary, which gives conditions so that the constructed Teukolsky quantities $\alt^{[\pm2],r}$ satisfy the Teukolsky-Starobinsky relations.
\begin{corollary}\label{cor:RWTS}
  Let $\Psi^D,\Psi^R$ be two solutions of the decoupled Regge-Wheeler problem~\eqref{sys:RWdeco}. Let $\Psi^{[\pm2]}$ be defined as in~\eqref{eq:defPsiDRinv} and $\alt^{[+2],r}$ and $\alt^{[-2],r}$ be the reverse Chandrasekhar transformations of $\Psi^{[\pm2]}$, as defined in~\eqref{eq:RevChandra}. Then, if $\Psi^D_s=\Psi^R_a=0$, $\alt^{[\pm2,r]}$ satisfy the Teukolsky-Starobinsky relations~\eqref{eq:TSidog}.
\end{corollary}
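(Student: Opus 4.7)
The plan is to reduce the conclusion to Proposition~\ref{prop:RWTSsimple} and Lemma~\ref{lem:TSrewritesa}, this time applied to the reverse-transformed quantities $\alt^{[\pm 2],r}$. By Proposition~\ref{prop:RevChandra}, $(\alt^{[+2],r},\alt^{[-2],r})$ is a bona fide solution of the Teukolsky problem. By Lemma~\ref{lem:TSrewritesa}, to establish the identities~\eqref{eq:TSidog} for these it suffices to verify the symmetrised identities~\eqref{eq:TSids} and the antisymmetrised identities~\eqref{eq:TSida} separately, and Proposition~\ref{prop:RWTSsimple} applied to $\alt^{[\pm 2],r}$ translates these in turn into the vanishing of $\Psit^D_s$ and $\Psit^R_a$, where $\Psit^{[+2]}:=(w^{-1}\Lb)^2\alt^{[+2],r}$ and $\Psit^{[-2]}:=(w^{-1}L)^2\alt^{[-2],r}$ are the (forward) Chandrasekhar transformations of $\alt^{[\pm 2],r}$, and $\Psit^D, \Psit^R$ are built from them through~\eqref{eq:defPsiDR}. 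The task thus reduces to showing $\Psit^D_s = \Psit^R_a = 0$ under the hypothesis $\Psi^D_s = \Psi^R_a = 0$.

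The key algebraic step will be to express $\Psit^{[\pm 2]}$ explicitly in terms of the original $\Psi^{[\pm 2]}$. Plugging the definitions~\eqref{eq:RevChandra} into the Chandrasekhar transformations, commuting the Teukolsky-Starobinsky operators $\LL(\LL-2)\pm12M\pr_t$ (which are purely $t$- and angle-dependent) past all $r$-dependent factors, and applying Lemma~\ref{lem:compositionChandrasbis} --- available because $\Psi^{[\pm 2]}$ solve the Regge-Wheeler equations by Proposition~\ref{prop:RWdeco}, and because $(\Psi^{[-2]})^\ast$ likewise solves the spin-$+2$ Regge-Wheeler equation after conjugating $\LL^{[-2]}$ to $\LL^{[+2]}$ --- I expect the clean identities
\[
\Psit^{[+2]} = \Omega\,\Psi^{[+2]}, \qquad \big(\Psit^{[-2]}\big)^\ast = \Omega\,\big(\Psi^{[-2]}\big)^\ast, \qquad \Omega := \bigl(\LL(\LL-2)\bigr)^2 - 144 M^2 \pr_t^2.
\]
Substituting into~\eqref{eq:defPsiDR} and observing that $\Omega$ commutes with the operator $12M\pr_t/(\LL(\LL-2))$ appearing in the definition of $\Psi^R$ yields $\Psit^D = \Omega\,\Psi^D$ and $\Psit^R = \Omega\,\Psi^R$.

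To conclude, one verifies that $\Omega$ commutes with the conjugacy operator $\CC$ and with complex conjugation: $\CC$ permutes basis elements $S_{m\ellmode}(\varth)e^{\pm im\varphi}$ within a fixed $\ellmode$-eigenspace of $\LL^{[\pm 2]}$ without changing the $\ell(\ell+1)$ eigenvalue, and $\Omega$ has real coefficients and involves only $\LL(\LL-2)$ and $\pr_t$. Hence the symmetrisation and antisymmetrisation of~\eqref{eq:defsa} commute with $\Omega$, producing $\Psit^D_s = \Omega\,\Psi^D_s$ and $\Psit^R_a = \Omega\,\Psi^R_a$. The hypotheses $\Psi^D_s = \Psi^R_a = 0$ then force $\Psit^D_s = \Psit^R_a = 0$, and the corollary follows from the chain Proposition~\ref{prop:RWTSsimple} $\Rightarrow$ Lemma~\ref{lem:TSrewritesa} outlined above. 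The main (modest) technical obstacle is the careful bookkeeping of spin weights and of the sign of the $12M\pr_t$ term when passing between spin-$+2$ and spin-$-2$ quantities via complex conjugation; once this is organised systematically (e.g.\ by stating the minus-spin identity for $(\Psit^{[-2]})^\ast$ rather than $\Psit^{[-2]}$ itself), Lemma~\ref{lem:compositionChandrasbis} delivers the composition identity immediately.
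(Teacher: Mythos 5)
Your argument is correct and follows essentially the same route as the paper's proof: both rest on the observation (via Lemma~\ref{lem:compositionChandrasbis} and commuting the Teukolsky--Starobinsky operators through) that the forward Chandrasekhar transformations of $\alt^{[\pm2],r}$ equal $\bigl(\LL^2(\LL-2)^2-144M^2\pr_t^2\bigr)\Psi^{[\pm2]}$, followed by an application of Proposition~\ref{prop:RWTSsimple}. You merely spell out the details the paper leaves implicit (that $(\alt^{[\pm2],r})$ is a genuine solution of the Teukolsky problem via Proposition~\ref{prop:RevChandra}, and that the operator $\Omega$ commutes with $\CC$, complex conjugation and hence with the (anti)symmetrisation), all of which check out.
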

\begin{proof}
  From the formulas~\eqref{eq:compositionRW} and the definition of $\alt^{[\pm2],r}$, we have that
  \begin{align*}
    \le(\LL^2(\LL-2)^2-144M^2\pr_t^2\ri)\Psi^{[+2]} & = (w^{-1}\Lb)(w^{-1}\Lb)\alt^{[+2],r}, \\
    \le(\LL^2(\LL-2)^2-144M^2\pr_t^2\ri)\Psi^{[-2]} & = (w^{-1}L)(w^{-1}L)\alt^{[-2],r}.    
  \end{align*}
  That is, $\le(\LL^2(\LL-2)^2-144M^2\pr_t^2\ri)\Psi^{[\pm2]}$ are the Chandrasekhar transformations of $\alt^{[\pm2],r}$ as defined in~\eqref{eq:Chandra}.
 The result of the corollary then follows from Proposition~\ref{prop:RWTSsimple}.
\end{proof}

\begin{remark}
One may refine Corollary~\ref{cor:RWTS} to obtain an equivalence, \emph{i.e.} that the Teukolsky-Starobinsky identities are satisfied if and only if $\Psi^D_s,\Psi^R_a$ are Regge-Wheeler Teukolsky-Starobinsky modes. We will not require this refinement here. 
\end{remark}

\section{The main theorem}\label{sec:lowerboundintro}
Using the definitions of Section \ref{sec:preliminaries}, we are ready to state the main theorem. We define the following energy-type norm for $\alt^{[\pm2]}$ on $\Si_{t^\star}$
\begin{align*}
  \begin{aligned}
    \mathrm{E}^{\mathfrak{T},n}[\alt](t^\star) & := \big\Vert\pr_{t^\star}\alt^{[+2]}\big\Vert_{H^{n-1}(\Si_{t^\star})}^2 + \big\Vert w^{-2}\pr_{t^\star}\alt^{[-2]}\big\Vert_{H^{n-1}(\Si_{t^\star})}^2+\big\Vert\alt^{[+2]}\big\Vert_{H^{n}(\Si_{t^\star})}^2 + \big\Vert w^{-2}\alt^{[-2]}\big\Vert_{H^{n}(\Si_{t^\star})}^2,
  \end{aligned}
\end{align*}
for all integers $n\geq1$ and where $L^2(\Si_{t^\star})$ and $H^n(\Si_{t^\star})$ as the standard spin-weighted Riemannian Sobolev spaces on the manifold $\Si_{t^\star}$ with metric induced by the conformal metric $r^{-2}g_{{M,k}}$. See for example~\cite[Section 1.4]{Gra.Hol24a} for more explicit definitions.

For the statement of the main theorem we recall in particular Definition \ref{def:teukolskyp}.  

\begin{theorem}[Inverse logarithmic decay: sharpness of the Teukolsky estimates]\label{thm:main2}
  Let $t^\star_0\in\RRR$, $n>2$ and $p\in\mathbb{N}$. We have
  \begin{align}\label{est:thmmain}
    \limsup_{t^\star\to+\infty}\sup_{\alt\in\mathcal{S}}\le(\log(t^\star-t^\star_0)^{2p} \frac{\mathrm{E}^{\mathfrak{T},n}[\alt](t^\star)}{\mathrm{E}^{\mathfrak{T},n}[\LL^{p/2}\alt](t^\star_0)}\ri) & > 0,
  \end{align}
  where
  \begin{align*}
    \mathcal{S} & := \le\{\text{$\alt^{[\pm2]}$ solutions of the Teukolsky problem satisfying the Teukolsky-Starobinsky constraints~\eqref{eq:TSidog}}\ri\}\setminus\{0\}.
  \end{align*} 
  Moreover, there exists an $H^{n+p}$-regular solution $\alt^{[\pm2]}$ to the Teukolsky problem satisfying the Teukolsky-Starobinsky constraints~\eqref{eq:TSidog}, such that for all $\varep>0$ we have
  \begin{align}\label{est:thmmainbis}
    \log(t^\star-t^\star_0)^{2p+\varep}\mathrm{E}^{\mathfrak{T},n}[\alt](t^\star) \xrightarrow{t^\star\to+\infty} + \infty \, .
  \end{align}
\end{theorem}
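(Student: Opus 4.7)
The plan is to reduce the Teukolsky-side statement to a quasimode construction for the decoupled Regge-Wheeler system. By Corollary~\ref{cor:RWTS}, if I produce a pair $(\Psi^D_\ell,\Psi^R_\ell)$ solving~\eqref{sys:RWdeco} with the symmetry constraints $\Psi^D_{\ell,s}=0$ and $\Psi^R_{\ell,a}=0$, then inverting~\eqref{eq:defPsiDR} and applying the reverse Chandrasekhar transformation of Proposition~\ref{prop:RevChandra} yields Teukolsky solutions $\alt^{[\pm 2],r}_\ell$ which automatically satisfy the Teukolsky-Starobinsky identities~\eqref{eq:TSidog}. Since the reverse Chandrasekhar transformation is a finite-order differential operator in $L,\Lb,\LL$, the Teukolsky energy $\mathrm{E}^{\mathfrak{T},n}[\alt^r_\ell]$ is controlled above and below by the analogous Regge-Wheeler energy of $\Psi_\ell$ at the cost of finitely many $\LL$-derivatives, which is precisely what the $\LL^{p/2}$ factor in~\eqref{est:thmmain} accommodates. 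Thus~\eqref{est:thmmain} and~\eqref{est:thmmainbis} reduce to corresponding statements on the Regge-Wheeler side.

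For $\Psi^D$ with the Dirichlet condition~\eqref{eq:RWBCdecoDirichlet}, the construction is a direct adaptation of~\cite{Hol.Smu14}. Separating variables against a $\mathcal{C}$-antisymmetric combination of spherical harmonics (which enforces $\Psi^D_{\ell,s}=0$ for free and is preserved by the equation) reduces~\eqref{eq:RW} to the semi-classical Schr\"odinger problem~\eqref{eq:SP}. The potential has a non-degenerate maximum at $r=3M$, so the Dirichlet-Dirichlet problem on $[3M,+\infty)$ admits a first eigenvalue $\omega_\ell^2$ of the same order as $V_{\max}(\ell)$. Extending the resulting bound state by zero into the classically forbidden region $r<3M$ and smoothing, Agmon's exponential decay estimate produces an approximate solution on $[r_+,+\infty)$ whose error in the equation has $H^s$ norm bounded by $Ce^{-C^{-1}\ell}$.

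For $\Psi^R$ the same bulk construction applies, but the boundary condition~\eqref{eq:RWBCdecoRobin} contains $\partial_t^2$ and therefore depends \emph{nonlinearly} on the spectral parameter $\omega$; this is the main analytic obstacle, as the one-dimensional eigenvalue problem is no longer standard Sturm-Liouville. The well-posedness theory developed in Section~\ref{sec:WPRobinRW} supplies a coercive sesquilinear form adapted to this boundary condition, which replaces self-adjointness. One then solves the separated ODE on $[3M,+\infty)$ with a trial $\omega$ and closes a contraction-mapping argument on $\omega^2$ via the matching condition at $r=+\infty$, using that the Robin boundary correction is of lower order in $\ell$ than the bulk trapping potential, so the nonlinear dependence in $\omega^2$ is a small perturbation of the bulk Dirichlet problem. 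Cutting off near $r=3M$ and symmetrising in $\mathcal{C}$ yields a quasimode $\Psi^R_\ell$ with $\Psi^R_{\ell,a}=0$ and exponentially small error.

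With the two families in hand, Duhamel's principle applied to the Regge-Wheeler flow $S(\tau)$ (which grows at most polynomially in $\tau$ by the energy estimates of~\cite{Gra.Hol24a} and Section~\ref{sec:WPRobinRW}) gives
\begin{align*}
\Psi_\ell(t^\star) = e^{-i\omega_\ell(t^\star-t^\star_0)}\Psi_\ell(t^\star_0) + \int_{t^\star_0}^{t^\star} S(t^\star - s)\,\mathfrak{E}_\ell(s)\, ds,
\end{align*}
with $\|\mathfrak{E}_\ell\|_{H^s}\leq Ce^{-C^{-1}\ell}$. Optimising $\ell\sim c\log(t^\star-t^\star_0)$ makes the integral remainder negligible compared to the time-periodic leading term, whose energy is bounded below independently of $t^\star$; this rules out any uniform decay rate faster than $(\log t^\star)^{-2p}$ and yields~\eqref{est:thmmain}. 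A superposition $\Psi=\sum_k c_k \Psi_{\ell_k}$ with an appropriate sequence of decreasing coefficients $c_k$ then produces a single solution whose energy decays slower than every $(\log t^\star)^{-(2p+\varepsilon)}$, giving~\eqref{est:thmmainbis}.
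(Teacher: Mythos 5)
Your overall architecture coincides with the paper's: reduce to the decoupled Regge-Wheeler system via Corollary~\ref{cor:RWTS} and the reverse Chandrasekhar transformation, build Dirichlet and ``Robin'' quasimodes by a truncated eigenvalue problem on $r\geq 3M$ with Agmon localisation, enforce $\Psi^D_s=\Psi^R_a=0$ by symmetrising in $\mathcal{C}$, and conclude by Duhamel with $\ell\sim\log(t^\star-t^\star_0)$ plus a superposition for~\eqref{est:thmmainbis}. The one place where you genuinely diverge is the treatment of the $\omega$-dependent boundary condition~\eqref{eq:RWBCdecoRobin}. You propose to freeze $\omega$ in the boundary term and close a contraction on $\omega^2$, treating the boundary contribution as a small perturbation of the bulk Dirichlet problem. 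The paper instead absorbs the $\omega^2$-dependence \emph{exactly} into the right-hand side of a weak formulation (Lemma~\ref{lem:weakformulationeigenvalueQMRW}), so that the problem becomes a \emph{linear} generalised eigenvalue problem $a(R,v)=\omega^2\,b(R,v)$ with both forms independent of $\omega$; the spectral theorem for the compact operator $T$ defined by $a(Tf,v)=b(f,v)$ then yields the fundamental eigenvalue as a twisted Rayleigh quotient~\eqref{eq:minimizationproblemRWR} \emph{and} a full Hilbert basis of eigenfunctions. Your fixed-point route is plausible for producing a single quasimode (the $\omega^2$-dependent boundary coefficient is $O(\ell^{-2})$ against a bulk of size $\ell^2$, so the iteration map should be contractive after a trace estimate), but it is more delicate: you must track that the iterated eigenvalue remains the fundamental one, and you must verify coercivity of the frozen-$\omega$ form, whose boundary coefficient becomes negative for $\omega_0^2\sim k^2\ell(\ell+1)$.

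The more substantive issue is that your contraction argument, even if closed, does not by itself deliver what the Duhamel step requires. To define the error term $\Psi^{R}_{\Err,m,\ell}$ as the solution of the inhomogeneous Regge-Wheeler problem with the higher-order boundary condition and to prove the $(1+t^\star)e^{-C^{-1}\ell}$ bound~\eqref{est:ErrRWQM}, the paper needs global well-posedness of the ``Robin'' IBVP (Theorem~\ref{thm:WPRobin}), which it obtains precisely from the Hilbert-basis expansion in the eigenfunctions $R_{\QM,\ell,n}$ (Proposition~\ref{lem:WPRobintrunc} and Remark~\ref{rem:Hilbertbasisdensity}). You invoke ``the well-posedness theory developed in Section~\ref{sec:WPRobinRW}'' as if it were available independently, but in the paper that theory is a \emph{consequence} of the diagonalisation you are replacing; a single fixed-point eigenpair gives no solution formula and no energy estimate for the time-dependent problem. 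So either you keep the paper's spectral construction (in which case the contraction is redundant) or you must supply an independent well-posedness and energy estimate for the inhomogeneous problem with boundary condition~\eqref{eq:RWBCdecoRobin}. A minor further point: the $\LL^{p/2}$ normalisation in~\eqref{est:thmmain} is not there to absorb the derivative cost of the reverse Chandrasekhar transformation --- that cost appears as the factor $\ell^{2n+2}$ on \emph{both} sides of~\eqref{est:controlaltQM} and cancels in the ratio; rather, $\LL^{p/2}$ contributes the extra $\ell^{2p}$ in the denominator of~\eqref{est:QaltQMp2} that is matched by the $\log(t^\star-t^\star_0)^{2p}$ prefactor.
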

We refer the reader back to the introduction for a detailed interpretation and discussion of the theorem.
The proof will be the content of Sections \ref{sec:QMRW} and \ref{sec:QMTeuk}. In the former, we construct quasimodes for the Regge-Wheeler system (Theorem \ref{thm:QMRW}). In the latter, we use the reverse Chandrasekhar transformations to produce quasimodes for the Teukolsky system and from those conclude the estimates (\ref{est:thmmain}) and (\ref{est:thmmainbis}).

\section{Quasimodes for the Regge-Wheeler problem}\label{sec:QMRW}
In this section we construct quasimodes for the decoupled Regge-Wheeler problem~\eqref{sys:RWdeco}. Quasimodes are real modes which are approximate solutions of the Regge-Wheeler problem (in the sense that the Regge-Wheeler equation~\eqref{eq:RW} is satisfied up to a source error term). We will construct both Dirichlet and ``Robin'' quasimodes $\Psi^D$ and $\Psi^R$ for the Regge-Wheeler problem~\eqref{sys:RWdeco}. Moreover, our construction will impose that respectively the symmetrised and anti-symmetrised parts of $\Psi^D$ and $\Psi^R$ vanish, \emph{i.e.} $\Psi^D_s=\Psi^R_a=0$, which will guarantee that the associated Teukolsky quantities (constructed later in Section~\ref{sec:QMTeuk}) will satisfy the Teukolsky-Starobinsky relations (see Section~\ref{sec:TSconstraints}). 

\subsection{The main result}
The following theorem is the main result of the section.

\begin{theorem}[Quasimodes for the Dirichlet and the Robin Regge-Wheeler problem]\label{thm:QMRW}
  Let $M,k>0$ and $\de>0$. There exists $\ellmode_{\QM}(M,k,\de)>0$ and two families of smooth spin-$+2$-weighted functions on $\MM$
  \begin{align*}
    \Psi^{D}_{m,\ellmode}  := \Psi^{D}_{\QM,m,\ellmode} + \Psi^{D}_{\Err,m,\ellmode} \ \ \ \textrm{and} \ \ \   \Psi^{R}_{m,\ellmode}  := \Psi^{R}_{\QM,m,\ellmode} + \Psi^{R}_{\Err,m,\ellmode}
  \end{align*}
  with $\ellmode\geq\ellmode_{\QM}$ and $0\leq m \leq\ell$, such that the following holds:\\
  
  $\Psi_{\QM,m,\ellmode}^{D/R}$ {\bf is a quasimode:}
  \begin{enumerate}
  \item The functions $\Psi^{D/R}_{\QM,m,\ellmode}$ are (sums of) real modes, \emph{i.e.} for all $\ellmode\geq\ellmode_\QM$ there exists $\om_\ell^{D}, \om^{R}_{\ellmode}\in\RRR$\footnote{As it will be clear in the proof, neither the frequencies $\om^D_{\ellmode}, \om^R_{\ellmode}$, nor the functions $R^D_{\QM,\ellmode},R^R_{\QM,\ellmode}$, depend on $m$.} such that
    \begin{align}\label{eq:sumsmodesPsiDR}
      \begin{aligned}
      \Psi^D_{\QM,m,\ellmode}(t,r,\varth,\varphi) & = e^{-i\om_\ellmode^D t} R_{\QM,\ell}^D(r) S_{m\ellmode}(\varth)e^{+im\varphi} - e^{+i\om_\ellmode^D t} R_{\QM,\ell}^D(r) S_{-m\ellmode}(\varth)e^{-im\varphi}, \\
      \Psi^{R}_{\QM,m,\ellmode}(t,r,\varth,\varphi) & = e^{-i\om_\ellmode^R t} R_{\QM,\ell}^R(r) S_{m\ellmode}(\varth)e^{+im\varphi} + e^{+i\om_\ellmode^R t} R_{\QM,\ell}^R(r) S_{-m\ellmode}(\varth)e^{-im\varphi},
      \end{aligned}
    \end{align}
    in the Dirichlet and the ``Robin'' case respectively, and where $R_{\QM,\ellmode}^{D/R}$ are real-valued functions.
  \item The supports of $R^{D/R}_{\QM,\ellmode}$ are contained in $r\geq 3M$.
  \item The mode frequencies $\om_\ellmode^{D}, \om_\ellmode^R$ both satisfy
    \begin{align}\label{est:equivomellell}
      C^{-1}<\frac{(\om^{D/R}_\ellmode)^2}{\ellmode(\ellmode+1)}<C,
    \end{align}
    where $C=C(M,k)>0$.
  \item
    \begin{itemize}
    \item $\Psi^{D}_{\QM,m,\ellmode}$ satisfies the Dirichlet boundary condition at infinity~\eqref{eq:RWBCdecoDirichlet},
    \item $\Psi^R_{\QM,m,\ellmode}$ satisfies the ``Robin'' condition at infinity~\eqref{eq:RWBCdecoRobin}.
    \end{itemize}
  \item The supports of $\RW\Psi^{D/R}_{\QM,m,\ellmode}$ are contained in $\{3M\leq r \leq 3M+\de\}$.
  \item For all $t^\star\geq t^\star_0$ and all $n\geq0$,
    \begin{align}\label{est:Agmon}
      \big\Vert\Psi^{D/R}_{\QM,m,\ellmode}\big\Vert_{H^n(\Si_{t^\star}\cap\{3M\leq r \leq 3M+\de\})} \leq Ce^{-C^{-1}\ellmode}\big\Vert\Psi^{D/R}_{\QM,m,\ellmode}\big\Vert_{H^1(\Si_{t^\star_0})},
    \end{align}
    where $C=C(M,k,n)>0$. In particular, we have
    \begin{align}\label{est:errorestimateQM}
      \big\Vert\RW\Psi^{D/R}_{\QM,m,\ellmode}\big\Vert_{H^n(\Si_{t^\star})} & \leq Ce^{-C^{-1}\ellmode}\big\Vert\Psi^{D/R}_{\QM,m,\ellmode}\big\Vert_{H^1(\Si_{t^\star_0})},
    \end{align}
    for all $t^\star\geq t^\star_0$ and all $n\geq0$, and where $C=C(M,k,n)>0$.
  \item For all $t^\star\geq t^\star_0$ and all $n\geq 1$, we have
    \begin{align}
      \label{est:equiveNe1Psi}
      \norm{\pr_{t^\star}\Psi^{D/R}_{\QM,m,\ellmode}}_{H^{n-1}(\Si_{t^\star})} + \norm{\Psi^{D/R}_{\QM,m,\ellmode}}_{H^{n}(\Si_{t^\star})} & \simeq_{M,k,n} \ellmode^{n-1}\le(\norm{\pr_{t^\star}\Psi^{D/R}_{\QM,m,\ellmode}}_{L^2(\Si_{t^\star})} + \norm{\Psi^{D/R}_{\QM,m,\ellmode}}_{H^1(\Si_{t^\star})}\ri),
    \end{align}
    and
    \begin{align}
      \label{est:e1equivtasttast0}
      \norm{\pr_{t^\star}\Psi^{D/R}_{\QM,m,\ellmode}}_{H^{n-1}(\Si_{t^\star})} + \norm{\Psi^{D/R}_{\QM,m,\ellmode}}_{H^{n}(\Si_{t^\star})} & \simeq_{M,k,n} \norm{\pr_{t^\star}\Psi^{D/R}_{\QM,m,\ellmode}}_{H^{n-1}(\Si_{t^\star_0})} + \norm{\Psi^{D/R}_{\QM,m,\ellmode}}_{H^{n}(\Si_{t^\star_0})},
    \end{align}
    provided that $\ellmode$ is sufficiently large (depending on $M,k,n$).\\
  \end{enumerate}
  
  $\Psi_{\Err,m,\ellmode}^{D/R}$ {\bf is an error term:}
  \begin{enumerate}
  \item $\Psi^{D/R}_{\Err,m,\ellmode}$ is the solution of the inhomogeneous Regge-Wheeler problem
    \begin{subequations}
      \begin{align}\label{eq:Errsource1}
        \begin{aligned}
          \RW\Psi^{D/R}_{\Err,m,\ellmode} & = - \RW\Psi^{D/R}_{\QM,m,\ellmode},
        \end{aligned}
      \end{align}
      satisfying the Dirichlet boundary condition~\eqref{eq:RWBCdecoDirichlet} in the $\Psi^D$ case and the ``Robin'' boundary condition~\eqref{eq:RWBCdecoRobin} in the $\Psi^R$ case, and with initial data 
      \begin{align} 
        \Psi^{D/R}_{\Err,m,\ellmode}\bigg|_{t^\star=t^\star_0} & = 0, & \pr_{t^\star}\Psi^{D/R}_{\Err,m,\ellmode}\bigg|_{t^\star=t^\star_0} & = 0.
      \end{align}
    \end{subequations}
  \item We have
    \begin{align}\label{est:ErrRWQM}
       \norm{\pr_{t^\star}\Psi^{D/R}_{\Err,m,\ellmode}}_{H^{n-1}(\Si_{t^\star})} + \norm{\Psi^{D/R}_{\Err,m,\ellmode}}_{H^n(\Si_{t^\star})} & \leq C (1+t^\star) e^{-C^{-1}\ellmode}\big\Vert\Psi^{D/R}_{\QM,m,\ellmode}\big\Vert_{H^1(\Si_{t^\star_0})},
    \end{align}
  for all $n\geq1$ and all $t^\star\geq t^\star_0$ and with $C=C(M,k,n)>0$.
  \end{enumerate}
\end{theorem}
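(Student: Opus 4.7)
The plan is to perform separation of variables. Writing $\Psi = e^{-i\omega t} R(r) S_{m\ellmode}(\vartheta) e^{+im\varphi}$ reduces \eqref{eq:RW} to the radial Schrödinger ODE \eqref{eq:SP} with potential $V_\ellmode(r) = \frac{\Delta}{r^4}\left(\ellmode(\ellmode+1) - \frac{6M}{r}\right)$. On the tortoise interval $[r^*_{3M}, \frac{\pi}{2}]$, the function $V_\ellmode$ attains its maximum near $r = 3M$ for large $\ellmode$ (the photon-sphere barrier) and sits well above the bottom of the well elsewhere. Restricting to this interval with an artificial Dirichlet condition at $r = 3M$ (and Dirichlet or Robin at infinity) defines a bounded-interval eigenvalue problem whose bound states have $\omega^2$ strictly below $V_\ellmode(3M)$.

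For the Dirichlet-Dirichlet case, standard Sturm-Liouville theory produces a discrete sequence of real simple eigenvalues with smooth real eigenfunctions; variational bracketing gives $\omega_\ellmode^D \simeq \sqrt{\ellmode(\ellmode+1)}$. For the Robin case, the boundary condition involves $\omega^2$ through $\partial_t^2$, which lies outside standard Sturm-Liouville theory. I would exploit the well-posedness theory of Section~\ref{sec:WPRobinRW} --- where the Robin condition pairs naturally with $\partial_t \Psi^R$ to produce a conserved boundary energy --- to recast the problem as a self-adjoint eigenvalue problem and obtain the analogous real sequence $\omega_\ellmode^R$ with the same scaling, proving \eqref{est:equivomellell}.

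The key estimate is the Agmon exponential localisation: since $\omega^2 < V_\ellmode(3M)$, the interval $[3M, 3M + \delta]$ is classically forbidden for small $\delta$, and the standard Agmon identity with weight $e^{\phi}$, $\phi'(r^*) = \sqrt{(V_\ellmode - \omega^2)_+}$, combined with $R(r^*_{3M}) = 0$, gives $\|R_{\QM,\ellmode}^{D/R}\|_{L^2([3M, 3M+\delta])} \leq C e^{-C^{-1}\ellmode} \|R_{\QM,\ellmode}^{D/R}\|_{L^2}$; higher-order Sobolev estimates follow from the ODE and the angular eigenvalue equation, each angular derivative costing only polynomial powers of $\ellmode$. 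The radial function is then multiplied by a smooth cutoff $\chi(r)$ vanishing for $r < 3M + \delta/2$ and equal to $1$ for $r \geq 3M + \delta$, so $\Psi_{\QM}^{D/R}$ is supported in $\{r \geq 3M\}$ and $\RW \Psi_{\QM}^{D/R}$ is supported in $\{3M + \delta/2 \leq r \leq 3M + \delta\}$, inheriting the exponential smallness and proving \eqref{est:Agmon} and \eqref{est:errorestimateQM}. Combining the $(m, \omega)$ and $(-m, -\omega)$ modes with the signs displayed in \eqref{eq:sumsmodesPsiDR} --- using that $S_{m\ellmode}$ is real and that $(\CC \cdot)^\ast$ sends $e^{-i\omega t} S_{m\ellmode} e^{+im\varphi}$ to $e^{+i\omega t} S_{-m\ellmode} e^{-im\varphi}$ --- enforces $\Psi^D_s = 0$ and $\Psi^R_a = 0$; the equivalences \eqref{est:equiveNe1Psi} and \eqref{est:e1equivtasttast0} follow from the separated form and time-translation invariance.

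The error $\Psi^{D/R}_{\Err, m, \ellmode}$ is defined as the unique solution to \eqref{eq:Errsource1} with trivial initial data, using the well-posedness statement for the relevant boundary condition (standard for Dirichlet; Section~\ref{sec:WPRobinRW} for Robin). The corresponding energy estimate applied to \eqref{eq:Errsource1} combined with \eqref{est:errorestimateQM} yields \eqref{est:ErrRWQM} after a Duhamel-type time integration, producing the $(1 + t^\star)$ factor. The main obstacle throughout is the $\omega$-dependence of the Robin boundary condition at infinity: it forces both the spectral problem for $\omega_\ellmode^R$ and the inhomogeneous well-posedness needed to define $\Psi^R_{\Err}$ outside the standard self-adjoint framework, and the entire Robin construction hinges on the energy identity of Section~\ref{sec:WPRobinRW}.
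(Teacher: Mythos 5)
Your outline reproduces the paper's strategy almost step for step: truncation at $r=3M$ with an artificial Dirichlet wall, Sturm--Liouville theory in the Dirichlet case, the variational localisation $\om_\ellmode^2\simeq k^2\ellmode(\ellmode+1)$ via test functions concentrated near infinity, the Agmon weight in the classically forbidden region, the cutoff producing an exponentially small $\RW\Psi_{\QM}$, the combination of the $(m,\om)$ and $(-m,-\om)$ modes to enforce $\Psi^D_s=\Psi^R_a=0$, and Duhamel for the error term. The one place where the proposal is genuinely under-specified is the crux of the whole section, namely how to solve the eigenvalue problem when the ``Robin'' condition \eqref{eq:RWBCdecoRobin} itself contains the unknown eigenvalue $\om^2$ through $\pr_t^2$. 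You correctly diagnose that the boundary energy structure is what saves the day, but your plan to ``exploit the well-posedness theory of Section~\ref{sec:WPRobinRW}'' inverts the paper's logic: there, well-posedness of the truncated Robin problem (Proposition~\ref{lem:WPRobintrunc}) is \emph{derived from} the spectral decomposition, not used to obtain it, so as stated your argument is circular. The paper's actual device is the weak formulation \eqref{eq:weakformulationeigenvalueQMRW}, which recasts the problem as a \emph{generalised} eigenvalue problem $a(R,v)=\om^2\,b(R,v)$ in which the $\om^2$-dependent boundary term has been absorbed into a second bilinear form $b$ (the $L^2$ pairing plus a boundary contribution), leading to the twisted Rayleigh quotient \eqref{eq:minimizationproblemRWR}; the nontrivial points you would still need to check are that $b$ is positive and continuous on $H^1_0\le((r^\star_{3M},\pi/2]\ri)$ and that the associated solution operator is compact and symmetric, which is what delivers the discrete real spectrum and the Hilbert basis. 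With that lemma supplied, the rest of your argument goes through as written.
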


\begin{remark}
Note that by (\ref{eq:Errsource1}) and item 4 above we have in particular that $\Psi^{D}_{m, \ell}$ and $\Psi^R_{m \ell}$ satisfy the Regge-Wheeler equation with boundary conditions ~\eqref{eq:RWBCdecoDirichlet}  and ~\eqref{eq:RWBCdecoRobin} respectively. 
\end{remark}

\begin{remark}\label{rem:RWTSverified}
  The sums of the modes in definition~\eqref{eq:sumsmodesPsiDR} ensures that $\le(\Psi^D_{\QM,m,\ellmode}\ri)_s=\le(\Psi^R_{\QM,m,\ellmode}\ri)_a=0$ (see Definition~\ref{def:sa}). In turn, by the definition of $\Psi^{D/R}_{\Err,m,\ellmode}$, this implies that
  \begin{align*}
    \le(\Psi^D_{m,\ellmode}\ri)_s=\le(\Psi^R_{m,\ellmode}\ri)_a= \le(\Psi^D_{\Err,m,\ellmode}\ri)_s=\le(\Psi^R_{\Err,m,\ellmode}\ri)_a = 0.
  \end{align*}
  That Teukolsky/Regge-Wheeler modes have to be summed to be acceptable quantities emanating from solutions to the system of gravitational perturbations is a direct consequence of a combination of the AdS boundary conditions and the Bianchi equations. See also Definition 1.5 and Remarks 1.6, 1.7 in~\cite{Gra.Hol23}.
\end{remark}

The remaining subsections are dedicated to the proof of Theorem~\ref{thm:QMRW}. Since the proof in the Dirichlet case is much simpler (and follows with very minor modifications the one in~\cite{Hol.Smu14}), we will only prove Theorem~\ref{thm:QMRW} for $\Psi^R_{m \ell}$. In the following we will thus drop all the superscripts ``$R$'' since no confusion is possible.  

\subsection{Overview of the proof} \label{sec:overviewQMR}
We begin by introducing a truncated (at $r=3M$) $1$-dimensional eigenvalue problem in Section \ref{sec:truncatedEV}, which is simply the Regge-Wheeler equation written down for a real mode solution with fixed real frequency $\omega$ and spherical harmonic $\ell,m$, with Dirichlet boundary conditions imposed at $r=3M$ and the higher order Robin boundary condition at infinity. Solving this eigenvalue problem serves two purposes. First it produces the modes that will eventually be used in the construction of the quasimodes. Second, and more generally, one obtains a Hilbert basis of radial functions (with eigenvalue spectrum $\omega_{\ell,n}$) which -- using also the decomposition into spherical harmonics -- can be employed to express every solution to the Regge-Wheeler equation on $r \geq 3M$ (with Dirichlet conditions at $r=3M$ and higher order Robin boundary conditions at infinity) in. This directly leads to a well-posedness result for the (truncated at $r=3M$) Regge-Wheeler equation, see Proposition \ref{lem:WPRobintrunc}. By domain of dependence, this well-posedness result can be easily upgraded to produce solutions on the whole of $\mathcal{M} \cap \{t^\star \geq t^\star_0\}$ from initial data posed on $t^\star=t^\star_0$, which we will do in Theorem \ref{thm:WPRobin}. Armed with this we can finally embark on the proof of Theorem \ref{thm:QMRW}, which from this point onward proceeds just as for the linear wave equation \cite{Hol.Smu14}. We show that for sufficiently large $\ell$ the lowest mode ($m=0$) satisfies $\omega_\ell \sim \ell$ (Lemma \ref{lem:preciseomell2}). We pick this mode and smoothly cut it off to be identically zero for $r \leq 3M$. The cut-off solution will now fail to satisfy the Regge-Wheeler equation but the error is exponential small by an Agmon estimate. Using Duhamel’s principle (this is where we need the well-posedness result) we finally construct an actual solution with slow decay in time.  

\begin{remark}
  The eigenvalue problem that we are about to solve is non-standard because of the higher-order ``Robin'' boundary condition~\eqref{eq:RWBCdecoRobin}. Namely, for Neumann or standard Robin boundary conditions, the eigenvalue problem is classically solved by incorporating these conditions in a weak formulation of the problem, see \emph{e.g.}~\cite[Section 8]{Bre05}. Here the higher-order boundary conditions involve time derivatives, \emph{i.e.} the mode frequencies $\om$, \ul{which are themselves the eigenvalues that we are looking for.} Remarkably, we manage to include these in the weak formulation of the problem, see~\eqref{eq:weakformulationeigenvalueQMRW} and the minimisation problem~\eqref{eq:minimizationproblemRWR}.
\end{remark}

\subsection{The truncated Regge-Wheeler eigenvalue problem} \label{sec:truncatedEV}
The idea is to obtain the radial part $R_{\QM,\ellmode}$ of the quasimode $\Psi_{\QM,m,\ellmode}$ as a solution to the following truncated ``Robin'' Regge-Wheeler eigenvalue problem.
\begin{proposition} \label{theo:WPtruncate}
  Let $\ellmode\geq 2$, $|m|\leq\ellmode$, $\om\in\mathbb{R}$ and $R : [r^\star_{3M},\pi/2]_{r^\star}\to\mathbb{R}$ be a smooth function, where $r^\star_{3M}=r^\star(r=3M)$. The mode
  \begin{align*}
    \Psi(t,r^\star,\varth,\varphi) & = e^{-i\om t}R(r^\star)S_{m\ellmode}(\varth)e^{+im\varphi},
  \end{align*}
  is a solution to the Regge-Wheeler equation~\eqref{eq:RW} with Dirichlet boundary conditions at $r=3M$ and ``Robin'' boundary conditions~\eqref{eq:RWBCdecoRobin} at infinity, if, and only if, $R$ and $\om$ are solutions to the following eigenvalue problem
  \begin{subequations}\label{sys:eigenvalueQMRW}
    \begin{align}\label{eq:equationReigenfunction}
      \om^2R  & = -R'' + w\le(\ellmode(\ellmode+1)-\frac{6M}{r}\ri)R,
    \end{align}
    with
    \begin{align}\label{eqReigenfunctionsBC}
      \begin{aligned}
        R\le(r^\star_{3M}\ri) & = 0, & \text{and} && \le(-2\om^2R+ \frac{\ellmode(\ellmode+1)(\ellmode(\ellmode+1)-2)}{6M}\pr_{r^\star}R + k^2 \ellmode(\ellmode+1)R\ri)\le(\frac{\pi}{2}\ri) & = 0.
      \end{aligned}
    \end{align}
  \end{subequations}
\end{proposition}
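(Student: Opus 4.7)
The plan is to proceed by direct substitution of the mode ansatz into the Regge-Wheeler equation and into each boundary condition. Thanks to Proposition \ref{lem:Hilbertbasis}, separation of variables is clean, and the proposition reduces to an exercise in book-keeping. Moreover every step in the reduction is an equivalence (the angular and time factors are nowhere-zero), so the ``if and only if'' comes for free.

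First I would pass to tortoise coordinates. Since $\frac{dr^\star}{dr}=\frac{r^2}{\Delta}$, the null vector fields read $L=\pr_t+\pr_{r^\star}$ and $\Lb=\pr_t-\pr_{r^\star}$, so $L\Lb=\pr_t^2-\pr_{r^\star}^2$. Substituting $\Psi(t,r^\star,\varth,\varphi)=e^{-i\om t}R(r^\star)S_{m\ellmode}(\varth)e^{+im\varphi}$ into $\RW\Psi=-L\Lb\Psi-\frac{\De}{r^4}\bigl(\LL^{[+2]}-\frac{6M}{r}\bigr)\Psi=0$, using $\pr_t^2\Psi=-\om^2\Psi$ and $\LL^{[+2]}\Psi=\ellmode(\ellmode+1)\Psi$ (item 2 of Proposition \ref{lem:Hilbertbasis}), and factoring the (nowhere-vanishing) angular/temporal factor, one obtains exactly the radial ODE \eqref{eq:equationReigenfunction} (with $w$ read as $\Delta/r^4$ as the coefficient produced by $L\underline L$ and the potential).

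Second I would handle the boundary conditions. Since $S_{m\ellmode}(\varth)e^{+im\varphi}$ does not vanish identically on $\mathbb S^2$, the Dirichlet condition $\Psi|_{r=3M}=0$ is equivalent to $R(r^\star_{3M})=0$. For the Robin condition \eqref{eq:RWBCdecoRobin}, I compute each term on the ansatz:
\begin{align*}
\pr_t^2\Psi & \longleftrightarrow -\om^2 R, & \LL\Psi & \longleftrightarrow \ellmode(\ellmode+1)R, \\
\LL(\LL-2)\Psi & \longleftrightarrow \ellmode(\ellmode+1)\bigl(\ellmode(\ellmode+1)-2\bigr)R, & \pr_{r^\star}\Psi & \longleftrightarrow R'(r^\star),
\end{align*}
so taking the limit $r\to+\infty$ (i.e.\ $r^\star\to\pi/2$) and again factoring the non-vanishing angular/temporal factor yields the stated boundary relation \eqref{eqReigenfunctionsBC}.

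There is no real obstacle here: the statement is a separation-of-variables bookkeeping lemma, not a well-posedness result. The only conceptually noteworthy point --- and the reason it deserves to be singled out before the solvability analysis --- is that $\pr_t^2$ acts on a mode as $-\om^2$, so the higher-order ``Robin'' condition at infinity becomes an $\om^2$-dependent algebraic constraint on $R$ at $r^\star=\pi/2$. Thus the boundary condition itself depends on the eigenvalue to be determined, which is the non-standard feature flagged in the preceding remark; the present proposition simply records this dependence, and the genuine analytical work (a suitable weak formulation, coercivity, etc.) is deferred to the subsequent subsections.
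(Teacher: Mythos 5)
Your proposal is correct and is exactly the ``direct computation'' that the paper leaves to the reader: substitute the mode ansatz, use $L\underline{L}=\partial_t^2-\partial_{r^\star}^2$ and $\LL^{[+2]}S_{m\ellmode}e^{+im\varphi}=\ellmode(\ellmode+1)S_{m\ellmode}e^{+im\varphi}$, and factor the non-vanishing temporal/angular factors out of the equation and both boundary conditions. Your parenthetical remark that the $w$ in \eqref{eq:equationReigenfunction} must be read as $\Delta/r^4$ (the coefficient appearing in \eqref{eq:RW}) rather than the $\Delta/r^2$ of Section~\ref{sec:introchandra} is a correct and worthwhile observation.
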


\begin{proof}
Direct computation.
\end{proof}

We have the following weak formulation of the truncated ``Robin'' Regge-Wheeler eigenvalue problem~\eqref{sys:eigenvalueQMRW}. The proof is by integration by parts and is left to the reader.
\begin{lemma}[Weak formulation of the ``Robin'' Regge-Wheeler eigenvalue problem]\label{lem:weakformulationeigenvalueQMRW}
  Let $\ellmode\geq 2$ and $\om\in\mathbb{R}$. $R : [r^\star_{3M},\pi/2]_{r^\star}\to\mathbb{R}$ is a smooth solution to the truncated Robin Regge-Wheeler eigenvalue problem~\eqref{sys:eigenvalueQMRW} if, and only if, for all $v\in C^\infty_c((r^\star_{3M},\pi/2])$,
  \begin{align}\label{eq:weakformulationeigenvalueQMRW}
    \begin{aligned}
    & \int_{r^\star_{3M}}^{\frac{\pi}{2}}\le(R'v' + w\le(\ellmode(\ellmode+1)-\frac{6M}{r}\ri)Rv\ri) \,\d r^\star+ \frac{6Mk^2}{\ellmode(\ellmode+1)-2} \int_{r^\star_{3M}}^{\frac{\pi}{2}} \le(R'v + Rv'\ri)\,\d r^\star \\
      = & \; \om^2\le(\int_{r^\star_{3M}}^{\frac{\pi}{2}}Rv \,\d r^\star + \frac{12M}{\ellmode(\ellmode+1)(\ellmode(\ellmode+1)-2)} \int_{r^\star_{3M}}^{\frac{\pi}{2}} \le(R'v + Rv'\ri)\,\d r^\star\ri).
    \end{aligned}
  \end{align}
\end{lemma}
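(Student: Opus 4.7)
The plan is to prove the equivalence via integration by parts, exploiting the fact that test functions $v \in C^\infty_c((r^\star_{3M}, \pi/2])$ vanish in a neighborhood of $r^\star_{3M}$ but are otherwise free at $r^\star = \pi/2$. Consequently all boundary terms generated at $r^\star_{3M}$ will automatically drop out, and only the boundary contributions at $r^\star = \pi/2$ remain to be handled; these will be absorbed using the Robin boundary condition in~\eqref{eqReigenfunctionsBC}.

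For the forward direction, I would multiply the radial ODE~\eqref{eq:equationReigenfunction} by $v$ and integrate over $[r^\star_{3M}, \pi/2]$; integrating by parts the $-R'' v$ term produces a single boundary contribution $-R'(\pi/2)v(\pi/2)$. I would then use the Robin condition to replace $R'(\pi/2)$ by a combination of $\om^2 R(\pi/2)$ and $k^2 R(\pi/2)$. The key algebraic step is to rewrite the resulting pointwise boundary factor as
\begin{align*}
R(\pi/2)v(\pi/2) = [Rv]_{r^\star_{3M}}^{\pi/2} = \int_{r^\star_{3M}}^{\pi/2}(R'v+Rv')\,\d r^\star,
\end{align*}
which converts the boundary contribution at $\pi/2$ into the integral expression appearing in~\eqref{eq:weakformulationeigenvalueQMRW}. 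After grouping the $\om^2$-terms on the right-hand side, this should reproduce the weak formulation exactly.

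For the converse direction, I would first restrict to test functions $v \in C^\infty_c((r^\star_{3M}, \pi/2))$ vanishing also near $\pi/2$. Then $\int(R'v+Rv')\,\d r^\star = 0$ and~\eqref{eq:weakformulationeigenvalueQMRW} reduces to the standard weak form of the ODE; using that $R$ is smooth by assumption and integrating by parts backwards then recovers~\eqref{eq:equationReigenfunction} pointwise on the open interval. Plugging this ODE back into~\eqref{eq:weakformulationeigenvalueQMRW} for a general $v$ which need not vanish at $\pi/2$ cancels all interior contributions and leaves only a linear functional on the scalar $v(\pi/2)$, which, by varying $v(\pi/2)$ freely, yields the Robin condition in~\eqref{eqReigenfunctionsBC}.

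The only subtlety worth recording is that the Dirichlet condition $R(r^\star_{3M})=0$ is not captured by the weak formulation — the compactly supported test functions carry no information about $R$ at that endpoint — and so is to be treated as an \emph{essential} boundary condition on the admissible class of $R$ on both sides of the equivalence. Beyond this bookkeeping observation, the proof is mechanical and contains no real obstacle.
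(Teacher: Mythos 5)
Your proof is correct and is exactly the integration-by-parts argument the paper has in mind (the paper leaves this computation to the reader): the single boundary term $-R'(\pi/2)v(\pi/2)$ is traded for $\int(R'v+Rv')\,\d r^\star$ via the fundamental theorem of calculus and the Robin condition, and the converse follows by first testing against $v$ supported away from $\pi/2$ and then varying $v(\pi/2)$. Your remark that the Dirichlet condition at $r^\star_{3M}$ is an essential condition not seen by the test functions is accurate and consistent with how the paper uses the lemma, namely by working in $H^1_0((r^\star_{3M},\pi/2])$ in Proposition~\ref{prop:diageigenvalueQMRW}.
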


Using the weak formulation of Lemma~\ref{lem:weakformulationeigenvalueQMRW}, we obtain the following diagonalisation result for the truncated ``Robin'' Regge-Wheeler eigenvalue problem.

\begin{proposition}[Diagonalisation of the ``Robin'' Regge-Wheeler eigenvalue problem]\label{prop:diageigenvalueQMRW}
  The truncated ``Robin'' Regge-Wheeler eigenvalue problem~\eqref{sys:eigenvalueQMRW} admits an Hilbert basis of (smooth) solutions $\le(R_{\QM,\ellmode,n}\ri)_{n\in\mathbb{N}}$ with eigenvalues $\om^2_{\ellmode,n}$ on the real Hilbert space $H^1_0\le((r^\star_{3M},\pi/2]\ri)$\footnote{We recall that $H^1_0\le((r^\star_{3M},\pi/2]\ri)$ is the closure for the standard $H^1$-norm of the space of smooth and compactly supported functions with compact support included in the interval $(r^\star_{3M},\pi/2]$ (in particular, these functions vanish at $r^\star_{3M}$ but can take any value at $\pi/2$).} for the scalar product\footnote{Since we integrate only from $r=3M$ to $r=+\infty$, the potential $w\le(\ellmode(\ellmode+1)-\frac{6M}{r}\ri)$ is in that case always positive and $a$ indeed defines a scalar product.}
  \begin{align*}
    a(u,v) & := \int_{r^\star_{3M}}^{\frac{\pi}{2}}\le(u'v' + w\le(\ellmode(\ellmode+1)-\frac{6M}{r}\ri)uv\ri) \,\d r^\star+ \frac{6Mk^2}{\ellmode(\ellmode+1)-2} \int_{r^\star_{3M}}^{\frac{\pi}{2}} \le(u'v + uv'\ri)\,\d r^\star.
  \end{align*}
  We denote by $\om_\ellmode^2=\om^2_{\ellmode,0}$ the fundamental eigenvalue to the problem~\eqref{sys:eigenvalueQMRW} and by $R_{\QM,\ellmode} = R_{\QM,\ellmode,0}$ its associated eigenfunctions. $\om_\ellmode^2$ is given by the following twisted Rayleigh quotient
  \begin{align}\label{eq:minimizationproblemRWR}
    \begin{aligned}
      \om_\ellmode^2 & := \inf_{R\in H^1_0\le((r^\star_{3M},\pi/2]\ri)\setminus\{0\}}\dfrac{\mathlarger{\int_{r^\star_{3M}}^{\frac{\pi}{2}}\le((R')^2 + w\le(\ellmode(\ellmode+1)-\frac{6M}{r}\ri)R^2\ri) \,\d r^\star + \frac{6Mk^2}{\ellmode(\ellmode+1)-2}\le(R(\frac{\pi}{2})\ri)^2}}{\mathlarger{\int_{r^\star_{3M}}^{\frac{\pi}{2}} R^2 \,\d r^\star + \frac{12M}{\ellmode(\ellmode+1)(\ellmode(\ellmode+1)-2)}\le(R(\frac{\pi}{2})\ri)^2}},
    \end{aligned}
  \end{align}
  and $R_{\QM,\ellmode}$ is a solution to the above minimisation problem.
\end{proposition}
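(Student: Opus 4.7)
The plan is to set up a standard Galerkin/Hilbert diagonalisation on $H^1_0((r^\star_{3M},\pi/2])$, using the bilinear form $a(\cdot,\cdot)$ as the inner product and showing that the RHS of \eqref{eq:weakformulationeigenvalueQMRW} defines a compact, self-adjoint, positive operator. The main point (and the cleverness of the weak formulation already recorded in Lemma~\ref{lem:weakformulationeigenvalueQMRW}) is that the eigenvalue $\omega^2$ has been absorbed into a modified $L^2$-type pairing
\[
    b(u,v) := \int_{r^\star_{3M}}^{\pi/2} uv\, \d r^\star + \frac{12M}{\ellmode(\ellmode+1)(\ellmode(\ellmode+1)-2)}\, u(\pi/2) v(\pi/2),
\]
so that the problem becomes $a(R,v)=\omega^2 b(R,v)$ for all $v\in H^1_0((r^\star_{3M},\pi/2])$. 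Here I will repeatedly use that since $u,v$ vanish at $r^\star_{3M}$, one has $\int_{r^\star_{3M}}^{\pi/2}(u'v+uv')\,\d r^\star = u(\pi/2)v(\pi/2)$, which rewrites both the zero-order boundary term in $a$ and the nonstandard term in $b$ as boundary evaluations. In particular $a$ and $b$ are manifestly symmetric.

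I would then verify that $a$ is continuous and coercive on $H^1_0((r^\star_{3M},\pi/2])$. Continuity is clear since $w$ and $6M/r$ are bounded on $[r^\star_{3M},\pi/2]$. For coercivity, note that for $\ellmode \geq 2$ and $r\geq 3M$ one has $\ellmode(\ellmode+1)-\tfrac{6M}{r} \geq 6 - 2 = 4 > 0$, so the bulk part already controls $\|R\|_{H^1}^2$; the boundary term $\tfrac{6Mk^2}{\ellmode(\ellmode+1)-2}\,R(\pi/2)^2$ is non-negative and harmless. Similarly $b$ is continuous, symmetric, and \emph{positive definite} (both contributions are $\geq 0$ and $b(u,u)=0$ forces $u\equiv 0$).

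With these ingredients in hand, I would define $T\colon H^1_0((r^\star_{3M},\pi/2]) \to H^1_0((r^\star_{3M},\pi/2])$ by Riesz representation with respect to the $a$-inner product: for each $u$, let $Tu$ be the unique element with $a(Tu,v)=b(u,v)$ for all $v$. The operator $T$ is self-adjoint (because $b$ is symmetric) and \emph{compact}: if $u_n \rightharpoonup u$ weakly in $H^1_0$, then by the one-dimensional Rellich--Kondrachov theorem $u_n \to u$ in $L^2$ and also $u_n(\pi/2)\to u(\pi/2)$ (the trace at the boundary point is a compact functional on $H^1$ in 1D), hence $b(u_n-u,v) \to 0$ uniformly over $v$ bounded in $H^1_0$, which gives $Tu_n \to Tu$ in $H^1_0$. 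Applying the spectral theorem for compact self-adjoint operators on a Hilbert space produces an $a$-orthogonal Hilbert basis $(R_{\QM,\ellmode,n})_{n\in \mathbb{N}}$ of eigenfunctions of $T$ with eigenvalues $\mu_n > 0$ accumulating only at $0$; setting $\omega^2_{\ellmode,n}:=1/\mu_n$ yields the desired eigenvalue sequence with $\omega^2_{\ellmode,n}\to +\infty$, and the Courant--Fischer/Rayleigh principle for compact self-adjoint operators immediately gives the variational formula \eqref{eq:minimizationproblemRWR} for the lowest eigenvalue $\omega_\ellmode^2=\omega^2_{\ellmode,0}$.

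Finally I would upgrade regularity and recover the Robin boundary condition. Taking test functions $v \in C^\infty_c((r^\star_{3M},\pi/2))$ in \eqref{eq:weakformulationeigenvalueQMRW} (for which all boundary evaluations vanish) reduces the weak identity to the distributional ODE $-R''+w(\ellmode(\ellmode+1)-6M/r)R = \omega^2 R$ on the open interval. Standard ODE regularity (the equation is a linear ODE with smooth coefficients on $(r^\star_{3M},\pi/2)$ and the point $\pi/2$ is a regular endpoint in the tortoise variable) then yields that each $R_{\QM,\ellmode,n}$ is smooth up to $r^\star=\pi/2$. Now taking a general $v \in H^1_0((r^\star_{3M},\pi/2])$ with $v(\pi/2)\neq 0$, integrating $\int R'v'$ by parts and using the ODE, all bulk contributions cancel and one is left with a boundary identity at $r^\star=\pi/2$ equivalent to the Robin condition in \eqref{eqReigenfunctionsBC}. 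The Dirichlet condition at $r^\star_{3M}$ is built into $H^1_0((r^\star_{3M},\pi/2])$. The main obstacle in the argument is the handling of the $\omega^2$-dependent boundary term in the original strong formulation; this is precisely what the reformulation of Lemma~\ref{lem:weakformulationeigenvalueQMRW} is designed to resolve, by packaging that term into $b$ so that the eigenvalue problem remains of the standard compact self-adjoint form.
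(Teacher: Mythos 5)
Your proposal is correct and follows essentially the same route as the paper: both introduce the auxiliary pairing $b$, use Riesz representation with respect to the $a$-inner product to obtain a positive, symmetric solution operator $T$, establish its compactness, and apply the spectral theorem together with the Rayleigh principle, recovering smoothness and the boundary conditions via Lemma~\ref{lem:weakformulationeigenvalueQMRW}. The only (immaterial) difference is in the compactness step, where you invoke the compactness of the embedding $H^1\hookrightarrow L^2$ and of the endpoint trace $u\mapsto u(\pi/2)$ directly, whereas the paper first shows $\mathrm{Im}(T)\subset H^2([r^\star_{3M},\pi/2])$ and then applies Rellich--Kondrachov.
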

\begin{proof}
  Define the continuous, symmetric, positive, bilinear form $b$ on $H^1_0\le((r^\star_{3M},\pi/2]\ri)$ by
  \begin{align*}
    b(f,v) & := \int_{r^\star_{3M}}^{\frac{\pi}{2}}fv \,\d r^\star + \frac{12M}{\ellmode(\ellmode+1)(\ellmode(\ellmode+1)-2)} \int_{r^\star_{3M}}^{\frac{\pi}{2}} \le(f'v + fv'\ri)\,\d r^\star. 
  \end{align*}
  By Riesz-Fr\'echet theorem (see~\cite[Section V]{Bre05}), there exists a continuous linear operator $T:H^1_0\le((r^\star_{3M},\pi/2]\ri) \to H^1_0\le((r^\star_{3M},\pi/2]\ri)$ such that
  \begin{align}\label{eq:defTsolutionoperator}
    \forall v\in H^1_0\le((r^\star_{3M},\pi/2]\ri),\quad a((Tf),v) = b(f,v). 
  \end{align}
  Since $b$ is positive and symmetric, the operator $T$ is positive and symmetric with respect to $a$. Moreover, for $f\in H^1_0\le((r^\star_{3M},\pi/2]\ri)$,~\eqref{eq:defTsolutionoperator} rewrites
  \begin{align}\label{eq:proofupgradeH1H2}
    \begin{aligned}
    & \int_{r^\star_{3M}}^{\frac{\pi}{2}}\le((Tf)'+ \frac{6Mk^2}{\ellmode(\ellmode+1)-2}(Tf) - \frac{12M}{\ellmode(\ellmode+1)(\ellmode(\ellmode+1)-2)} f \ri) v' \, \d r^\star \\
    = & \int_{r^\star_{3M}}^{\frac{\pi}{2}}\underbrace{\le(f + \frac{12M}{\ellmode(\ellmode+1)(\ellmode(\ellmode+1)-2)} f' -w\le(\ellmode(\ellmode+1)-\frac{6M}{r}\ri)(Tf) -  \frac{6Mk^2}{\ellmode(\ellmode+1)-2}(Tf)'\ri)}_{\in L^2((r^\star_{3M},\pi/2))} v \,\d r^\star
    \end{aligned}
  \end{align}
  for all $v\in H^1_0\le((r^\star_{3M},\pi/2]\ri)$. Hence $(Tf)''$ is locally in $L^2$ and~\eqref{eq:equationReigenfunction} holds locally in $L^2$. But $(Tf) \in L^2((r^\star_{3M},\pi/2))$, thus, by~\eqref{eq:equationReigenfunction}, $(Tf)''\in L^2((r^\star_{3M},\pi/2))$ and $\mathrm{Im}(T) \subset H^2([r^\star_{3M},\pi/2])$. Hence, by Rellich-Kondrachov theorem (see~\cite[Section IX]{Bre05}), $T$ is compact. Therefore, by the spectral theorem, $T$ is diagonalisable in an Hilbert basis for $a$ and its maximal eigenvalue $\om_\ellmode^{-2}$ is obtained by~\eqref{eq:minimizationproblemRWR} (see~\cite[Section VI]{Bre05}). Bootstrapping the regularity of the eigenfunctions, one obtains that these are smooth functions on $[r^\star_{3M},\pi/2]$. Applying Lemma~\ref{lem:weakformulationeigenvalueQMRW}, this finishes the proof of the proposition.
\end{proof}

\begin{remark}\label{rem:Hilbertbasisdensity}
  We define the  Hilbert space 
  \begin{align}\label{eq:defHHRobin}
    \begin{aligned}
      \mathcal{H} & := \bigg\{u \in H^3\le([r^\star_{3M},\pi/2]\ri) \cap H^1_0\le((r^\star_{3M},\pi/2]\ri):\\
      & \quad\quad\quad\quad\quad\quad\left[2u''-k^2\ell(\ell+1)u + \frac{(\ell-1)\ell(\ell+1)(\ell+2)}{6M}u'\right]\left(\frac\pi2\right) = 0 \bigg\},
      \end{aligned}
  \end{align}
  -- which is a closed subspace of $H^3\le([r^\star_{3M},\pi/2]\ri)$ --, and the scalar product 
  \begin{align*}
    \widetilde{a}(u,v) := a(Lu,Lv) + \ell^4a(u,v),
  \end{align*}
  with $Lu := -u''+w\left(\ell(\ell+1)-\frac{6M}{r}\right)u$ -- which defines a norm equivalent to the $H^3$-norm --. One can check that
  \begin{itemize}
  \item the eigenfunctions $R_{\QM,\ell,n}$ obtained in Proposition~\ref{prop:diageigenvalueQMRW} belong to $\HH$ (see~\eqref{eqReigenfunctionsBC}),
  \item $T(\mathcal{H})\subset\mathcal{H}$ and $T:\mathcal{H}\to\mathcal{H}$ is continuous and compact,
  \item the operator $T$ of the proof of Proposition~\ref{prop:diageigenvalueQMRW}, restricted to $\mathcal{H}$, is self-adjoint for $\widetilde{a}$.
  \end{itemize}
  Thus, the Hilbert basis $(R_{\QM,\ell,n})_{n\in\mathbb{N}}$ of $H^1_0\le((r^\star_{3M},\pi/2]\ri)$ obtained in Proposition~\ref{prop:diageigenvalueQMRW} is also a Hilbert basis of $\mathcal{H}$ for the scalar product $\widetilde{a}$ (in particular, it is dense in $\mathcal{H}$ for the $H^3$-topology).\footnote{Since, by the spectral theorem, $T$ is diagonalisable in a Hilbert basis of $\HH$, whose basis vectors must coincide with the $R_{\QM,\ell,n}$.} Note also that, for any function $u\in\mathcal{H}$, we have the following relations between the projections in $H^1_0$ and in $\HH$
  \begin{align}\label{eq:diffprojHilberts}
    \widetilde{a}(u,R_{\QM,\ell,n}) & = \le(\omega_{\ell,n}^4+\ell^4\ri)a(u,R_{\QM,\ell,n}).
  \end{align}
  Following the same ideas, defining the scalar product $(u,v)\mapsto a(L^pu,L^pv)+\ell^{4p}a(u,v)$ on the Hilbert space $\mathcal{H}\cap H^{2p+1}\le([r^\star_{3M},\pi/2]\ri)$, one can show that $(R_{\QM,\ell,n})_{n\in\mathbb{N}}$ is a Hilbert basis of $\mathcal{H}\cap H^{2p+1}\le([r^\star_{3M},\pi/2]\ri)$ for all $p\in\mathbb{N}$ for the above mentioned scalar product (in particular, dense for the $H^{2p+1}$-topology), and we have an analogous formula to~\eqref{eq:diffprojHilberts}.
\end{remark}

\subsection{Well-posedness of the Regge-Wheeler problem}\label{sec:WPRobinRW}
We can use the results of the previous section to obtain a well-posedness statement for the Regge-Wheeler problem for $\Psi^R$ with the higher order boundary condition~\eqref{eq:RWBCdecoRobin}. We first state a well-posedness result for the truncated problem which is a consequence of Proposition~\ref{prop:diageigenvalueQMRW}:
\begin{proposition}[Well-posedness of the truncated ``Robin'' problem]\label{lem:WPRobintrunc}
  Let $t_0\in\mathbb{R}$. Let $(\Phi_0^R,\dot\Phi^R_0)$ be two smooth spin-weighted functions on $\{t=t_0\}\cap\{r\geq3M\}$, regular at infinity (see Section~\ref{sec:regspinfunctions}), such that $\Phi_0^R|_{r=3M}=\dot\Phi_0^R|_{r=3M}=0$ and such that the following corner conditions are satisfied
  \begin{align}\label{eq:cornerconditionRobintrunc}
    \begin{aligned}
    2\pr_{r^\star}^2\Phi_0^R - k^2\LL\Phi_0^R + \frac{\LL(\LL-2)}{6M}\pr_{r^\star}\Phi_0^R & \xrightarrow{r\to+\infty} 0,\\
    2\pr_{r^\star}^2\dot\Phi_0^R - k^2\LL\dot\Phi_0^R + \frac{\LL(\LL-2)}{6M}\pr_{r^\star}\dot\Phi_0^R & \xrightarrow{r\to+\infty} 0.
    \end{aligned}
  \end{align}
  Let $\mathfrak{F}$ be a smooth spin-weighted function on $\{t\geq t_0\}\cap\{r\geq 3M\}$ such that $\mathfrak{F}(t)|_{r=3M}=0$ and $\mathfrak{F}(t)$ satisfies the boundary condition~\eqref{eq:cornerconditionRobintrunc} at infinity. Then, there exists a unique smooth solution $\Phi^R$ to the Regge-Wheeler equation~\eqref{eq:RW} in the region $\{t\geq t_0\}\cap\{r\geq 3M\}$ satisfying Dirichlet conditions at $r=3M$ and the ``Robin'' boundary conditions~\eqref{eq:RWBCdecoRobin} at infinity. 
\end{proposition}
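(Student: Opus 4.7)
The approach is to construct the solution by separation of variables, using the Hilbert basis $(R_{\QM,\ell,n})_{n\in\mathbb{N}}$ of eigenfunctions from Proposition~\ref{prop:diageigenvalueQMRW} together with the angular Hilbert basis $(S_{m\ellmode}(\varth)e^{+im\varphi})_{\ellmode\geq 2, |m|\leq\ellmode}$ of Proposition~\ref{lem:Hilbertbasis}. The crucial observation is that since each $R_{\QM,\ell,n}$ satisfies both the Dirichlet condition at $r=3M$ and the Robin condition~\eqref{eq:RWBCdecoRobin} with $\pr_t^2$ replaced by $-\om_{\ell,n}^2$, any pure mode $c(t)R_{\QM,\ell,n}(r^\star)S_{m\ellmode}(\varth)e^{+im\varphi}$ whose time-coefficient obeys $\ddot{c}+\om_{\ell,n}^2 c=f(t)$ automatically satisfies both boundary conditions of the truncated problem, and the Regge-Wheeler equation reduces to that ODE.

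\emph{Construction.} The first step is to expand the data and source in the above double basis, obtaining coefficients $a_{\ell,m,n}, b_{\ell,m,n}, f_{\ell,m,n}(t)$. For each triple $(\ell,m,n)$, the explicit Duhamel representation provides a unique solution $c_{\ell,m,n}(t)$ to the linear ODE
\begin{align*}
\ddot c_{\ell,m,n}(t) + \om^2_{\ell,n} c_{\ell,m,n}(t) & = f_{\ell,m,n}(t), & c_{\ell,m,n}(t_0) & = a_{\ell,m,n}, & \dot c_{\ell,m,n}(t_0) & = b_{\ell,m,n}.
\end{align*}
Setting
\begin{align*}
\Phi^R(t,r^\star,\varth,\varphi) := \sum_{\ell, m, n} c_{\ell,m,n}(t)\, R_{\QM,\ell,n}(r^\star)\, S_{m\ellmode}(\varth)e^{+im\varphi}
\end{align*}
then defines a candidate solution. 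Uniqueness follows by projecting the difference of two solutions onto the double basis and invoking the ODE uniqueness statement modewise.

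\emph{Main obstacle.} The delicate point is to promote the formal series to a genuinely smooth solution on $\{t\geq t_0\}\cap\{r\geq 3M\}$, which requires fast decay of the coefficients as $\ell,n\to+\infty$, uniformly on compact time intervals. This is precisely where the corner conditions~\eqref{eq:cornerconditionRobintrunc} enter: together with their iterated analogues, obtained by applying $\pr_t^{2k}$ to~\eqref{eq:RWBCdecoRobin} and using the equation $\RW\Phi^R=\mathfrak{F}$ to trade time derivatives for spatial ones, they ensure that every trace $\pr_t^{2k}\Phi^R|_{t=t_0}$ lies in the higher-regularity Hilbert space $\mathcal{H}\cap H^{2p+1}([r^\star_{3M},\pi/2])$ from Remark~\ref{rem:Hilbertbasisdensity}. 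Parseval-type inequalities for the corresponding modified scalar products (generalising~\eqref{eq:diffprojHilberts}) then convert this regularity into polynomial decay of the mode coefficients in $\om_{\ell,n}$, and a standard Sobolev embedding argument upgrades the formal sum to a smooth solution with the required regularity at $r=+\infty$ in the sense of Section~\ref{sec:regspinfunctions}.
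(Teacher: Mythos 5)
Your proposal is correct and follows essentially the same route as the paper: expansion in the double Hilbert basis $R_{\QM,\ell,n}(r^\star)S_{m\ell}(\varth)e^{-im\varphi}$ of the space $\mathcal{H}$, modewise Duhamel solution of $\ddot c+\om_{\ell,n}^2c=f_{\ell,m,n}$, uniqueness by projection of the equation onto the basis, and smoothness via convergence in the higher-order spaces $\mathcal{H}\cap H^{2p+1}$ of Remark~\ref{rem:Hilbertbasisdensity}. If anything, your discussion of the iterated corner/compatibility conditions needed to place the traces in those higher-order spaces is spelled out more explicitly than in the paper, which simply invokes smoothness of the data.
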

\begin{proof}
  From the fact that $\le(S_{m\ellmode}(\varth)e^{-im\varphi}\ri)_{\ellmode\geq2, m\in\mathbb{Z}}$ is a Hilbert basis of the spin-$+2$-weighted complex $H^3$-functions on $\mathbb{S}^2$ and the fact that $\le(R_{\QM,\ellmode,n}(r^\star)\ri)_{\ellmode\geq2, n\in\mathbb{N}}$ is a Hilbert basis of the real $H^1_0((r^\star_{3M},\pi/2])\cap H^{3}\le([r^\star_{3M},\pi/2]\ri)$-functions satisfying the boundary condition of~\eqref{eq:defHHRobin} (see Remark~\ref{rem:Hilbertbasisdensity}), we have that $\le(R_{\QM,\ellmode,n}(r^\star) S_{m\ellmode}(\varth)e^{-im\varphi}\ri)_{\ellmode\geq2, m\in\mathbb{Z}, n\in\mathbb{N}}$ is a Hilbert basis of the Hilbert space $\mathcal{H}$ of spin-$+2$-weighted $H^1_0\left((r^\star_{3M},\pi/2]\times\mathbb{S}^2\right)\cap H^{3}\left([r^\star_{3M},\pi/2]\times\mathbb{S}^2\right)$-functions which satisfy the boundary conditions
  \begin{align*}
    \le[2 \pr_{r^\star}^2\Phi - k^2\LL\Phi + \frac{\LL(\LL-2)}{6M}\pr_{r^\star}\Phi\ri]\le(\frac\pi2\ri)=0.
  \end{align*}
  Denote by $\Phi_{0,\ellmode,m,n}^R,\dot\Phi^R_{0,\ellmode,m,n},\mathfrak{F}_{\ellmode,m,n}(t)$ the projections of $\Phi^R_0,\dot\Phi^R_0,\mathfrak{F}(t)$ on that Hilbert basis (by the regularity and corner compatibility assumptions on $\Phi_0^R,\dot\Phi_0^R$ and $\mathfrak{F}$, these quantities belong to the Hilbert space $\mathcal{H}$). We define
  \begin{align}\label{eq:formulahilbertsolRW}
    \begin{aligned}
      \Phi^R(t,r^\star,\varth,\varphi) & := \sum_{\ellmode\geq 2}\sum_{n\in\mathbb{N}}\sum_{m\in\mathbb{Z}} \bigg( \Phi^R_{0,\ellmode,m,n} \cos\le(\om_{\ellmode,n}(t-t_0)\ri) + \dot\Phi^R_{0,\ellmode,m,n}\om^{-1}_{\ellmode,n}\sin\le(\om_{\ellmode,n}(t-t_0)\ri) \\
      & \quad \quad \quad + \frac{1}{\om_{\ellmode,n}}\int_{t_0}^{t}\sin\le(\om_{\ellmode,n}(t-t')\ri)\mathfrak{F}_{\ellmode,m,n}(t') \,\d t'\bigg) R_{\QM,\ellmode,n}(r^\star) S_{m\ellmode}(\varth)e^{-im\varphi}.
    \end{aligned}
  \end{align}
  Note that~\eqref{eq:formulahilbertsolRW} defines a function $\Phi^R$ in $\HH$ because $\Phi_0^R,\dot\Phi_0^R$ and $\mathfrak{F}$ belong to $\HH$ and because the coefficients $\cos\le(\om_{\ellmode,n}(t-t_0)\ri)$, $\om^{-1}_{\ellmode,n}\sin\le(\om_{\ellmode,n}(t-t_0)\ri)$ are bounded functions of $\ellmode,n$. Applying the Regge-Wheeler operator to~\eqref{eq:formulahilbertsolRW}, it is clear, using~\eqref{eq:Smleigenp2} and~\eqref{eq:equationReigenfunction}, that $\Phi^R$ is a solution of the inhomogeneous Regge-Wheeler equation $\RW\Phi^R = \mathfrak{F}$ and since $\Phi$ is in $\HH$ it satisfies the Dirichlet condition at $r=3M$ and the desired ``Robin'' boundary condition~\eqref{eq:RWBCdecoRobin} at infinity. Moreover, since $\le(R_{\QM,\ellmode,n}(r^\star) S_{m\ellmode}(\varth)e^{-im\varphi}\ri)$ is a Hilbert basis in any more regular Sobolev space (see Remark~\ref{rem:Hilbertbasisdensity}), and since $\Phi^R_0,\dot\Phi^R_0,\mathfrak{F}$ are assumed to be smooth, the sums in~\eqref{eq:formulahilbertsolRW} converge in any more regular Sobolev space and~\eqref{eq:formulahilbertsolRW} defines a smooth function.\\
  That~\eqref{eq:formulahilbertsolRW} gives the unique solution in $\mathcal{H}$ to the Regge-Wheeler problem can be seen by projection of the Regge-Wheeler equation onto the Hilbert basis $\le(R_{\QM,\ellmode,n}(r^\star) S_{m\ellmode}(\varth)e^{-im\varphi}\ri)$. This finishes the proof of the proposition.
\end{proof}

We finally use a domain of dependence argument to establish well-posedness for the (non-truncated) problem we are actually interested in: 
\begin{theorem}[Well-posedness of the ``Robin'' problem]\label{thm:WPRobin}
  Let $t^\star_0\in\mathbb{R}$. Let $(\Psi^R_0,\dot\Psi^R_0)$ be two smooth spin-weighted functions on $\Si_{t^\star_0}$, regular at the future event horizon, and such that we have the following corner compatibility condition
  \begin{align}\label{eq:cornerconditionRobin}
    \begin{aligned}
    2\pr_{r^\star}^2\Psi_0^R - k^2\LL\Psi_0^R + \frac{\LL(\LL-2)}{6M}\pr_{r^\star}\Psi_0^R & \xrightarrow{r\to+\infty} 0,\\
    2\pr_{r^\star}^2\dot\Psi_0^R - k^2\LL\dot\Psi_0^R + \frac{\LL(\LL-2)}{6M}\pr_{r^\star}\dot\Psi_0^R & \xrightarrow{r\to+\infty} 0.
    \end{aligned}
  \end{align}
  Let $\mathfrak{F}$ be a smooth spin-weighted function on $\{t^\star\geq t^\star_0\}$ regular at the future event horizon and at infinity, and assume that $\mathfrak{F}(t)$ satisfies the corner condition at infinity~\eqref{eq:cornerconditionRobin}. There exists a unique smooth spin-weighted function $\Psi^R$ on $\{t^\star\geq t^\star_0\}$ which solves the inhomogeneous Regge-Wheeler equation
  \begin{align}\label{eq:RWinhoWP}
    \RW\Psi^R = \mathfrak{F},
  \end{align}
is regular at the future event horizon, satisfies the ``Robin'' boundary conditions~\eqref{eq:RWBCdecoRobin} at infinity and is such that $\Psi^R|_{t^\star=t^\star_0} = \Psi^R_0$ and $\pr_{t^\star}\Psi^R|_{t^\star=t^\star_0}=\dot\Psi^R_0$. Moreover, we have for all $t^\star\geq t^\star_0$
\begin{align}\label{est:regularenergyestimateRobinRW}
  \begin{aligned}
    & \norm{\pr_{t^\star}\Psi^R}_{L^2(\Si_{t^\star})} + \norm{\Psi^R}_{H^1(\Si_{t^\star})} + \norm{\pr_{t^\star}\Psi^R}_{H^{-2}(S_{t^\star,\infty})} + \norm{\Psi^R}_{H^{-1}(S_{t^\star,\infty})} \\
    \les & \; \big\Vert\dot\Psi^R_0\big\Vert_{L^2(\Si_{t^\star_0})} + \norm{\Psi^R_0}_{H^1(\Si_{t^\star_0})} + \norm{\pr_{t^\star}\Psi^R}_{H^{-2}(S_{t^\star_0,\infty})} + \norm{\Psi^R}_{H^{-1}(S_{t^\star_0,\infty})} + \int_{t^\star_0}^{t^\star}\norm{w^{-1}\mathfrak{F}}_{L^2\le(\Si_{t^{\ast,'}}\ri)}\,\d t^{\ast,'},
  \end{aligned}
  \end{align}
  where $S_{t^\star,\infty}=\Si_{t^\star}\cap\II$, together with consistent, higher order estimates.
\end{theorem}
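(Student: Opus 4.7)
The plan is to bootstrap from the truncated well-posedness of Proposition~\ref{lem:WPRobintrunc} to the full exterior by combining finite speed of propagation in $r^\star$ with a partition of the initial data. Fix $\de>0$ and a smooth radial cutoff $\chi$ with $\chi\equiv 1$ on $\{r\geq 3M+2\de\}$ and $\chi\equiv 0$ on $\{r\leq 3M+\de\}$, and decompose $(\Psi^R_0,\dot\Psi^R_0,\mathfrak{F})$ accordingly into a $\chi$-piece (``near infinity'') and a $(1-\chi)$-piece (``near horizon''). Because $\chi\equiv 1$ in a neighbourhood of $\II$, each piece separately satisfies the corner conditions~\eqref{eq:cornerconditionRobin} (trivially for the $(1-\chi)$-piece, which vanishes near $\II$).

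The $\chi$-piece has data supported in $\{r\geq 3M+\de\}$, so it vanishes in a neighbourhood of $\{r=3M\}$ and Proposition~\ref{lem:WPRobintrunc} directly produces a smooth solution $\Psi^R_I$ on $\{t^\star\geq t^\star_0,\, r\geq 3M\}$ satisfying~\eqref{eq:RWBCdecoRobin}. For the $(1-\chi)$-piece, whose data is compactly supported in $\{r_+\leq r\leq 3M+2\de\}$, pick $R_\ast>3M+2\de$ and a time step $T>0$ small enough that, by finite speed of propagation in $r^\star$, any wave originating from $\{r\leq 3M+2\de\}$ remains in $\{r<R_\ast\}$ on $[t^\star_0,t^\star_0+T]$, while any wave originating from $\{r\geq 3M+\de\}$ remains in $\{r>3M\}$ on the same interval. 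On $\{r_+\leq r\leq R_\ast\}\cap[t^\star_0,t^\star_0+T]$ solve the Regge-Wheeler equation by classical methods (the horizon is characteristic so no boundary condition is needed there; an auxiliary Dirichlet condition at $r=R_\ast$ closes the problem), obtaining a smooth $\Psi^R_H$. Then $\Psi^R:=\Psi^R_I+\Psi^R_H$, extended by zero where not yet defined, is a smooth solution on $\mathcal{M}\cap[t^\star_0,t^\star_0+T]$ with the correct initial data, regularity at $\mathcal{H}^+$, and Robin condition at $\II$; the source $\mathfrak{F}$ is included by Duhamel. Iterating this local construction over successive time slabs of width $T$ (corner compatibility at each new slice being automatic, since it comes from a solution of the previous slab) produces a global smooth solution.

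To establish~\eqref{est:regularenergyestimateRobinRW}, multiply~\eqref{eq:RWinhoWP} by $\pr_{t^\star}\Psi^R$ and integrate over $\Sigma_{t^\star}$. The bulk term gives the usual energy density; the boundary term at $\II$ is, thanks to~\eqref{eq:RWBCdecoRobin}, a perfect time-derivative of a non-negative boundary energy involving $\norm{\pr_{t^\star}\Psi^R}^2_{H^{-2}(S_{t^\star,\infty})}+\norm{\Psi^R}^2_{H^{-1}(S_{t^\star,\infty})}$ (this is the decoupled boundary energy identity exploited in~\cite{Gra.Hol24a}). Gr\"onwall's inequality then yields~\eqref{est:regularenergyestimateRobinRW} and uniqueness. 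Higher-order estimates follow by commuting~\eqref{eq:RWinhoWP} with $\pr_{t^\star}$ and with the angular vectorfields from Section~\ref{sec:regspinfunctions}, and repeating the argument.

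The principal subtlety is that~\eqref{eq:RWBCdecoRobin} involves $\pr_{t^\star}^2\Psi^R$, so the boundary condition interacts non-trivially with the evolution. One must verify that~\eqref{eq:cornerconditionRobin} is preserved under the iteration (which it is, as it amounts to the statement that~\eqref{eq:RWBCdecoRobin} holds for both $\Psi^R$ and $\pr_{t^\star}\Psi^R$, and both are preserved by the dynamics), and that the boundary terms in the energy identity assemble into a signed quantity. Both facts ultimately rely on the specific algebraic form of~\eqref{eq:RWBCdecoRobin} engineered from the decoupling of Proposition~\ref{prop:RWdeco}, precisely so that it produces a genuine conserved boundary energy.
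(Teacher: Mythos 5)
Your overall strategy --- reduce to the truncated ``Robin'' problem of Proposition~\ref{lem:WPRobintrunc} near infinity, use classical local theory near the horizon, glue by finite speed of propagation, and close with the energy identity adapted to~\eqref{eq:RWBCdecoRobin} --- is the same as the paper's, and your partition-of-unity/superposition/time-stepping variant of the gluing is a legitimate alternative to the paper's single backward-extension-plus-domain-of-dependence argument. However, there is one genuine gap. Proposition~\ref{lem:WPRobintrunc} is formulated for data posed on the \emph{constant-$t$} slice $\{t=t_0\}\cap\{r\geq 3M\}$, and this is essential to its proof: the solution is built from the Hilbert-basis expansion~\eqref{eq:formulahilbertsolRW} in modes $e^{\pm i\om_{\ellmode,n}(t-t_0)}$, which requires Cauchy data on a level set of $t$. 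Your $\chi$-piece of the data lives on $\Si_{t^\star_0}\cap\{r\geq 3M+\de\}$, which is \emph{not} a constant-$t$ hypersurface (in $r\geq 3M$ one has $t=t^\star-r^\star+\frac{1}{k}\arctan(kr)$, so the slice $\{t=t_0\}$ through the corner $\Si_{t^\star_0}\cap\II$ lies strictly to the past of $\Si_{t^\star_0}$ away from the corner). You apply the proposition directly to this data without addressing the mismatch. Repairing this requires first transporting the data from $\Si_{t^\star_0}$ to $\{t=t_0\}\cap\{r\geq 3M\}$ by a backward (characteristic) solve in $D^-\le(\Si_{t^\star_0}\cup(\HH^+\cap\{t^\star\leq t^\star_0\})\ri)$, checking that this region actually contains $\{t=t_0\}\cap\{r\geq 3M\}$ --- this is precisely the role of the region $\DD_{\mathrm{grey}}$ and the cut-off $\widetilde{\widetilde{\Psi}}=\chi(r)\widetilde{\Psi}$ in the paper's proof, and it is the main geometric content of the argument that your write-up skips.

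Two smaller points. First, your energy identity with the multiplier $\pr_{t^\star}\Psi^R$ yields a boundary energy at $\II$ of the right form, but the flux through $\Si_{t^\star}$ degenerates at $\HH^+$, so to control the non-degenerate $H^1(\Si_{t^\star})$ norm appearing in~\eqref{est:regularenergyestimateRobinRW} you also need the redshift estimate; the paper simply invokes the energy and redshift estimates of~\cite{Gra.Hol24a} once a smooth solution exists, and you should do the same rather than re-derive them. Second, your claim that the corner conditions propagate to later slices is correct and is needed for the time-stepping; it follows, as you say, from the fact that~\eqref{eq:cornerconditionRobin} is just~\eqref{eq:RWBCdecoRobin} with $\pr_t^2$ eliminated via the equation, applied to $\Psi^R$ and $\pr_{t^\star}\Psi^R$.
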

  
\begin{figure}[h!]
  \centering
  \includegraphics[height=8cm]{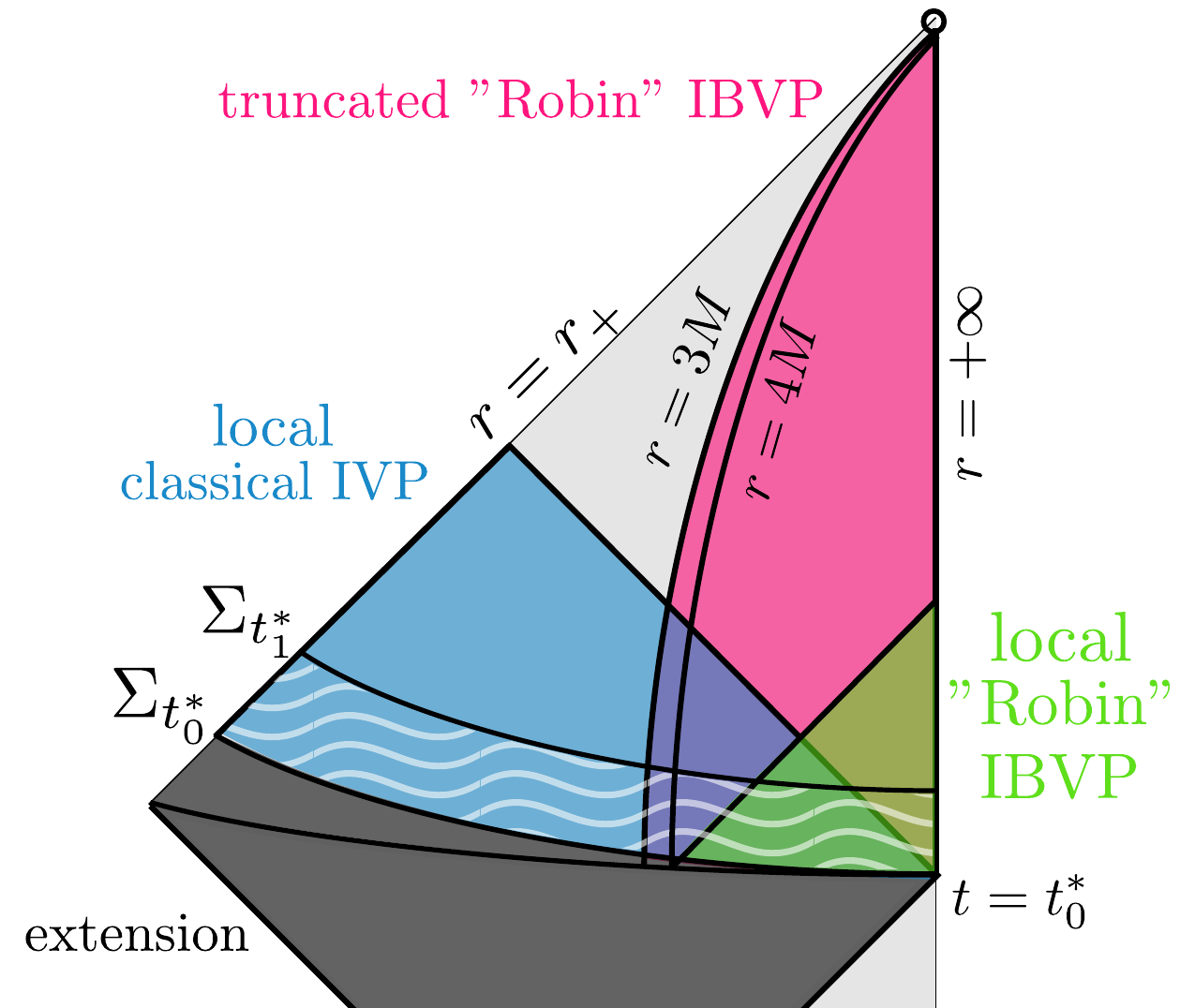}
  \caption{Global well-posedness of the "Robin" Regge-Wheeler problem}
  \label{fig:RobinWP}
\end{figure}
\begin{proof}[Proof of Theorem~\ref{thm:WPRobin}]
  The idea of the proof is to paste together a classical initial value well-posedness result in the domain of dependence of the initial slice and the truncated initial boundary value problem of Lemma~\ref{lem:WPRobintrunc}. Let us define (see Figure~\ref{fig:RobinWP})
  \begin{align*}
    \DD_{\mathrm{blue}} & = D^+\le(\Si_{t^\star_0}\ri),\\
    \DD_{\mathrm{grey}} & = D^-\le(\Si_{t^\star_0}\cup\le(\HH^+\cap\{t^\star\leq t^\star_0\}\ri)\ri),\\
    \DD_{\mathrm{pink}} & = \{t \geq t^\star_0\} \cap \{r\geq 3M\},\\
    \DD_{\mathrm{green}} & = D^+\le(\le(\{t= t_0\}\cap\{r \geq 4M\}\ri)\cup\{r=+\infty\}\ri)\cap D^+\le(\Si_{t^\star_0}\cup\{r=+\infty\}\ri).
  \end{align*}
  On $\DD_{\mathrm{blue}}$, there exists a unique smooth solution $\Psi^R$ to the inhomogeneous Regge-Wheeler equation~\eqref{eq:RWinhoWP} with initial data $\Psi^R_0,\dot\Psi^R_0$. Taking appropriate initial data on $\HH^+\cap\{t^\star\leq t^\star_0\}$, smoothly extending $\mathfrak{F}$ on $\DD_{\mathrm{grey}}$, and applying a (characteristic) classical well-posedness result, there exists a (non-unique) smooth solution $\widetilde{\Psi}$ to the inhomogeneous Regge-Wheeler equation~\eqref{eq:RWinhoWP} in $\DD_{\mathrm{grey}}$ which coincides with the initial data $\Psi_0^R,\dot\Psi_0^R$ on $\Si_{t^\star_0}$. Now, define $\widetilde{\widetilde{\Psi}} := \chi(r) \widetilde{\Psi}$, where $\chi$ is a smooth cut-off function such that $\chi|_{r\leq 3M}=0$ and $\chi|_{r\geq 4M}=1$, and let $\widetilde{\widetilde{\Psi}}^R$ be the smooth solution on $\DD_{\mathrm{pink}}$ to the truncated ``Robin'' Regge-Wheeler problem of Proposition~\ref{lem:WPRobintrunc} with initial data given by $\widetilde{\widetilde{\Psi}}$ on $t=t^\star_0$ and with source term given by $\widetilde{\mathfrak{F}} = \chi(r)\mathfrak{F}$. We have that $\widetilde{\widetilde{\Psi}}^R$ coincides with the original initial data $\Psi_0^R,\dot\Psi_0^R$ on $\Si_{t^\star_0}\cap D^+\le(\le(\{t = t_0\}\cap\{r \geq 4M\}\ri)\cup\{r=+\infty\}\ri)$. Hence, by energy estimates, we have $\Psi^R = \widetilde{\widetilde{\Psi}}^R$ in $\DD_{\mathrm{blue}}\cap\DD_{\mathrm{green}}$ and we can define $\Psi^R := \widetilde{\widetilde{\Psi}}^R$ in $\DD_{\mathrm{green}}$. Let $t^\star_1 > t^\star_0$ be such that $\{t^\star_0 \leq t^\star\leq t^\star_1\}\subset \DD_{\mathrm{blue}} \cup \DD_{\mathrm{green}}$ (see Figure~\ref{fig:RobinWP}). Then, $\Psi^R$ is a smooth solution to the inhomogeneous Regge-Wheeler equation with ``Robin'' boundary conditions in $\{t^\star_0\leq t^\star\leq t^\star_1\}$. Having established existence of a smooth solution, we can apply the energy and red-shift estimates of~\cite{Gra.Hol24a}, which immediately give the uniqueness and establish that~\eqref{est:regularenergyestimateRobinRW} is satisfied in that region, together with consistent higher order estimates. By a continuity argument, the solution exists in the full domain $\{t^\star\geq t^\star_0\}$ and~\eqref{est:regularenergyestimateRobinRW} holds globally. This finishes the proof of the theorem.
\end{proof}

\subsection{Proof of Theorem~\ref{thm:QMRW}}

\subsubsection{Localisation of the fundamental mode (lowest eigenvalue)}
We first choose $\ell$ large enough so that the fundamental eigenvalue satisfies $\omega_\ell \sim \ell$. 
\begin{lemma}\label{lem:preciseomell2}
  Let $\eta>0$. There exists $\ellmode$ sufficiently large, such that the fundamental eigenvalue $\om_\ellmode$ defined in Proposition~\ref{prop:diageigenvalueQMRW} satisfies
  \begin{align}\label{eq:preciseomell2}
    \frac{k^2}{2} \leq \frac{\om_{\ellmode}^2}{\ellmode(\ellmode+1)} \leq k^2+\eta.
  \end{align}
\end{lemma}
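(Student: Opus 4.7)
The plan is to work directly from the Rayleigh characterisation~\eqref{eq:minimizationproblemRWR}, using two elementary properties of the radial potential $V := \frac{\Delta}{r^4}\bigl(\ell(\ell+1)-\frac{6M}{r}\bigr)$ that governs the separated Regge-Wheeler equation: (i) writing $\frac{\Delta}{r^4} = k^2 + \frac{1}{r^2}\bigl(1-\frac{2M}{r}\bigr)$, one has $\frac{\Delta}{r^4} \geq k^2$ on $[2M,+\infty)$, so that $V \geq k^2(\ell(\ell+1)-2)$ on $[3M,+\infty)$ (using $6M/r \leq 2$ there); and (ii) $\frac{\Delta}{r^4}\to k^2$ as $r\to+\infty$, so that for every $\varepsilon>0$ there exists $R_0 = R_0(\varepsilon,k,M)>3M$ with $V\leq (k^2+\varepsilon)\ell(\ell+1)$ on $[R_0,+\infty)$.

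For the upper bound $\om_\ell^2/\ell(\ell+1)\leq k^2+\eta$, the plan is to take $\varepsilon=\eta/2$, pick a fixed non-zero test function $\chi\in C^\infty_c\bigl((r^\star(R_0),\tfrac{\pi}{2})\bigr)$, which in particular satisfies $\chi(\tfrac{\pi}{2})=0$ so that both boundary contributions in~\eqref{eq:minimizationproblemRWR} drop out, extend $\chi$ by zero to $H^1_0\bigl((r^\star_{3M},\tfrac{\pi}{2}]\bigr)$, and evaluate the Rayleigh quotient on $\chi$ to obtain
\[
\om_\ell^2 \;\leq\; \frac{\int (\chi')^2\,\d r^\star}{\int \chi^2\,\d r^\star} \;+\; (k^2+\eta/2)\,\ell(\ell+1).
\]
The first summand is a fixed constant $C_\chi = C_\chi(k,M,\eta)$ independent of $\ell$, so dividing by $\ell(\ell+1)$ and choosing $\ell$ large enough that $C_\chi/\ell(\ell+1)\leq\eta/2$ closes the bound.

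For the lower bound, the plan is to drop the $\int (R')^2$ term in the numerator of~\eqref{eq:minimizationproblemRWR} and use the pointwise estimate $V\geq k^2(\ell(\ell+1)-2)$, thereby reducing the quotient to
\[
\om_\ell^2 \;\geq\; \inf_{R} \frac{k^2(\ell(\ell+1)-2)\int R^2\,\d r^\star \;+\; \tfrac{6Mk^2}{\ell(\ell+1)-2}\,R(\tfrac{\pi}{2})^2}{\int R^2\,\d r^\star \;+\; \tfrac{12M}{\ell(\ell+1)(\ell(\ell+1)-2)}\,R(\tfrac{\pi}{2})^2}.
\]
The elementary inequality $(aA+bB)/(cA+dB)\geq\min(a/c,b/d)$, valid for positive $A,B,c,d$, then yields $\om_\ell^2\geq\min\bigl(k^2(\ell(\ell+1)-2),\,\tfrac12 k^2\ell(\ell+1)\bigr) = \tfrac12 k^2\ell(\ell+1)$ for $\ell\geq 2$, completing the proof.

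There is no serious analytic obstacle; the main point requiring some care is the lower bound, where one must apply the ratio inequality globally rather than separately bounding bulk and boundary terms. This is essential because the Robin boundary coefficient $\tfrac{6Mk^2}{\ell(\ell+1)-2}$ in the numerator is smaller in $\ell$ than the bulk term, and it is the matching of this scaling with the corresponding denominator coefficient $\tfrac{12M}{\ell(\ell+1)(\ell(\ell+1)-2)}$ that produces exactly the factor $\tfrac12$ appearing in the claimed bound $k^2/2$.
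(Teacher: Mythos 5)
Your proof is correct and follows essentially the same route as the paper: both bounds are read off from the Rayleigh quotient~\eqref{eq:minimizationproblemRWR} using the pointwise estimates $k^2\le \frac{\Delta}{r^4}$ on $[3M,\infty)$ and $\frac{\Delta}{r^4}\to k^2$. The only (harmless) differences are that for the lower bound you apply the mediant inequality directly to the quotient, whereas the paper evaluates the Euler--Lagrange identity at the attained minimiser and argues by dichotomy on the sign of the boundary coefficient, and for the upper bound you use an interior bump test function instead of the paper's piecewise affine function with $R(\pi/2)=1$; both variants yield the same constants.
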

\begin{proof}
  From~\eqref{eq:minimizationproblemRWR} and the fact that the infimum is assumed by Proposition \ref{prop:diageigenvalueQMRW}, we have
  \begin{align*}
    0 & = \int_{r^\star_{3M}}^{\frac{\pi}{2}} \le((R'_{\QM,\ellmode})^2 + \le(w\le(\ellmode(\ellmode+1)-\frac{6M}{r}\ri)-\om_\ellmode^2\ri)R_{\QM,\ellmode}^2\ri) \,\d r^\star \\
    & \quad + \le(\frac{6Mk^2}{\ellmode(\ellmode+1)-2}-\frac{12M\om_\ellmode^2}{\ellmode(\ellmode+1)(\ellmode(\ellmode+1)-2)}\ri)\le(R_{\QM,\ellmode}(\frac{\pi}{2})\ri)^2.
  \end{align*}
  Hence, we must have that
  \begin{align}\label{eq:condRQMellnonzero}
    \om_\ellmode^2 & \geq \inf_{[r^\star_{3M},\pi/2]} \le(w\le(\ellmode(\ellmode+1)-\frac{6M}{r}\ri)\ri), & \text{or} && \frac{12M\om_\ellmode^2}{\ellmode(\ellmode+1)(\ellmode(\ellmode+1)-2)} & \geq \frac{6Mk^2}{\ellmode(\ellmode+1)-2}.
  \end{align}
  For $\ellmode\geq 2$, we have that
  \begin{align*}
    \inf_{[r^\star_{3M},\pi/2]} \le(w\le(\ellmode(\ellmode+1)-\frac{6M}{r}\ri)\ri) \geq k^2(\ellmode(\ellmode+1)-2) \geq \frac{k^2}{2} \ellmode(\ellmode+1). 
  \end{align*}
  Hence, either of the above two conditions in~\eqref{eq:condRQMellnonzero} yields the lower bound in~\eqref{eq:preciseomell2}. Let $r^\star_0 \in (r^\star_{3M},\pi/2)$. Taking $R$ in~\eqref{eq:minimizationproblemRWR} to be the piece wise affine function such that $R(r^\star)=0$ for $r^\star\in(r^\star_{3M},r^\star_0)$ and $R(\pi/2) = 1$, and using that $w$ is decreasing on $(r^\star_{3M},\pi/2)$, we deduce (from the characterisation of the eigenvalue as the infimum) that
  \begin{align*}
    \om_\ellmode^2 \leq w(r^\star_0)\ellmode(\ellmode+1) + o_{\ellmode\to+\infty}(\ellmode^2). 
  \end{align*}
  Using that $w(\pi/2) = k^2$, the upper bound in~\eqref{eq:preciseomell2} follows and this concludes the proof of the lemma.
\end{proof}

\begin{corollary}\label{cor:energyquasimoderadial}
  Using~\eqref{est:equivomellell} and~\eqref{eq:minimizationproblemRWR}, and absorbing the boundary term by a trace estimate, we have
  \begin{align}\label{est:RQMellwithbdyabs}
    \int_{r^\star_{3M}}^{\frac{\pi}{2}} \le(\om_\ellmode^2|R_{\QM,\ellmode}|^2 + \le(R'_{\QM,\ellmode}\ri)^2 +\ellmode^2|R_{\QM,\ellmode}|^2\ri)\, \d r^\star \simeq_{M,k}\ellmode^2\int_{r^\star_{3M}}^{\frac{\pi}{2}}|R_{\QM,\ellmode}|^2\,\d r^\star,
  \end{align}
  for $\ellmode$ sufficiently large depending on $M,k$. 
\end{corollary}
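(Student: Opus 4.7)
The inequality $\ellmode^2\int |R_{\QM,\ellmode}|^2 \leq \int (\om_\ellmode^2|R_{\QM,\ellmode}|^2 + (R'_{\QM,\ellmode})^2 + \ellmode^2|R_{\QM,\ellmode}|^2)\, \d r^\star$ is immediate from the positivity of the summands, so the work is entirely in the reverse direction. The plan is to leverage the fact that $R_{\QM,\ellmode}$ attains the infimum in the Rayleigh quotient~\eqref{eq:minimizationproblemRWR}. Saturating that quotient and rearranging yields the identity
\begin{align*}
\int_{r^\star_{3M}}^{\pi/2}(R'_{\QM,\ellmode})^2\,\d r^\star & = \om_\ellmode^2\int_{r^\star_{3M}}^{\pi/2}R_{\QM,\ellmode}^2\,\d r^\star - \int_{r^\star_{3M}}^{\pi/2}w\Bigl(\ellmode(\ellmode+1)-\frac{6M}{r}\Bigr)R_{\QM,\ellmode}^2\,\d r^\star \\
& \quad + \Bigl(\frac{12M\om_\ellmode^2}{\ellmode(\ellmode+1)(\ellmode(\ellmode+1)-2)} - \frac{6Mk^2}{\ellmode(\ellmode+1)-2}\Bigr)\bigl(R_{\QM,\ellmode}(\pi/2)\bigr)^2.
\end{align*}
Since $w\ge 0$ and $\ellmode(\ellmode+1)-\tfrac{6M}{r} \ge 0$ on the interval, the second term on the right-hand side is non-positive and can simply be discarded for the upper bound.

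Next, using~\eqref{est:equivomellell} we have $\om_\ellmode^2 \lesssim_{M,k} \ellmode(\ellmode+1)$, so the coefficient of the boundary term is $O(M\ellmode^{-2})$ and the leading bulk term satisfies $\om_\ellmode^2\int R_{\QM,\ellmode}^2 \lesssim \ellmode^2 \int R_{\QM,\ellmode}^2$. To handle the boundary contribution, exploit the Dirichlet condition $R_{\QM,\ellmode}(r^\star_{3M})=0$ (from~\eqref{eqReigenfunctionsBC}) to obtain the trace estimate
\begin{align*}
\bigl(R_{\QM,\ellmode}(\pi/2)\bigr)^2 = 2\int_{r^\star_{3M}}^{\pi/2} R_{\QM,\ellmode}R'_{\QM,\ellmode}\,\d r^\star \leq \varep \int_{r^\star_{3M}}^{\pi/2}(R'_{\QM,\ellmode})^2\,\d r^\star + \varep^{-1}\int_{r^\star_{3M}}^{\pi/2}R_{\QM,\ellmode}^2\,\d r^\star,
\end{align*}
by Cauchy--Schwarz and Young's inequality.

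Inserting this into the identity above and choosing $\varep = \ellmode^2/(C(M,k))$ for an appropriately large constant, one absorbs half of $\int(R'_{\QM,\ellmode})^2$ into the left-hand side, provided $\ellmode$ is sufficiently large depending on $M,k$. The remaining terms are all controlled by $\ellmode^2\int R_{\QM,\ellmode}^2\,\d r^\star$: explicitly, $\om_\ellmode^2\int R_{\QM,\ellmode}^2 \lesssim \ellmode^2 \int R_{\QM,\ellmode}^2$ and the leftover boundary contribution is of size $O(\ellmode^{-4})\int R_{\QM,\ellmode}^2$, which is negligible. This yields $\int (R'_{\QM,\ellmode})^2 \lesssim_{M,k} \ellmode^2\int R_{\QM,\ellmode}^2$, and combining this with the bound $\om_\ellmode^2 \lesssim \ellmode^2$ on the first summand of the left-hand side of~\eqref{est:RQMellwithbdyabs} concludes the proof. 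The only real subtlety is the careful choice of $\varep$ so that the boundary term is genuinely absorbable; this is precisely why the statement requires $\ellmode$ large enough relative to $M,k$.
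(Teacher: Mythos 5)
Your proof is correct and follows exactly the route the paper indicates in the statement itself: saturate the Rayleigh quotient~\eqref{eq:minimizationproblemRWR} with the minimiser, drop the non-negative potential term, bound $\om_\ellmode^2\simeq\ellmode(\ellmode+1)$ via~\eqref{est:equivomellell}, and absorb the $O(\ellmode^{-2})$ boundary term with the trace estimate $(R_{\QM,\ellmode}(\pi/2))^2\leq\varep\int(R'_{\QM,\ellmode})^2+\varep^{-1}\int R_{\QM,\ellmode}^2$. The only (harmless) imprecision is that the largeness of $\ellmode$ is really needed for~\eqref{est:equivomellell} (Lemma~\ref{lem:preciseomell2}) rather than for the absorption step, which works for any $\ellmode$ once $\varep\sim\ellmode^2$ is chosen.
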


\subsubsection{Estimating the energy in the classically forbidden region}
We have the following localisation property, which can be paraphrased by saying that the energy of the mode is exponentially small near $r=3M$. The result is proven from Agmon-type estimates (see~\cite[Section 5]{Hol.Smu14} and~\eqref{eq:basicAgmon} below).
\begin{lemma}\label{lem:Agmonfinalfinal}
  Let $r^\star_0\in(r^\star_{3M},\pi/2)$. There exists $C=C(M,k,r^\star_0)>0$, such that for $\ellmode$ sufficiently large, we have
  \begin{align}\label{eq:Agmonfinalfinal}
    \int_{r^\star_{3M}}^{r^\star_0}\le((R_{\QM,\ellmode}')^2 + \ellmode(\ellmode+1)R_{\QM,\ellmode}^2\ri) \,\d r^\star \leq C e^{-C^{-1}\ellmode} \int_{r^\star_{3M}}^{\frac{\pi}{2}}\le((R_{\QM,\ellmode}')^2 + \ellmode(\ellmode+1)R_{\QM,\ellmode}^2\ri) \,\d r^\star.
  \end{align}
\end{lemma}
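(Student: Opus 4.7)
The plan is to run a standard Agmon-type multiplier estimate. The only substantive novelty compared with the Dirichlet case treated in~\cite[Section~5]{Hol.Smu14} is the treatment of the boundary contribution produced at $r^\star=\pi/2$ by the higher-order ``Robin'' condition in~\eqref{eqReigenfunctionsBC}. First, I would fix an auxiliary intermediate radius $r^\star_1\in(r^\star_0,\pi/2)$ and denote by $V_\ellmode$ the potential appearing in~\eqref{eq:equationReigenfunction}. Invoking Lemma~\ref{lem:preciseomell2} with a sufficiently small $\eta=\eta(M,k,r^\star_0)>0$, together with the fact that $V_\ellmode/\ellmode(\ellmode+1)$ converges uniformly on $[r^\star_{3M},r^\star_1]$ to a radial function strictly bounded below by $k^2+\eta$ (this being the semi-classical trapping at $r=3M$ used to localise the quasimode, see Section~\ref{sec:overviewQMR}), I obtain a constant $c_0=c_0(M,k,r^\star_0)>0$ such that
\begin{align*}
 V_\ellmode(r^\star) - \om_\ellmode^2 \geq c_0\,\ellmode(\ellmode+1)\qquad\text{on } [r^\star_{3M},r^\star_1]
\end{align*}
for all $\ellmode$ sufficiently large.

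Second, I would introduce the continuous, piecewise-linear Agmon weight $\rho(r^\star):= \sqrt{c_0/2}\,\ellmode\,\max(r^\star_1-r^\star,0)$, which satisfies $(\rho')^2=(c_0/2)\,\ellmode^2$ on $(r^\star_{3M},r^\star_1)$, vanishes on $(r^\star_1,\pi/2)$, and takes the value $c_A\,\ellmode$ at $r^\star_0$, with $c_A:=\sqrt{c_0/2}\,(r^\star_1-r^\star_0)>0$. Multiplying~\eqref{eq:equationReigenfunction} by $e^{2\rho}R_{\QM,\ellmode}$ and integrating by parts once over $[r^\star_{3M},\pi/2]$ (the Dirichlet condition at $r^\star_{3M}$ kills the lower boundary term) yields the standard Agmon identity
\begin{align*}
 \int_{r^\star_{3M}}^{\pi/2}\!\bigl|(e^{\rho}R_{\QM,\ellmode})'\bigr|^2\,\d r^\star + \int_{r^\star_{3M}}^{\pi/2}\!\bigl(V_\ellmode-\om_\ellmode^2-(\rho')^2\bigr) e^{2\rho}R_{\QM,\ellmode}^2\,\d r^\star = R'_{\QM,\ellmode}(\tfrac{\pi}{2})\,R_{\QM,\ellmode}(\tfrac{\pi}{2}).
\end{align*}
The Robin condition in~\eqref{eqReigenfunctionsBC} together with $\om_\ellmode^2 = O(\ellmode^2)$ forces $R'_{\QM,\ellmode}(\pi/2)=O(\ellmode^{-2})R_{\QM,\ellmode}(\pi/2)$, so that, by a trace inequality combined with Corollary~\ref{cor:energyquasimoderadial}, the right-hand side is controlled by $C\,\Vert R_{\QM,\ellmode}\Vert_{L^2}^2$.

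Finally, I would split the middle integral at $r^\star_1$: on $[r^\star_{3M},r^\star_1]$ the integrand is pointwise at least $(c_0/2)\,\ellmode(\ellmode+1)\,e^{2\rho}R_{\QM,\ellmode}^2\geq 0$; on $[r^\star_1,\pi/2]$ both $\rho$ and $\rho'$ vanish, so the integrand reduces to $(V_\ellmode-\om_\ellmode^2)R_{\QM,\ellmode}^2$ and is of indefinite sign. The latter piece is absorbed via the a priori bound $\int V_\ellmode R_{\QM,\ellmode}^2 \leq C\,\ellmode^2\int R_{\QM,\ellmode}^2$, itself obtained by testing~\eqref{eq:equationReigenfunction} against $R_{\QM,\ellmode}$ and applying Corollary~\ref{cor:energyquasimoderadial}. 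Rearranging produces
\begin{align*}
 \tfrac{c_0}{2}\,\ellmode(\ellmode+1)\int_{r^\star_{3M}}^{r^\star_1} e^{2\rho}R_{\QM,\ellmode}^2\,\d r^\star + \int_{r^\star_{3M}}^{r^\star_1}\bigl|(e^{\rho}R_{\QM,\ellmode})'\bigr|^2\,\d r^\star \leq C\,\ellmode^2\int_{r^\star_{3M}}^{\pi/2}R_{\QM,\ellmode}^2\,\d r^\star,
\end{align*}
whence $e^{2\rho}\geq e^{2c_A\ellmode}$ on $[r^\star_{3M},r^\star_0]$ yields the $L^2$-bound on $R_{\QM,\ellmode}$ and, after extracting $\int_{r^\star_{3M}}^{r^\star_0}(R'_{\QM,\ellmode})^2$ from the $\int |(e^{\rho}R_{\QM,\ellmode})'|^2$ term, the full $H^1$-bound; one application of Corollary~\ref{cor:energyquasimoderadial} on the right-hand side then converts $\ellmode^2\int R_{\QM,\ellmode}^2$ into the weighted $H^1$-norm appearing in~\eqref{eq:Agmonfinalfinal} with $C^{-1}=c_A$. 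The main obstacle, absent from~\cite{Hol.Smu14}, is the non-vanishing boundary contribution at conformal infinity together with the unboundedness of $V_\ellmode$ near $r^\star=\pi/2$; both are circumvented thanks to the algebraic smallness $R'_{\QM,\ellmode}(\pi/2)=O(\ellmode^{-2})R_{\QM,\ellmode}(\pi/2)$ enforced by the Robin condition and the aforementioned a priori bound on $\int V_\ellmode R_{\QM,\ellmode}^2$.
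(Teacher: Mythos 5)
Your proposal is correct and follows essentially the same route as the paper's proof: an Agmon estimate with a weight growing linearly in $\ell$ and supported away from $r^\star=\pi/2$, positivity of the conjugated potential $V_\ell-\omega_\ell^2-(\rho')^2\gtrsim \ell^2$ on the classically forbidden region obtained from the upper bound of Lemma~\ref{lem:preciseomell2} together with the strict inequality $w>k^2$ away from infinity, and control of the boundary contribution at conformal infinity generated by the higher-order Robin condition. The only immaterial differences are that the paper smooths the piecewise-linear weight with a cutoff $\chi$ and absorbs the boundary term and the tail integral through the explicit Rayleigh-quotient identity~\eqref{eq:minimizationproblemRWR}, whereas you use the $O(\ell^{-2})$ smallness of $R'_{\QM,\ell}(\pi/2)$ forced by~\eqref{eqReigenfunctionsBC} combined with a trace inequality and Corollary~\ref{cor:energyquasimoderadial}.
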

\begin{proof}
  Let $r^\star_0\in(r^\star_{3M},\pi/2)$. Let $\de>0$ such that $r^\star_0+2\de<\pi/2$ and $\eta>0$ such that
  \begin{align}\label{eq:condoneta}
    k^2+3\eta < w(r^\star_0+2\de).
  \end{align}
  Define a function $\phi$ such that $\phi(\pi/2)=0$ and
  \begin{align*}
    \phi' & := -\sqrt{\ellmode(\ellmode+1)\eta}\chi,
  \end{align*}
  with $0\leq\chi\leq 1$ a smooth function such that $\chi = 1$ on $(r^\star_{3M},r^\star_0+\de)$ and $\chi=0$ on $(r^\star_0+2\de,\pi/2)$. Multiplying~\eqref{sys:eigenvalueQMRW} by $e^{2\phi}$ and integrating by parts, we have
  \begin{align}\label{eq:basicAgmon}
    \begin{aligned}
      & \int_{r^\star_{3M}}^{\frac{\pi}{2}} \le(\le(\le(R_{\QM,\ellmode}e^{\phi}\ri)'\ri)^2 + \le((V -\om_\ellmode^2) - \le(\phi'\ri)^2\ri)R_{\QM,\ellmode}^2e^{2\phi} \ri)\,\d r^\star \\
      = & \; \le(\frac{12M\om_\ellmode^2}{\ellmode(\ellmode+1)(\ellmode(\ellmode+1)-2)} - \frac{6Mk^2}{\ellmode(\ellmode+1)-2}\ri)\le(R_{\QM,\ellmode}^2e^{2\phi}\ri)\le(\frac{\pi}{2}\ri). 
    \end{aligned}
  \end{align}
  Using the definition of $\phi$,~\eqref{eq:preciseomell2},~\eqref{eq:condoneta}, and that $w$ is a decreasing function on $(r^\star_{3M},\pi/2)$, we have
  \begin{align}\label{est:Vomell2phi'}
    \begin{aligned}
      (V -\om_\ellmode^2) - \le(\phi'\ri)^2 & = \ellmode(\ellmode+1)\le(w - \frac{\om_\ellmode^2}{\ellmode(\ellmode+1)} - \eta\chi^2\ri) - w \frac{6M}{r}\\
      & \geq \ellmode(\ellmode+1)\le(w(r^\star_0+2\de) - k^2 - 2\eta\ri) - 2w(3M)\\
      & \geq \ellmode(\ellmode+1)\eta - 2w(3M) \\
      & \geq \half\ellmode(\ellmode+1)\eta,
    \end{aligned}
  \end{align}
  for all $r^\star\in(r^\star_{3M},r^\star_0+2\de)$ and where the second and last lines hold provided that $\ellmode$ is taken sufficiently large. Moreover, from the definition of $\phi$, we have
  \begin{align}\label{est:lowerboundphiAgmon}
    \begin{aligned}
      \phi(r^\star) & \geq \de\sqrt{\ellmode(\ellmode+1)\eta},
    \end{aligned}
  \end{align}
  for all $r^\star\in(r^\star_{3M},r^\star_0)$. Applying now~\eqref{eq:basicAgmon}, using~\eqref{est:Vomell2phi'} and~\eqref{est:lowerboundphiAgmon}, we obtain
  \begin{align}\label{est:finalproofAgmon1}
    \begin{aligned}
      & \ellmode(\ellmode+1) e^{2\de\sqrt{\ellmode(\ellmode+1)\eta}} \int_{r^\star_{3M}}^{r^\star_0} R_{\QM,\ellmode}^2 \,\d r^\star \\
      \leq & 2\eta^{-1} \int_{r^\star_{3M}}^{r^\star_0+2\de}\le((V -\om_\ellmode^2) - \le(\phi'\ri)^2\ri)R_{\QM,\ellmode}^2e^{2\phi}\,\d r^\star \\
      \leq & 2\eta^{-1}\int_{r^\star_0+2\de}^{\frac{\pi}{2}} (V -\om_\ellmode^2)R_{\QM,\ellmode}^2\,\d r^\star + 2\eta^{-1}\le(\frac{12M\om_\ellmode^2}{\ellmode(\ellmode+1)(\ellmode(\ellmode+1)-2)} - \frac{6Mk^2}{\ellmode(\ellmode+1)-2}\ri)R_{\QM,\ellmode}^2\le(\frac{\pi}{2}\ri) \\
      \leq & 4\eta^{-1}\int_{r^\star_{3M}}^{\frac{\pi}{2}} \le(\le(R_{\QM,\ellmode}'\ri)^2 + (V -\om_\ellmode^2)R_{\QM,\ellmode}^2 \ri)\,\d r^\star.
    \end{aligned}
  \end{align}
  Re-using~\eqref{eq:basicAgmon} and plugging~\eqref{est:finalproofAgmon1}, we get
  \begin{align*}
    e^{2\de\sqrt{\ellmode(\ellmode+1)\eta}} \int_{r^\star_{3M}}^{r^\star_0} \le(\le(R_{\QM,\ellmode}'\ri)^2 + \ellmode(\ellmode+1) R_{\QM,\ellmode}^2 \ri)\,\d r^\star \les \eta^{-1}\int_{r^\star_{3M}}^{\frac{\pi}{2}} \le(\le(R_{\QM,\ellmode}'\ri)^2 + (V -\om_\ellmode^2)R_{\QM,\ellmode}^2 \ri)\,\d r^\star,
  \end{align*}
  and this finishes the proof of the lemma.
\end{proof}

\subsubsection{Completing the proof}
We can now complete the proof of Theorem~\ref{thm:QMRW}. We define
\begin{align}\label{eq:defwidetildePsiQM}
  \widetilde{\Psi}_{\QM,m,\ellmode}(t,r^\star,\varth,\varphi) & := e^{-i\om_\ellmode t}R_{\QM,\ellmode}(r^\star)S_{m\ellmode}(\varth)e^{+im\varphi} + e^{+i\om_\ellmode t}R_{\QM,\ellmode}(r^\star)S_{-m\ellmode}(\varth)e^{-im\varphi},
\end{align}
where $\om_\ellmode$ and $R_{\QM,\ellmode}$ are the fundamental eigenvalue and eigenvector to the boundary value problem~\eqref{sys:eigenvalueQMRW} from Proposition~\ref{prop:diageigenvalueQMRW} and $\ell$ is sufficiently large so that in particular Lemma~\ref{lem:preciseomell2} applies. Define
\begin{align*}
  \Psi_{\QM,m,\ellmode} & := \chi(r^\star)\widetilde{\Psi}_{\QM,\ellmode},
\end{align*}
where $\chi$ is a smooth cut-off function satisfying $\chi\big|_{r\leq 3M}=0$ and $\chi\big|_{r\geq 3M+\de}=1$. Items $1,2, 4$ and $5$ for the quasimode part $\Psi_{\QM,m,\ellmode}$ listed in Theorem~\ref{thm:QMRW} are then immediate consequences of the definitions above. Item 3 follows from  Lemma~\ref{lem:preciseomell2}. Item $6$ follows from Lemma \ref{lem:Agmonfinalfinal}, Item $7$ from Corollary~\ref{cor:energyquasimoderadial} (using the relation (\ref{sys:eigenvalueQMRW}) to replace higher spatial derivatives in the Sobolev norms and $\omega_\ell \sim \ell$ for the time derivatives) and the fact that the solutions are periodic in time.\footnote{Note also the relation 
\begin{align}\label{eq:PsiQMellcoordsidentity}
    \begin{aligned}
      \Psi_{\QM,m,\ellmode}(t,r^\star,\varth,\varphi) & = e^{-i\om_\ellmode t}\chi(r^\star)R_{\QM,\ellmode}(r^\star)S_{m\ellmode}(\varth)e^{+im\varphi} + e^{+i\om_\ellmode t}\chi(r^\star)R_{\QM,\ellmode}(r^\star)S_{-m\ellmode}(\varth)e^{-im\varphi} \\
      & = e^{-i\om_\ellmode t^\star} e^{i\om_\ellmode \le(r^\star-\frac{1}{k}\arctan(kr)\ri)}\chi(r^\star)R_{\QM,\ellmode}(r^\star)S_{m\ellmode}(\varth)e^{+im\varphi} \\
      & \quad +e^{+i\om_\ellmode t^\star} e^{-i\om_\ellmode \le(r^\star-\frac{1}{k}\arctan(kr)\ri)}\chi(r^\star)R_{\QM,\ellmode}(r^\star)S_{-m\ellmode}(\varth)e^{-im\varphi}.
    \end{aligned}
\end{align}
to convert to the $(t^\star,r^\star,\theta,\phi)$ coordinates used to express the Sobolev norms. One could of course have avoided this (trivial) conversion altogether by defining $t^\star$ to agree with $t$ in $r\geq r_+ + \frac{3M-r_+}{2}$ as the solutions constructed in this paper are supported in $r \geq 3M-\delta$ only. 
}
Finally, that $\Psi_{\Err,m,\ellmode}$ can be defined as solution to \eqref{eq:Errsource1} and satisfies the estimate~\eqref{est:ErrRWQM} is a direct consequence of the well-posedness result of Theorem~\ref{thm:WPRobin} (in particular the inhomogeneous estimate (\ref{est:regularenergyestimateRobinRW}) and the bounds~\eqref{est:errorestimateQM} on the source term. This finishes the proof of Theorem~\ref{thm:QMRW}.
 
\section{Quasimodes for the Teukolsky problem}\label{sec:QMTeuk}
In this section we prove the main theorem, Theorem~\ref{thm:main2}.

\subsection{Estimating the reverse Chandrasekhar transformations}\label{sec:controlrevChandra}
We first define and control the reverse Chandrasekhar transformations (see~\eqref{eq:RevChandra}) of the quasimodes and error terms for the Regge-Wheeler equation constructed in Theorem~\ref{thm:QMRW}. Let therefore $\Psi^D_{m,\ellmode}$ or $\Psi^R_{m,\ellmode}$ be a quantity as constructed in Theorem~\ref{thm:QMRW}. In the Dirichlet case, we define
\begin{align}\label{eq:defpm2QMRWD}
  \begin{aligned}
    \Psi^{[+2]}_{\QM,m,\ellmode} & := \Psi^D_{\QM,m,\ellmode} - \frac{12M}{\LL(\LL-2)}\pr_t\Psi^D_{\QM,m,\ellmode} , \\
    \Psi^{[+2]}_{\Err,m,\ellmode} & := \Psi^D_{\Err,m,\ellmode} - \frac{12M}{\LL(\LL-2)}\pr_t\Psi^D_{\Err,m,\ellmode}, \\
    \Psi^{[-2]}_{\QM,m,\ellmode} & := -\big(\Psi^D_{\QM,m,\ellmode}\big)^\ast - \le(\frac{12M}{\LL(\LL-2)}\pr_t\Psi^D_{\QM,m,\ellmode}\ri)^\ast, \\
    \Psi^{[-2]}_{\Err,m,\ellmode} & := -\big(\Psi^D_{\Err,m,\ellmode}\big)^\ast- \le(\frac{12M}{\LL(\LL-2)}\pr_t\Psi^D_{\Err,m,\ellmode}\ri)^\ast,
  \end{aligned}
\end{align}
and in the ``Robin'' case
\begin{align}\label{eq:defpm2QMRWR}
  \begin{aligned}
  \Psi^{[+2]}_{\QM,m,\ellmode} & := \Psi^R_{\QM,m,\ellmode}, & \Psi^{[+2]}_{\Err,m,\ellmode} & := \Psi^R_{\Err,m,\ellmode}, \\
  \Psi^{[-2]}_{\QM,m,\ellmode} & := \big(\Psi^R_{\QM,m,\ellmode}\big)^\ast, & \Psi^{[-2]}_{\Err,m,\ellmode} & := \big(\Psi^R_{\Err,m,\ellmode}\big)^\ast.
  \end{aligned}
\end{align}
Now, in both cases, we can define
\begin{align}\label{eq:defaltQMRW}
  \begin{aligned}
    \alt^{[+2]}_{\QM,m,\ellmode} & := w^2\le(\LL(\LL-2)+12M\pr_{t^\star}\ri)(w^{-1}L)^2\Psi^{[+2]}_{\QM,m,\ellmode}, \\
    \alt^{[+2]}_{\Err,m,\ellmode} & := w^2\le(\LL(\LL-2)+12M\pr_{t^\star}\ri)(w^{-1}L)^2\Psi^{[+2]}_{\Err,m,\ellmode},\\
    \alt^{[-2]}_{\QM,m,\ellmode} & := w^2\le(\LL(\LL-2)-12M\pr_{t^\star}\ri)(w^{-1}\Lb)^2\Psi^{[-2]}_{\QM,m,\ellmode}, \\
    \alt^{[-2]}_{\Err,m,\ellmode} & := w^2\le(\LL(\LL-2)-12M\pr_{t^\star}\ri)(w^{-1}\Lb)^2\Psi^{[-2]}_{\Err,m,\ellmode}.
  \end{aligned}
\end{align}
By construction, the quantities $\alt^{[\pm2]}_{\QM,m,\ellmode}$ (and $\Psi^{[\pm2]}_{\QM,m,\ellmode}$) are sums of real mode with frequency $\pm\om_\ellmode$. Moreover, by Proposition~\ref{prop:RWdeco}, the quantities $\Psi^{[\pm2]} = \Psi^{[\pm2]}_{\QM,m,\ellmode} + \Psi^{[\pm2]}_{\Err,m,\ellmode}$ are solutions to the (homogeneous) Regge-Wheeler problem (Definition \ref{def:solRW}), and, by Proposition~\ref{prop:RevChandra}, the quantities $\alt^{[\pm2]} = \alt^{[\pm2]}_{\QM,m,\ellmode}+\alt^{[\pm2]}_{\Err,m,\ellmode}$ are solutions to the homogeneous Teukolsky problem (Definition \ref{def:teukolskyp}). We have the following control of the error term: 
\begin{lemma}\label{lem:controlerroraltproofQM}
  For all $\ell\geq \ell_{\QM}$, $0 \leq m \leq\ell$, and all $n\geq 1$, we have on the initial slice $\Si_{t^\star_0}$
  \begin{align}\label{est:errorTeukQMinitial}
    \begin{aligned}
      \mathrm{E}^{\mathfrak{T},n}[\alt_{\Err,m,\ellmode}](t^\star_0) & \leq Ce^{-C^{-1}\ellmode} \big\Vert\Psi_{\QM,m,\ellmode}^{D/R}\big\Vert_{H^1(\Si_{t^\star_0})},
    \end{aligned}
  \end{align}
  where $C=C(M,k,n)>0$. For all $n\geq 1$, we have on all truncated slices $\Si_{t^\star}\cap\{r\geq 3M\}$ with $t^\star\geq t^\star_0$,
  \begin{align}\label{est:errorTeukQM}
    \begin{aligned}
      \mathrm{E}^{\mathfrak{T},n}_{r\geq 3M}[\alt_{\Err,m,\ellmode}](t^\star)  & \leq C(1+t^\star)e^{-C^{-1}\ellmode} \big\Vert\Psi_{\QM,m,\ellmode}^{D/R}\big\Vert_{H^1(\Si_{t^\star_0})},
    \end{aligned}
  \end{align}
  where $C=C(M,k,n)>0$ and where here and in the following, $\mathrm{E}^{\mathfrak{T},n}_{r\geq R}[\Phi](t^\star)$ is the same energy norm as defined in Section~\ref{sec:lowerboundintro}, but where the integrals are only taken on $\Si_{t^\star}\cap\{r\geq R\}$.
\end{lemma}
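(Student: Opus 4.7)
The broad strategy is to reduce $\mathrm{E}^{\mathfrak{T},n}[\alt_{\Err,m,\ellmode}]$ to a sufficiently high-order Sobolev norm of $\Psi^{D/R}_{\Err,m,\ellmode}$ and then invoke the error estimate~\eqref{est:ErrRWQM} of Theorem~\ref{thm:QMRW}. The key observation is that the composition of the algebraic relations~\eqref{eq:defpm2QMRWD}--\eqref{eq:defpm2QMRWR} with the reverse Chandrasekhar transformation~\eqref{eq:defaltQMRW} expresses $\alt^{[\pm 2]}_{\Err,m,\ellmode}$ as a bounded-order differential polynomial in $\Psi^{D/R}_{\Err,m,\ellmode}$: two null derivatives from $(w^{-1}L)^2$ (resp.~$(w^{-1}\Lb)^2$) together with the two derivatives contained in the Teukolsky-Starobinsky operator $\LL(\LL-2) \pm 12M\pr_{t^\star}$, modulo the inverse $\LL(\LL-2)^{-1}$, which is well-defined on spin-$\pm 2$-weighted functions since, by Proposition~\ref{lem:Hilbertbasis}, the smallest eigenvalue of $\LL(\LL-2)$ is $\ellmode(\ellmode+1)(\ellmode(\ellmode+1)-2) > 0$ for $\ellmode \geq 2$.

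For the later-slice estimate~\eqref{est:errorTeukQM}, the plan is to commute the $H^n(\Si_{t^\star} \cap \{r \geq 3M\})$-derivatives through $(w^{-1}L)^2$, $(w^{-1}\Lb)^2$, $\LL(\LL-2)\pm 12M\pr_{t^\star}$ and $\LL(\LL-2)^{-1}$, exploiting that on $r \geq 3M$ the radial coefficients $w^{\pm 1}$ are smooth and uniformly bounded. The restriction $r \geq 3M$ is crucial: it sidesteps any horizon-regularity subtleties associated with the degeneracy of the weight $w^{-2}$ in front of $\alt^{[-2]}_{\Err,m,\ellmode}$ and with the $w^{-1}\Lb$ factors. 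The above manipulations reduce the bound to
\begin{align*}
  \mathrm{E}^{\mathfrak{T},n}_{r\geq 3M}[\alt_{\Err,m,\ellmode}](t^\star) &\les \big\Vert \pr_{t^\star}\Psi^{D/R}_{\Err,m,\ellmode}\big\Vert^2_{H^{n+2}(\Si_{t^\star})} + \big\Vert \Psi^{D/R}_{\Err,m,\ellmode}\big\Vert^2_{H^{n+3}(\Si_{t^\star})},
\end{align*}
and one concludes by applying~\eqref{est:ErrRWQM} with $n$ replaced by $n+3$ (permitted since that estimate holds for every $n \geq 1$).

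For the initial-slice estimate~\eqref{est:errorTeukQMinitial}, the key input is that both $\Psi^{D/R}_{\Err,m,\ellmode}$ and $\pr_{t^\star}\Psi^{D/R}_{\Err,m,\ellmode}$ vanish identically on $\Si_{t^\star_0}$ by construction. One bounds $\mathrm{E}^{\mathfrak{T},n}[\alt_{\Err,m,\ellmode}](t^\star_0)$ by a finite sum of spatial $H^k$-norms of $\pr_{t^\star}^{\,j} \Psi^{D/R}_{\Err,m,\ellmode}|_{\Si_{t^\star_0}}$ with $j + k \leq n + 4$. For $j = 0, 1$ these contributions vanish; for $j \geq 2$, one solves iteratively using~\eqref{eq:Errsource1}, which expresses $\pr_{t^\star}^2 \Psi^{D/R}_{\Err,m,\ellmode}$ as purely spatial operators applied to $\Psi^{D/R}_{\Err,m,\ellmode}$ plus the source $-\RW\Psi^{D/R}_{\QM,m,\ellmode}$. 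Induction shows that $\pr_{t^\star}^{\,j} \Psi^{D/R}_{\Err,m,\ellmode}|_{\Si_{t^\star_0}}$ is a combination of spatial derivatives of $\pr_{t^\star}^{\,i}\RW\Psi^{D/R}_{\QM,m,\ellmode}|_{\Si_{t^\star_0}}$ for $0 \leq i \leq j-2$. By the real-mode structure~\eqref{eq:sumsmodesPsiDR}, each $\pr_{t^\star}$ produces a factor of $\om_\ellmode$, which is polynomial in $\ellmode$ by~\eqref{est:equivomellell}, and the error bound~\eqref{est:errorestimateQM} then delivers the required exponential smallness.

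The main technical obstacle is the careful bookkeeping of derivative orders, particularly when trading $\pr_{t^\star}$-derivatives for spatial ones on the initial slice. A secondary point is to verify that the mapping property of $\LL(\LL-2)^{-1}$ combined with the remaining $w^{\pm 1}$ factors introduces no singular contributions on $\{r \geq 3M\}$, which follows from the smoothness of $w$ away from the horizon and the elliptic nature of $\LL(\LL-2)$ on spin-weighted quantities supported in modes with $\ellmode \geq 2$.
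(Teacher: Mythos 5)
Your proposal is correct and follows essentially the same route as the paper, whose proof of this lemma is a one-line appeal to the definitions \eqref{eq:defpm2QMRWD}, \eqref{eq:defpm2QMRWR}, \eqref{eq:defaltQMRW} and the estimate \eqref{est:ErrRWQM}; you have simply made explicit the derivative bookkeeping, the harmlessness of $\LL(\LL-2)^{-1}$ and of the $w$-weights on $r\geq 3M$, and the reduction of the initial-slice time derivatives to the exponentially small, compactly supported source $\RW\Psi^{D/R}_{\QM,m,\ellmode}$ via \eqref{eq:Errsource1} and \eqref{est:errorestimateQM}.
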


\begin{proof}
This follows directly from the definitions~\eqref{eq:defpm2QMRWD},~\eqref{eq:defpm2QMRWR} and~\eqref{eq:defaltQMRW} and the estimates~\eqref{est:ErrRWQM}.
\end{proof}

We also collect the following estimates on the energy norms of the quasimodes $\alt^{[\pm2]}_{\QM,m,\ellmode}$.
\begin{lemma}\label{lem:controlaltQM}
  For all $n\geq 1$ and all $t^\star\geq t^\star_0$, we have
  \begin{align}\label{est:controlaltQM}
    C^{-1} \ellmode^{2n+2} \big\Vert\Psi^{D/R}_{\QM,m,\ellmode}\big\Vert_{H^1(\Si_{t^\star_0})}^2 \leq \mathrm{E}^{\mathfrak{T},n}[\alt_{\QM,m,\ellmode}](t^\star) \leq C \ellmode^{2n+2} \big\Vert\Psi^{D/R}_{\QM,m,\ellmode}\big\Vert^2_{H^1(\Si_{t^\star_0})},
  \end{align}
  with $C=C(M,k,n)>0$ and for $\ellmode$ sufficiently large (depending on $M,k,n$).
\end{lemma}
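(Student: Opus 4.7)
The plan is to exploit the explicit mode structure of $\Psi^{[\pm 2]}_{\QM,m,\ell}$. By~\eqref{eq:sumsmodesPsiDR} and~\eqref{eq:defpm2QMRWD}--\eqref{eq:defpm2QMRWR}, it is a sum of two pure modes with frequencies $\pm\omega_\ell$ and angular content in the eigenspace of $\LL^{[\pm 2]}$ with eigenvalue $\ell(\ell+1)$. Since $L$, $\Lb$, $\partial_{t^\star}$ and $\LL^{[\pm 2]}$ all preserve this structure, the reverse Chandrasekhar transformation~\eqref{eq:defaltQMRW} turns $\alt^{[\pm 2]}_{\QM,m,\ell}$ into an explicit mode sum as well. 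On $\{r\geq 3M+\delta\}$, where the cut-off $\chi$ is identically one, the radial profile $R_{\QM,\ell}$ satisfies the ODE~\eqref{eq:equationReigenfunction}; this allows one to reduce every second (or higher) radial derivative of $R_{\QM,\ell}$ algebraically to $R_{\QM,\ell}$ and $R'_{\QM,\ell}$ with coefficients smooth in $r$ and polynomial in $\ell$. This algebraic reduction is the starting point for all norm estimates.

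For the upper bound, I would bound $\alt^{[\pm 2]}_{\QM,m,\ell}$ and the derivatives entering $\mathrm{E}^{\mathfrak{T},n}$ pointwise on this region by counting a factor $\omega_\ell\sim\ell$ for each time derivative (by~\eqref{est:equivomellell}), a factor $\ell$ for each angular derivative (since the mode lives in a single $\ell$-eigenspace), and an $\ell$-factor for each radial derivative (via the ODE). Taking $L^2$-norms, applying Corollary~\ref{cor:energyquasimoderadial} to control $\|R'_{\QM,\ell}\|_{L^2}$, and then invoking the equivalences~\eqref{est:equiveNe1Psi} and~\eqref{est:e1equivtasttast0} of Theorem~\ref{thm:QMRW} (to reduce $H^n$-norms to $H^1$-norms at the price of $\ell^{n-1}$ and to transfer norms from $t^\star$ to $t^\star_0$) gives the upper bound. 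The contribution from the transition zone $\{3M\leq r\leq 3M+\delta\}$, where $\chi$ is not yet identically one and commutators appear, is absorbed trivially thanks to the Agmon estimate~\eqref{est:Agmon}, which forces it to be exponentially small in $\ell$.

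The main difficulty is the lower bound. I would isolate the leading piece $w^2\LL(\LL-2)(w^{-1}L)^2\Psi^{[\pm 2]}_{\QM,m,\ell}$ of~\eqref{eq:defaltQMRW}, whose angular eigenvalue $(\ell-1)\ell(\ell+1)(\ell+2)\sim\ell^4$ dominates the correction $12M\partial_{t^\star}\sim\ell$ for $\ell$ large. Using the ODE to simplify $(w^{-1}L)^2$ on the mode, this leading piece takes, on $\{r\geq 3M+\delta\}$, the explicit form of a linear combination of $R_{\QM,\ell}$ and $R'_{\QM,\ell}$ times the angular harmonics $S_{\pm m\ell}$. The crucial obstacle is to rule out destructive cancellation between the $R_{\QM,\ell}$ and $R'_{\QM,\ell}$ contributions when taking $L^2$-norms. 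I would handle this by noting that the coefficient of $R_{\QM,\ell}$ is real (proportional to $V-2\omega_\ell^2$) while the coefficient of $R'_{\QM,\ell}$ is purely imaginary (proportional to $-2i\omega_\ell$), so once the $L^2$-norm is computed the two contributions add in quadrature; Corollary~\ref{cor:energyquasimoderadial} together with~\eqref{est:equivomellell} then yields a positive lower bound on $\|(w^{-1}L)^2\Psi^{[\pm 2]}_{\QM,m,\ell}\|_{L^2(\Sigma_{t^\star})}^2$ of order $\ell^2\|R_{\QM,\ell}\|_{L^2}^2$. The subleading $12M\partial_{t^\star}$ term and the cut-off remainder can only enlarge this norm or are exponentially small, so they cannot spoil the bound. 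Multiplying through by the angular factor $\sim\ell^4$ and applying~\eqref{est:equiveNe1Psi}--\eqref{est:e1equivtasttast0} once more completes the proof.
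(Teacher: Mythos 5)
Your overall strategy coincides with the paper's: exploit the mode structure, use the radial ODE \eqref{eq:equationReigenfunction} to trade radial derivatives for powers of $\ellmode$, observe that on the mode the second null derivative splits into a real part proportional to $(V-2\om_\ellmode^2)R_{\QM,\ellmode}$ and a purely imaginary part proportional to $\om_\ellmode R'_{\QM,\ellmode}$ which add in quadrature, treat the $12M\pr_t$ correction to $\LL(\LL-2)$ as a relative $O(\ellmode^{-3})$ perturbation, and dispose of the cut-off region via the Agmon estimate. The upper bound and the reduction to the quantities of Theorem~\ref{thm:QMRW} are fine.

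The lower bound, however, has a genuine gap at the step where you assert that the quadrature identity together with Corollary~\ref{cor:energyquasimoderadial} and \eqref{est:equivomellell} yields the required coercive bound. After adding in quadrature you are left with $\int\big(|V-2\om_\ellmode^2|^2R_{\QM,\ellmode}^2+4\om_\ellmode^2(R'_{\QM,\ellmode})^2\big)\,\d r^\star$, and neither term is bounded below by what you need. The coefficient $V-2\om_\ellmode^2$ can vanish and change sign in the bulk: on $r\geq 3M$ one has $V\in\big[k^2(\ellmode(\ellmode+1)-2),\,(k^2+\tfrac{1}{27M^2})\ellmode(\ellmode+1)\big]$, while \eqref{eq:preciseomell2} only locates $2\om_\ellmode^2$ in $\big[k^2\ellmode(\ellmode+1),\,(2k^2+2\eta)\ellmode(\ellmode+1)\big]$, so the two ranges overlap and $|V-2\om_\ellmode^2|$ is not $\gtrsim\ellmode^2$ pointwise. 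Moreover Corollary~\ref{cor:energyquasimoderadial} only controls $\int(R'_{\QM,\ellmode})^2$ from \emph{above} by $\ellmode^2\int R_{\QM,\ellmode}^2$, so the term $4\om_\ellmode^2\int(R'_{\QM,\ellmode})^2$ cannot recover the missing coercivity either. The paper closes exactly this gap with an additional integration by parts: expanding $|V-2\om_\ellmode^2|^2=V^2+4\om_\ellmode^2(\om_\ellmode^2-V)$ and inserting the eigenvalue energy identity \eqref{est:LLbPsiQMproofmaintermenergy} (a consequence of the Rayleigh quotient \eqref{eq:minimizationproblemRWR}) turns the expression into $\int V^2R_{\QM,\ellmode}^2+8\om_\ellmode^2\int(R'_{\QM,\ellmode})^2$ plus a boundary term absorbed by a trace estimate, see \eqref{est:LLbPsiQMproofmaintermenergybis}--\eqref{est:LLbPsiQMproofmaintermfinal}; since $V\gtrsim\ellmode^2$ pointwise on $r\geq 3M$ this is manifestly comparable to $\Vert\Psi^{D/R}_{\QM,m,\ellmode}\Vert^2_{H^2}$, which is the equivalence \eqref{est:prelimLLQMequiv} your argument implicitly needs. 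Without this step your lower bound loses a power of $\ellmode^2$ relative to the upper bound, and the two sides of \eqref{est:controlaltQM} no longer carry the same power of $\ellmode$ --- which is precisely what the application in Section~5.2 requires.
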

\begin{proof}
  Let us first show that
  \begin{align}\label{est:prelimLLQMequiv}
     \norm{\pr_{t^\star}^2\Psi^{D/R}_{\QM,m,\ellmode}}^2_{L^2(\Si_{t^\star})} + \norm{\pr_{t^\star}\Psi^{D/R}_{\QM,m,\ellmode}}^2_{H^1(\Si_{t^\star})}+ \norm{\Psi^{D/R}_{\QM,m,\ellmode}}^2_{H^2(\Si_{t^\star})} \simeq_{M,k} \int_{\Si_{t^\star}}|\Lb\Lb\Psi^{D/R}_{\QM,m,\ellmode}|^2. 
  \end{align}
  We prove~\eqref{est:prelimLLQMequiv} in the ``Robin'' case but the proof in the Dirichlet case follows along the same lines. Because of the localisation of $\Psi^R_{\QM,m,\ellmode}$ in $r\geq3M$, we can ignore the $w$ weights. Using $\underline{L} = \partial_t - \partial_{r^\star}= \partial_t - \frac{\Delta}{r^2} \partial_{r}$, the definition of $\Psi^R_{\QM,m,\ellmode}$ from the proof of Theorem~\ref{thm:QMRW} and that $R_{\QM,\ellmode}$ are real-valued functions (see Theorem~\ref{thm:QMRW} and its proof), we have
  \begin{align}\label{est:LLbPsiQMproof}
    \begin{aligned}
      \int_{\Si_{t^\star}}|\Lb\Lb\Psi^R_{\QM,m,\ellmode}|^2 & \simeq_{M,k} \int_{-\infty}^{\frac{\pi}{2}} \le|-\om_\ellmode^2\chi R_{\QM,\ellmode} -2 i\om_{\ellmode} \le(\chi R_{\QM,\ellmode}\ri)' + \le(\chi R_{\QM,\ellmode}\ri)''\ri|^2 \,\d r^\star, \\
      \simeq_{M,k} & \int_{-\infty}^{\frac{\pi}{2}}\le(\le|(\chi R_{\QM,\ellmode})'' - \chi \om_\ellmode^2 R_{\QM,\ellmode}\ri|^2 + 4 |\om_{\ellmode}|^2\le|\le(\chi R_{\QM,\ellmode}\ri)'\ri|^2\ri) \,\d r^\star.
    \end{aligned}
  \end{align}
  Using~\eqref{sys:eigenvalueQMRW}, we have 
  \begin{align}\label{est:LLbPsiQMproofmainterm}
    \begin{aligned}
      & \int_{r^\star_{3M}}^{\frac{\pi}{2}}\le(\le|R''_{\QM,\ellmode} - \om_\ellmode^2 R_{\QM,\ellmode}\ri|^2 + 4 |\om_{\ellmode}|^2\le|R'_{\QM,\ellmode}\ri|^2\ri)\,\d r^\star \\
      = & \int_{r^\star_{3M}}^{\frac{\pi}{2}} \le(\le|V-2\om_\ellmode^2\ri|^2 |R_{\QM,\ellmode}|^2   + 4|\om_{\ellmode}|^2 \le|R'_{\QM,\ellmode}\ri|^2\ri)\,\d r^\star.
    \end{aligned}
  \end{align}
  Using~\eqref{eq:minimizationproblemRWR}, we have
  \begin{align}\label{est:LLbPsiQMproofmaintermenergy}
    \begin{aligned}
      \int_{r^\star_{3M}}^{\frac{\pi}{2}} \le|R_{\QM,\ellmode}'\ri|^2 \,\d r^\star  & = \int_{r^\star_{3M}}^{\frac{\pi}{2}} (\om_\ellmode^2-V) \le|R_{\QM,\ellmode}\ri|^2 \,\d r^\star \\
      & \quad + \le(\frac{12M\om_\ellmode^2}{\ellmode(\ellmode+1)(\ellmode(\ellmode+1)-2)}-\frac{6Mk^2}{\ellmode(\ellmode+1)-2}\ri)\le|R_{\QM,\ellmode}\le(\frac{\pi}{2}\ri)\ri|^2. 
    \end{aligned}
  \end{align}
The identity~\eqref{est:LLbPsiQMproofmaintermenergy} yields in particular
  \begin{align}\label{est:LLbPsiQMproofmaintermenergybis}
    \begin{aligned}
      & \int_{r^\star_{3M}}^{\frac{\pi}{2}} \le|V-2\om_\ellmode^2\ri|^2 |R_{\QM,\ellmode}|^2\,\d r^\star \\
      = & \int_{r^\star_{3M}}^{\frac{\pi}{2}} V^2 |R_{\QM,\ellmode}|^2\,\d r^\star  + \int_{r^\star_{3M}}^{\frac{\pi}{2}} 4\om_\ellmode^2(\om_{\ellmode}^2-V) |R_{\QM,\ellmode}|^2\,\d r^\star \\
      = & \int_{r^\star_{3M}}^{\frac{\pi}{2}} V^2 |R_{\QM,\ellmode}|^2\,\d r^\star  +  4\om_\ellmode^2 \int_{r^\star_{3M}}^{\frac{\pi}{2}} \le|R'_{\QM,\ellmode}\ri|^2\,\d r^\star \\
      & \quad + 4\om_\ellmode^2\le(\frac{6Mk^2}{\ellmode(\ellmode+1)-2}-\frac{12M\om_\ellmode^2}{\ellmode(\ellmode+1)(\ellmode(\ellmode+1)-2)}\ri)\le|R_{\QM,\ellmode}\le(\frac{\pi}{2}\ri)\ri|^2.
    \end{aligned}
  \end{align}
  Thus, combining~\eqref{est:LLbPsiQMproofmainterm} and~\eqref{est:LLbPsiQMproofmaintermenergybis}, using~\eqref{est:equivomellell},~\eqref{sys:eigenvalueQMRW} and a trace estimate for the boundary term, gives
  \begin{align}\label{est:LLbPsiQMproofmaintermfinal}
    \begin{aligned}
      & \int_{r^\star_{3M}}^{\frac{\pi}{2}}\le(\le|R''_{\QM,\ellmode} - \om_\ellmode^2 R_{\QM,\ellmode}\ri|^2 + 4 |\om_{\ellmode}|^2\le|R'_{\QM,\ellmode}\ri|^2\ri)\,\d r^\star \\
      & = \int_{r^\star_{3M}}^{\frac{\pi}{2}} V^2 |R_{\QM,\ellmode}|^2\,\d r^\star+  8\om_\ellmode^2 \int_{r^\star_{3M}}^{\frac{\pi}{2}} \le|R'_{\QM,\ellmode}\ri|^2\,\d r^\star \\
      & \quad + 4\om_\ellmode^2\le(\frac{6Mk^2}{\ellmode(\ellmode+1)-2}-\frac{12M\om_\ellmode^2}{\ellmode(\ellmode+1)(\ellmode(\ellmode+1)-2)}\ri)\le|R_{\QM,\ellmode}\le(\frac{\pi}{2}\ri)\ri|^2\\
      & \simeq_{M,k} \norm{\widetilde{\Psi}^R_{\QM,m,\ellmode}}^2_{H^2(\Si_{t^\star})},
    \end{aligned}
  \end{align}
  where we recall from the proof of Theorem~\ref{thm:QMRW} the definition~\eqref{eq:defwidetildePsiQM} of $\widetilde{\Psi}_{\QM,m,\ellmode}^R$. Using the localisation property~\eqref{est:Agmon} for $\Psi^R_{\QM,m,\ellmode}$ and~\eqref{est:e1equivtasttast0} -- recalling that $\Psi^R_{\QM,m,\ellmode} = \chi \widetilde{\Psi}_{\QM,m,\ellmode}^R$ --, and using estimate~\eqref{eq:Agmonfinalfinal} for $R_{\QM,\ell}$ (together with~\eqref{est:equivomellell} and~\eqref{eq:equationReigenfunction}), we infer from~\eqref{est:LLbPsiQMproofmaintermfinal} that for $\ellmode$ sufficiently large
  \begin{align*}
    \begin{aligned}
      \int_{r^\star_{3M}}^{\frac{\pi}{2}}\le(\le|(\chi R_{\QM,\ellmode})'' - \om_\ellmode^2 (\chi R_{\QM,\ellmode})\ri|^2 + 4 |\om_{\ellmode}|^2\le|(\chi R_{\QM,\ellmode})'\ri|^2\ri)\,\d r^\star & \simeq_{M,k} \norm{\Psi^R_{\QM,m,\ellmode}}^2_{H^2(\Si_{t^\star})},
    \end{aligned}
  \end{align*}
  which plugged in~\eqref{est:LLbPsiQMproof} yields~\eqref{est:prelimLLQMequiv}.  
  Now, let us conclude the lemma in the ``Robin'' case. Using~\eqref{est:e1equivtasttast0},~\eqref{est:prelimLLQMequiv}, the definition~\eqref{eq:defaltQMRW} of $\alt_{\QM,m,\ellmode}^{[\pm2]}$ and the localisation in $r\geq 3M$ property, we have
  \begin{align}
    \ellmode^{2n+2}\norm{\Psi^R_{\QM,m,\ellmode}}^2_{H^1(\Si_{t^\star_0})} & \les_{M,k,n} \ellmode^{2n+2}\norm{\Psi^R_{\QM,m,\ellmode}}^2_{H^1(\Si_{t^\star})} \nonumber\\
                                                                & \les_{M,k,n} \ellmode^{2n}\le(\norm{\pr_{t^\star}^2\Psi^R_{\QM,m,\ellmode}}^2_{L^2(\Si_{t^\star})} + \norm{\pr_{t^\star}\Psi^R_{\QM,m,\ellmode}}^2_{H^1(\Si_{t^\star})}+ \norm{\Psi^R_{\QM,m,\ellmode}}^2_{H^2(\Si_{t^\star})}\ri)\nonumber\\
                                                                & \les_{M,k,n} \ellmode^{2n}\int_{\Si_{t^\star}} |\Lb\Lb\Psi^R_{\QM,m,\ellmode}|^2 \label{est:proofRobinQMTeukcontrol} \\
                                                                & \les_{M,k,n}\mathrm{E}^{\mathfrak{T},n}[\alt_{\QM,m,\ellmode}].\nonumber
  \end{align}
  The other inequality in~\eqref{est:controlaltQM} follows directly from the definition~\eqref{eq:defaltQMRW} of $\alt_{\QM,\ellmode}^{[\pm2]}$,~\eqref{est:equiveNe1Psi} and~\eqref{est:e1equivtasttast0} and this finishes the proof of the desired~\eqref{est:controlaltQM}. In the Dirichlet case, we observe by~\eqref{est:equivomellell}, that
  \begin{align*}
    \le|\Lb\Lb\le(\Psi^D_{\QM,m,\ellmode}-\frac{12M\pr_t}{\LL(\LL-2)}\Psi^D_{\QM,m,\ellmode}\ri)\ri| & \simeq_{M,k} \le(1\pm\frac{|\om_\ell|}{\ell^4}\ri)\le|\Lb\Lb\Psi^D_{\QM,m,\ellmode}\ri| \simeq_{M,k} \le(1\pm\ell^{-3}\ri)\le|\Lb\Lb\Psi^D_{\QM,m,\ellmode}\ri| \\
                                                                                                     & \simeq \le|\Lb\Lb\Psi^D_{\QM,m,\ellmode}\ri|,
  \end{align*}
  which can be plugged in~\eqref{est:proofRobinQMTeukcontrol} to conclude the proof.


\end{proof}

\subsection{Concluding the proof of Theorem~\ref{thm:main2}}
Let us fix $n>2$ and $p\in\mathbb{N}$. In this section we will consider quasimodes with $m=0$ for simplicity. First note that for all $a,b\in\RRR$,
\begin{align}\label{est:compab}
  \half a^2-b^2 & \leq (a+b)^2 \leq 2a^2+2b^2. 
\end{align}
Using~\eqref{est:errorTeukQMinitial}~\eqref{est:errorTeukQM} to control the error terms and~\eqref{est:controlaltQM}, and using~\eqref{est:compab}, we have\footnote{All the estimates~\eqref{est:errorTeukQMinitial},~\eqref{est:errorTeukQM} and~\eqref{est:controlaltQM} hold for all constant $\widetilde{C}$ with $\widetilde{C}>C$ where $C$ is the respective constant in these estimates. Thus, up to taking the maximum of all the constants $C$ we can assume that $C$ is the same in~\eqref{est:errorTeukQMinitial},~\eqref{est:errorTeukQM} and~\eqref{est:controlaltQM}.}
\begin{align}\label{est:QaltQMp2}
  \begin{aligned}
    \frac{\mathrm{E}^{\mathfrak{T},n}[\alt_{\QM,0,\ellmode} + \alt_{\Err,0,\ellmode}](t^\star)}{\mathrm{E}^{\mathfrak{T},n}[\LL^{p/2}\le(\alt_{\QM,0,\ellmode} + \alt_{\Err,0,\ellmode}\ri)](t^\star_0)} & \geq \frac{\mathrm{E}^{\mathfrak{T},n}_{r\geq 3M}\le[\alt_{\QM,0,\ellmode} + \alt_{\Err,0,\ellmode}\ri](t^\star)}{\mathrm{E}^{\mathfrak{T},n}[\LL^{p/2}\le(\alt_{\QM,0,\ellmode} + \alt_{\Err,0,\ellmode}\ri)](t^\star_0)} \\ 
    & \geq   \frac{\half\mathrm{E}^{\mathfrak{T},n}[\alt_{\QM,0,\ellmode}](t^\star) - \mathrm{E}^{\mathfrak{T},n}_{r\geq 3M}[\alt_{\Err,0,\ellmode}](t^\star)}{2\mathrm{E}^{\mathfrak{T},n}[\LL^{p/2}\alt_{\QM,0,\ellmode}] + 2\mathrm{E}^{\mathfrak{T},n}[\LL^{p/2}\alt_{\Err,0,\ellmode}](t^\star_0)} \\
    & \geq    \frac{\half\mathrm{E}^{\mathfrak{T},n}[\alt_{\QM,0,\ellmode}](t^\star) - C(1+t^\star)e^{-C^{-1}\ellmode} \big\Vert\Psi^{R}_{\QM,0,\ellmode}\big\Vert^2_{H^1(\Si_{t^\star_0})}}{2\mathrm{E}^{\mathfrak{T},n}[\LL^{p/2}\alt_{\QM,0,\ellmode}] + 2Ce^{-C^{-1}\ellmode} \big\Vert\Psi^{R}_{\QM,0,\ellmode}\big\Vert_{H^1(\Si_{t^\star_0})}^2} \\
  & \geq  \frac{\half C^{-1}\ellmode^{2n+2}\big\Vert\Psi^{R}_{\QM,0,\ellmode}\big\Vert^2_{H^1(\Si_{t^\star_0})} - C(1+t^\star)e^{-C^{-1}\ellmode} \big\Vert\Psi^{R}_{\QM,0,\ellmode}\big\Vert^2_{H^1(\Si_{t^\star_0})}}{2C\ellmode^{2n+2p+2}\big\Vert\Psi^{R}_{\QM,0,\ellmode}\big\Vert^2_{H^1(\Si_{t^\star_0})} + 2Ce^{-C^{-1}\ellmode} \big\Vert\Psi^{R}_{\QM,0,\ellmode}\big\Vert_{H^1(\Si_{t^\star_0})}^2} \\
  & \geq  \half \frac{\half C^{-2}\ellmode^{2n+2}-(1+t^\star)e^{-C^{-1}\ellmode}}{\ellmode^{2n+2p+2}+e^{-C^{-1}\ellmode}},
  \end{aligned}
\end{align}
for all $\ellmode\geq \ellmode_c$ with $\ellmode_c=\ellmode_c(M,k,n)\geq \ell_\QM$ is a sufficiently large constant, and where $C=C(M,k,n)>0$. Taking $\ellmode = 2 C \log (t^\star-t^\star_0)$, we have
\begin{align}\label{est:takingelllogt}
  \begin{aligned}
    & \log(t^\star-t^\star_0)^{2p}\frac{\half C^{-2}\ellmode^{2n+2}-(1+t^\star)e^{-C^{-1}\ellmode}}{\ellmode^{2n+2p+2}+e^{-C^{-1}\ellmode}} \\
    =  & \; \log(t^\star-t^\star_0)^{2p} \frac{\frac{C^{-2}}{2}(2C\log(t^\star-t^\star_0))^{2n+2}-(1+t^\star)(t^\star-t^\star_0)^{-2}}{\le(2C\log (t^\star-t^\star_0)\ri)^{2n+2p+2}+(t^\star-t^\star_0)^{-2}} \\
    \xrightarrow{t^\star\to+\infty} & \; \frac{C^{-2}}{2} (2C)^{-2p}>0.
    \end{aligned}
\end{align}
From~\eqref{est:QaltQMp2} and~\eqref{est:takingelllogt} we deduce~\eqref{est:thmmain}.

Let us now define
\begin{align}\label{eq:defoptimalalpha}
  \begin{aligned}
    \alt^{[\pm2]} & := \sum_{\ell \geq \ell_\QM} c_\ell \le(\alt^{[\pm2]}_{\QM,0,\ellmode} + \alt^{[\pm2]}_{\Err,0,\ellmode}\ri),
  \end{aligned}
\end{align}
where $(c_\ell)_{\ell\geq \ell_\QM}$ is a sequence of positive real numbers which will be determined later. Using~\eqref{est:QaltQMp2}, we have
\begin{align}\label{eq:proofQaltnext}
  \begin{aligned}
    & \quad \quad \log(t^\star-t^\star_0)^{2p} \mathrm{E}^{\mathfrak{T},n}[\alt](t^\star) \\
    & \simeq \sum_{\ell\geq \ell_\QM} c_\ell^2 \log(t^\star-t^\star_0)^{2p} \mathrm{E}^{\mathfrak{T},n}[\alt_{\QM,0,\ellmode} + \alt_{\Err,0,\ellmode}](t^\star) \\
    & \geq \sum_{\ell\geq\ell_c} c_\ell^2\log(t^\star-t^\star_0)^{2p} \half \frac{\half C^{-2}\ellmode^{2n+2}-(1+t^\star)e^{-C^{-1}\ellmode}}{\ellmode^{2n+2p+2}+e^{-C^{-1}\ellmode}} \mathrm{E}^{\mathfrak{T},n}[\LL^{p/2}\le(\alt_{\QM,0,\ellmode} + \alt_{\Err,0,\ellmode}\ri)](t^\star_0) \\
    & \gtrsim \sum_{\ell \geq 2C\log(t^\star-t^\star_0)} c_\ell^2\mathrm{E}^{\mathfrak{T},n}[\LL^{p/2}\le(\alt_{\QM,0,\ellmode} + \alt_{\Err,0,\ellmode}\ri)](t^\star_0),
  \end{aligned}
\end{align}
provided that $\ell_c(M,k,n)>0$ was sufficiently large depending on $C$ and $n$. Taking
\begin{align*}
  c_\ell & := \le(\mathrm{E}^{\mathfrak{T},n}[\LL^{p/2}\le(\alt_{\QM,0,\ellmode} + \alt_{\Err,0,\ellmode}\ri)](t^\star_0)\ri)^{-1/2} \frac{1}{\sqrt{\ell}\log\ell}, 
\end{align*}
we have
\begin{align}\label{eq:limitsommepartielle}
  \begin{aligned}
  \sum_{\ell\geq \ell_\QM}c_\ell^2\mathrm{E}^{\mathfrak{T},n}[\LL^{p/2}\le(\alt_{\QM,0,\ellmode} + \alt_{\Err,0,\ellmode}\ri)](t^\star_0) < \infty, \\
  L^\varep\sum_{\ell\geq L}c_\ell^2\mathrm{E}^{\mathfrak{T},n}[\LL^{p/2}\le(\alt_{\QM,0,\ellmode} + \alt_{\Err,0,\ellmode}\ri)](t^\star_0) \xrightarrow{L\to+\infty} +\infty,
  \end{aligned}
\end{align}
for all $\varep>0$.
Plugging~\eqref{eq:limitsommepartielle} in~\eqref{eq:proofQaltnext}, we deduce\footnote{From~\eqref{eq:limitsommepartielle} and the results of Section~\ref{sec:controlrevChandra} one has that $\alt$ defined by~\eqref{eq:defoptimalalpha} is $H^{n+m}$-regular.} \eqref{est:thmmainbis} and this finishes the proof of Theorem~\ref{thm:main2}.

\begin{remark}\label{rem:actualnorms}
  In Theorem 1.11 of~\cite{Gra.Hol24a}, the boundedness and decay statements for $\alt^{[\pm2]}$ are formulated with the following energy norms:
  \begin{align}\label{eq:ETRn}
    \mathrm{E}^{\mathfrak{T},\mathfrak{R},n}[\alt] = \mathrm{E}^{\mathfrak{T},n}[\alt] + \sum_{i=0}^n\norm{\pr_t^{n-2-i}\Psi^R}_{H^{i-2}(S_{t^\star,\infty})}^2 + \mathrm{E}^{\mathfrak{R},n-2}\le[\LL^{-1}(\LL-2)^{-1}\pr_t\Psi^D\ri],
  \end{align}
  where $H^{i-2}(S_{t^\star,\infty})$ is the regular (fractional) Sobolev space on the sphere at infinity $S_{t^\star,\infty} = \Si_{t^\star}\cap\II$, and where $\mathrm{E}^{\mathfrak{R},n-2}\le[\LL^{-1}(\LL-2)^{-1}\pr_t\Psi^D\ri]$ is an energy-norm of $\LL^{-1}(\LL-2)^{-1}\pr_t\Psi^D$ (see~\cite[(1.24) and Section 1.4]{Gra.Hol24a}). For the quasimodes $\alt=\alt_{\QM,m,\ellmode}$ constructed in Section~\ref{sec:controlrevChandra}, using~\eqref{est:equivomellell} and a trace estimate, one can absorb the additional terms in~\eqref{eq:ETRn} and we have
  \begin{align}\label{est:actualnorms1}
    \begin{aligned}
    \mathrm{E}^{\mathfrak{T},\mathfrak{R},n}[\alt_{\QM,m,\ellmode}](t^\star) \les \left(1 + \ell^{-3}\right)\mathrm{E}^{\mathfrak{T},n}[\alt_{\QM,m,\ellmode}](t^\star),
    \end{aligned}
  \end{align}
  for all $t^\star\geq t^\star_0$. For the error $\alt = \alt_{\Err,m,\ellmode}$ all the norms at initial time $t^\star=t^\star_0$ are controlled by~\eqref{est:ErrRWQM} and we have
  \begin{align}\label{est:actualnorms2}
    \mathrm{E}^{\mathfrak{T},\mathfrak{R},n}[\alt_{\Err,m,\ellmode}](t^\star_0) \les e^{-C^{-1}\ell}\mathrm{E}^{\mathfrak{T},n}[\alt_{\QM,m,\ellmode}](t^\star_0),
  \end{align}
  with $C=C(M,k,n)>0$. Hence, combining~\eqref{est:actualnorms1} and~\eqref{est:actualnorms2}, we have $\mathrm{E}^{\mathfrak{T},\mathfrak{R},n}[\alt](t^\star_0) \simeq \mathrm{E}^{\mathfrak{T},n}[\alt](t^\star_0)$ for $\alt= \alt_{\QM,m,\ellmode} + \alt_{\Err,m,\ellmode}$, provided that $\ell$ is sufficiently large. It follows that the estimate of Theorem~\ref{thm:main2} also holds with the energies $\mathrm{E}^{\mathfrak{T},\mathfrak{R},n}[\alt]$ replacing $ \mathrm{E}^{\mathfrak{T},n}[\alt]$ and hence that the main decay estimate (1.24) in~\cite{Gra.Hol24a} is optimal.
\end{remark}

\appendix

\section{The ``metric reconstruction'' method}\label{sec:Hertzpotentials}
In Section 3 of \cite{Gra.Hol24} we provided a way to construct a solution of the system of gravitational perturbations in double null gauge from a solution to the Teukolsky system. In this appendix, we provide an alternative way to produce such a solution from the ``metric reconstruction” method. While remarkably simple, one of the drawbacks of the method -- besides the fact that it is very fragile to non-linear applications -- is that it is not clear that \emph{all} solutions of the system of gravitational perturbations arise in this way. In any case, here the method provides a simple way to produce \emph{a} solution exhibiting inverse logarithmic decay. 

The Einstein equations $R_{\mu \nu} = \Lambda g_{\mu \nu} = -3k^2 g_{\mu \nu}$ for perturbations $g_{\mu\nu} + \varep h_{\mu\nu}$ of the Schwarzschild-AdS metric $g=g_{\mathrm{M,k}}$ linearise as the following \emph{linearised Einstein equations}
\begin{align}\label{eq:LGE}
  \begin{aligned}
    0 & = -\nab_\mu\nab_\nu h ^\al_\al - \nab^\al\nab_\al h_{\mu\nu} + \nab^\al\nab_\nu h_{\al\mu} + \nab^\al\nab_\mu h_{\al\nu} + g_{\mu\nu}\le(\nab^\al\nab_\al h^\be_\be - \nab^\al\nab^\be h_{\al\be}\ri) +2 \Lambda h_{\mu\nu}.
  \end{aligned}
\end{align}

The following theorem is an adaptation of the ``metric reconstruction'' of~\cite{Coh.Keg75,Chr75,Keg.Coh79,Wal78}. See also~\cite[Appendix C]{Dia.Rea.San09}, and see further developments of this idea in~\cite{Gre.Hol.Zim20,Hol.Too24}. The -- long but direct -- computation is left to the reader.\footnote{A comment on the references for the formulas in the proof of Theorem~\ref{thm:walddual} is in order here. The definition of the metric~\eqref{eq:metricoriginalDiasReall} is a rewriting of the three coinciding formulas~\cite[(5.27)]{Keg.Coh79},~\cite[Table I]{Chr75} (although stated for potentials in a separated form) and~\cite{Dia.Rea.San09}. It is simply obtained as the adjoint of the Teukolsky scalar map as explained in~\cite{Wal78}, and the fact that it satisfies the linearised Einstein equations~\eqref{eq:LGE} is a consequence of Wald's duality argument~\cite{Wal78}, see also~\cite[Appendix C]{Dia.Rea.San09}. However, for the formulas for the Teukolsky quantities $\alt^{[\pm2]}[\mathfrak{h}]$,~\cite[(18)]{Wal78} seems to be incorrect, the formula~\cite[(5.28)]{Keg.Coh79} uses simplifications which are only valid in the $\La=0$ case (we suspect that the authors used~\cite[(4.2f)]{New.Pen62} with $\La=0$), and~\cite{Dia.Rea.San09} did not compute $\alt^{[\pm2]}[\mathfrak{h}]$. However, the computations for $\alt^{[\pm2]}[\mathfrak{h}]$ can be performed starting from the formulas of~\cite{Chr75} which are valid for perturbations of a class of background metrics which includes Schwarzschild-AdS.}
\begin{theorem}\label{thm:walddual}
  Let $\alt^{[-2]}_G$ be a solution of the $[-2]$-Teukolsky equation~\eqref{eq:Teukm2}. Then recalling (\ref{eq:defLLn}) and setting $t=u+v, r^\star=v-u$ (the convention of~\cite{Gra.Hol24}), the expression
  \begin{align}\label{eq:metricoriginalDiasReall}
    \begin{aligned}
      \mathfrak{h}[\alt_G^{[-2]}]    & = -4\Re\bigg[2\LL_2 (w^{-1}L)\le(r\alt^{[-2]}_G\ri) \le(\d u (\d\varth+i\sin\varth\d\varphi) +(\d\varth+i\sin\varth\d\varphi)\d u\ri) + 4\LL_1\LL_2\le(r\alt_G^{[-2]}\ri)\d u \d u \\
                                           & \quad\quad\quad\quad\quad\quad + r^{-1}(w^{-1}L)\le(r^2(w^{-1}L)\alt_G^{[-2]}\ri)(\d\varth+i\sin\varth\d\varphi)^2\bigg], \\
    \end{aligned}
  \end{align}
  defines a real, symmetric, spacetime $2$-tensor, solution to the linearised Einstein equations~\eqref{eq:LGE}. Moreover, its associated Teukolsky quantities $\alt^{[\pm2]}[\mathfrak{h}[\alt_G^{[-2]}]]$ are given by
  \begin{align}\label{eq:altpm2fromalrpm2G}
    \alt^{[+2]}[\mathfrak{h}[\alt_G^{[-2]}]] & = w^2(w^{-1}L)^4(\alt_G^{[-2]})^\ast, & \alt^{[-2]}[\mathfrak{h}[\alt_G^{[-2]}]] & = \LL^\dg_{-1}\LL^\dg_0\LL^\dg_{1}\LL^\dg_{2}(\alt_G^{[-2]})^\ast - 12M\pr_t\alt_G^{[-2]}.
  \end{align}
\end{theorem}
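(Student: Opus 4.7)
My plan is to verify each claim of the theorem in turn, using Wald's formal adjoint argument for the linearised Einstein equations and a direct null-frame computation for the identification of the Teukolsky quantities. Reality and symmetry of $\mathfrak{h}[\alt_G^{[-2]}]$ are manifest from the formula~\eqref{eq:metricoriginalDiasReall}: the outer $\Re$ enforces reality, and each of the three rank-$2$ tensor products appearing in the brackets -- $\d u \otimes_s (\d\varth + i\sin\varth\,\d\varphi)$, $\d u \otimes \d u$, and $(\d\varth + i\sin\varth\,\d\varphi)^{\otimes 2}$ -- is explicitly symmetric. No further work is needed for these two properties.

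For the linearised Einstein equations I would argue as follows. Denote by $\mathcal{S}^{[+2]}: h \mapsto \alt^{[+2]}[h]$ the operator extracting the $[+2]$ Teukolsky quantity from a metric perturbation, and write the decoupling of the Teukolsky equation on solutions of the linearised Einstein equations schematically as
\begin{align*}
\Teuk^{[+2]} \, \mathcal{S}^{[+2]} h \; = \; \mathcal{O} \, \mathcal{E} h,
\end{align*}
where $\mathcal{E} h$ is the right-hand side of~\eqref{eq:LGE} and $\mathcal{O}$ is an explicit linear differential operator. Taking formal $L^2(g)$-adjoints and using that $(\Teuk^{[+2]})^\dg$ agrees with $\Teuk^{[-2]}$ up to a real conformal factor, together with the formal self-adjointness of $\mathcal{E}$, one obtains that $(\mathcal{S}^{[+2]})^\dg \alt^{[-2]}_G$ solves $\mathcal{E} h = 0$ whenever $\alt^{[-2]}_G$ solves~\eqref{eq:Teukm2}. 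The expression in~\eqref{eq:metricoriginalDiasReall} is precisely (the real part of) this formal adjoint in the $(u,v,\varth,\varphi)$ coordinates, with the Newman-Penrose-type conventions matching \cite[Table I]{Chr75} and \cite[Appendix C]{Dia.Rea.San09}, so no separate direct verification of~\eqref{eq:LGE} is required beyond matching conventions.

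For the identification of $\alt^{[\pm 2]}[\mathfrak{h}]$ in~\eqref{eq:altpm2fromalrpm2G} I would perform a direct null-frame computation. Substituting~\eqref{eq:metricoriginalDiasReall} into the defining extremal Weyl components: the $(\d\varth+i\sin\varth\,\d\varphi)^2$-coefficient of $\mathfrak{h}$ contributes via two $(w^{-1}L)$-derivatives when $e_4 \otimes e_4$ is contracted with the linearised Riemann tensor, and combined with the two $(w^{-1}L)$-derivatives already present in that coefficient one recovers $w^2(w^{-1}L)^4(\alt^{[-2]}_G)^\ast$ for $\alt^{[+2]}[\mathfrak{h}]$, as asserted. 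For $\alt^{[-2]}[\mathfrak{h}]$ the computation is heavier: it requires rearranging the angular operators $\LL_1 \LL_2$ occurring in the $\d u \, \d u$ coefficient into their dagger form $\LL^\dg_{-1}\LL^\dg_0\LL^\dg_1\LL^\dg_2$ using the commutation identities gathered in Lemma~\ref{lem:angTS} and the commutators of $\LL_n$ with $L,\Lb$. At one stage a cross-derivative $L\Lb \alt^{[-2]}_G$ appears; invoking the Teukolsky equation~\eqref{eq:Teukm2} to substitute for it is the mechanism that produces the additional $-12M\pr_t \alt^{[-2]}_G$ summand in~\eqref{eq:altpm2fromalrpm2G}, and accounts also for the cancellation of all $k^2$-corrections coming from the AdS potential.

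The main obstacle is precisely this last bookkeeping. As the footnote to the theorem records, existing computations of the reconstructed Teukolsky quantities are either performed in the $k=0$ case or contain inaccuracies in the coefficient of $\pr_t \alt^{[-2]}_G$, so one cannot cite a previous result directly. The calculation must therefore be redone keeping track of every $6M/r$, $1\pm 15 k^2 r^2$, and $-4\cot^2\varth$ term from~\eqref{eq:Teukoriginal} when~\eqref{eq:Teukm2} is invoked, and it must be checked that the $k$-dependent contributions cancel in pairs so that only the mass-proportional constant $12M$ remains. Starting from the metric-reconstruction formulae of \cite{Chr75}, which are valid for a class of backgrounds containing Schwarzschild-AdS, and organising the computation by first fixing conventions as in Section 2 of \cite{Gra.Hol24} and then systematically reducing fourth-order null derivatives via~\eqref{eq:Teukm2}, should provide the cleanest route through this bookkeeping.
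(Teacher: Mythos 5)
Your proposal follows essentially the same route the paper indicates: reality and symmetry read off directly from~\eqref{eq:metricoriginalDiasReall}, the linearised Einstein equations via Wald's formal-adjoint duality argument (cf.~Remark~\ref{rem:duality}), and the identification~\eqref{eq:altpm2fromalrpm2G} by a direct null-frame computation starting from Chrzanowski's formulas, with the Teukolsky equation~\eqref{eq:Teukm2} substituted to generate the $-12M\pr_t$ term -- which is exactly the ``long but direct computation'' the paper leaves to the reader, together with the caveats about the $k=0$ restrictions and errors in the cited references recorded in the paper's footnote. No discrepancy to report.
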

\begin{remark}\label{rem:TSandreconstruction}
  If $\alt^{[\pm2]}_G$ are solutions to the Teukolsky problem (cf.~Definition \ref{def:teukolskyp}) satisfying in addition the Teukolsky-Starobinsky relations~\eqref{eq:TSidog}, then
  \begin{align}\label{eq:altpm2fromalrpm2Gbis}
    \begin{aligned}
    \alt^{[+2]}[\mathfrak{h}[\alt_G^{[-2]}]] & = \LL_{-1}\LL_0\LL_{1}\LL_{2}(\alt_G^{[+2]})^\ast - 12M\pr_t\alt_G^{[+2]},\\
      \alt^{[-2]}[\mathfrak{h}[\alt_G^{[-2]}]] & = \LL^\dg_{-1}\LL^\dg_0\LL^\dg_{1}\LL^\dg_{2}(\alt_G^{[-2]})^\ast - 12M\pr_t\alt_G^{[-2]}.
    \end{aligned}
  \end{align}
  In particular, $\alt^{[\pm2]}[\mathfrak{h}[\alt_G^{[-2]}]]$ are also solutions to the Teukolsky problem.
\end{remark}
\begin{remark}
  The tensor $\mathfrak{h}$ given by~\eqref{eq:metricoriginalDiasReall} is in the so-called \emph{ingoing radiation gauge}, \emph{i.e.} $\mathfrak{h}(\Lb,\cdot)=0$. Defining
  \begin{align*}
    \widetilde{\mathfrak{h}}_{\mu\nu} & := \mathfrak{h}_{\mu\nu} + \nabla_\mu(fL)_\nu + \nabla_\nu(fL)_\mu = \big[\mathfrak{h} -2\Om^2\d f \d u - 2\Om^2\d u \d f +2\Om^2f r(\d\varth^2+\sin^2\varth\d\varphi^2)\big]_{\mu\nu},
  \end{align*}
  with $\Lb f = -4\Om^{-2}\LL_1\LL_2(r\alt^{[-2]}_G)$, and $f = 0$ on $\II$, we have that $\widetilde{\mathfrak{h}}$ is in the \emph{double null gauge} of~\cite{Gra.Hol24} and still satisfies all the conclusions of Theorem~\ref{thm:walddual}. One can observe that $\alt^{[\pm2]}_G$ being regular at the future horizon and at infinity implies that $\widetilde{\mathfrak{h}}$ is regular at the horizon and asymptotically AdS in the sense of~\cite[(48) and Definition 2.6]{Gra.Hol24}.\footnote{One easily checks that $r^2\mathfrak{h}$ extends smoothly to $\II$ and that $\glin,\Om^{-2}\bmlin,\Olin$ extend smoothly to $\HH^+$, from which it follows that all the appropriately weighted Ricci and null curvature components of~\cite{Gra.Hol24} extend smoothly to $\II$ and $\HH^+$ respectively.} Moreover, one can easily see that $\alt^{[\pm2]}_G$ satisfying the boundary conditions~\eqref{eq:TeukBC} implies that $\mathfrak{h}$ is conformal to the induced Anti-de Sitter metric at infinity in the sense of~\cite[Definition 2.4]{Gra.Hol24}.\footnote{This is a direct consequence of Remark~\ref{rem:TSandreconstruction}, the fact that $\mathfrak{h}$ is asymptotically AdS and the Bianchi equations~\cite[Section 2.7.3]{Gra.Hol24}.}
\end{remark}
\begin{remark}\label{rem:surjectivity}
  To obtain a proof of Theorem~\ref{thm:TeuktoGrav} with the metric reconstruction method, one still has to show that all solutions to the Teukolsky problem, satisfying the Teukolsky-Starobinsky identities, can be obtained as the Teukolsky quantities of $\mathfrak{h}$, \emph{i.e.} that the map $\alt^{[\pm2]}_G \mapsto \alt^{[\pm2]}[\mathfrak{h}[\alt_G^{[-2]}]]$ given by~\eqref{eq:altpm2fromalrpm2Gbis} is surjective. We believe that in the present Schwarzschild-AdS case, the map is indeed surjective. See~\cite{Whi.Pri05,Aks.And.Bac19} for discussions, reviews and results in the asymptotically flat Kerr case.
  %
\end{remark}

\section{The Teukolsky IBVP with Teukolsky-Starobinsky constraints}\label{sec:DiasWP}
In this appendix we connect to previous results in the physics literature where the Teukolsky system for $\alt^{[+2]}$ and $\alt^{[-2]}$ was decoupled at the expense of a higher order boundary condition. We first recall this result, which in our case is a straightforward consequence of Proposition~\ref{prop:RWTSsimple}.

\begin{corollary}\label{cor:TeukIBVPdecoupled}
  Let $\alt^{[+2]}$ and $\alt^{[-2]}$ be solutions to the Teukolsky problem as in Definition \ref{def:teukolskyp}, which in addition satisfy the Teukolsky-Starobinsky identities~\eqref{eq:TSidog}. Then
  \begin{itemize}
  \item\label{item:IBVPaltp2TS} $\alt^{[+2]}$  satisfies the decoupled higher order boundary conditions
    \begin{subequations}\label{eq:TeukRobinBCp2}
      \begin{align}
        k^2\LL^{[+2]} \Lb\alt^{[+2]}_s + \pr_t\Lb\Lb\alt^{[+2]}_s & \xrightarrow{r\to+\infty} 0, \label{eq:TeukRobinBCp2s} \\
        \Lb\Lb\alt^{[+2]}_a & \xrightarrow{r\to+\infty} 0 \label{eq:TeukRobinBCp2a}.
      \end{align}
    \end{subequations}
  \item\label{item:IBVPaltm2TS} $\alt^{[-2]}$ satisfies the higher order decoupled boundary conditions
    \begin{subequations}\label{eq:TeukRobinBCm2}
    \begin{align}
      k^2\LL^{[-2]}L\alt^{[-2]}_s+\pr_tLL\alt^{[-2]}_s & \xrightarrow{r\to+\infty} 0,    \label{eq:TeukRobinBCm2s} \\
      LL\alt^{[-2]}_a & \xrightarrow{r\to+\infty} 0. \label{eq:TeukRobinBCm2a}
    \end{align}
    \end{subequations}
  \end{itemize}
\end{corollary}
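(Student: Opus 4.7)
The plan is to derive both parts of the statement directly from Proposition~\ref{prop:RWTSsimple}, which under the Teukolsky-Starobinsky assumption yields the two pointwise identities $\Psi^D_s = 0$ and $\Psi^R_a = 0$ on $\mathcal{M}$. I would extract the antisymmetric boundary conditions \eqref{eq:TeukRobinBCp2a} and \eqref{eq:TeukRobinBCm2a} from $\Psi^R_a = 0$ combined with the Regge-Wheeler Dirichlet condition for $\Psi^D$; and the symmetric ones \eqref{eq:TeukRobinBCp2s} and \eqref{eq:TeukRobinBCm2s} from $\Psi^D_s = 0$ combined with an asymptotic analysis at the conformal boundary.

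For the antisymmetric parts, since $\Psi^D$ solves the Regge-Wheeler problem and in particular satisfies the Dirichlet condition \eqref{eq:RWBCdecoDirichlet}, its antisymmetric part satisfies $\Psi^D_a \to 0$ at $\mathcal{I}$. Combined with the identity $\Psi^R_a = 0$ and the inversion formula \eqref{eq:defPsiDRinv}, this immediately forces $\Psi^{[+2]}_a \to 0$ and $\Psi^{[-2]}_a \to 0$ at infinity. Unfolding the Chandrasekhar transformations $\Psi^{[+2]}_a = (w^{-1}\Lb)^2 \alt^{[+2]}_a$ and $\Psi^{[-2]}_a = (w^{-1}L)^2 \alt^{[-2]}_a$ and using that $w\to k^2$ with $Lw, \Lb w \to 0$ at the conformal boundary then yields \eqref{eq:TeukRobinBCp2a} and \eqref{eq:TeukRobinBCm2a}. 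For the symmetric parts, the identity $\Psi^D_s = 0$ rewrites via the Chandrasekhar transformations as
\begin{align*}
  (w^{-1}\Lb)^2 \alt^{[+2]}_s \;=\; (w^{-1}L)^2 (\alt^{[-2]}_s)^\ast \qquad \text{pointwise on } \mathcal{M}.
\end{align*}
Applying $\Lb$ (equivalently $\pr_{r^\star}$) to both sides, passing to the limit at $\mathcal{I}$, using the Dirichlet and Neumann Teukolsky boundary conditions \eqref{eq:TeukBC} to identify the leading and first subleading radial asymptotics of $(\alt^{[-2]}_s)^\ast$ with those of $\alt^{[+2]}_s$, and using the Teukolsky equation \eqref{eq:Teukp2} to express the second radial derivative of $\alt^{[+2]}_s$ at the boundary in terms of tangential $\pr_t$ and $\LL^{[+2]}$ derivatives of the leading asymptotic, the resulting identity simplifies after direct algebra to precisely \eqref{eq:TeukRobinBCp2s}. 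The derivation of \eqref{eq:TeukRobinBCm2s} is strictly analogous, applying $L$ to the conjugate relation $(w^{-1}\Lb)^2 (\alt^{[+2]}_s)^\ast = (w^{-1}L)^2 \alt^{[-2]}_s$.

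The only real technical difficulty is the careful bookkeeping of the first subleading corrections in the near-infinity expansion. Because $w = k^2 + O(r^{-2})$ and $6M/r = O(r^{-1})$ do not vanish to all orders at $\mathcal{I}$, the operators $(w^{-1}\Lb)^2$ and $(w^{-1}L)^2$ produce commutator-type corrections beyond the naive leading pieces $k^{-4}\Lb\Lb$ and $k^{-4}LL$; these combine with the next-to-leading term generated when Teukolsky is expanded at the boundary to produce exactly the mixture of $\pr_t\Lb\Lb$ and $k^2\LL^{[+2]}\Lb$ terms appearing in \eqref{eq:TeukRobinBCp2s}. Getting all signs and numerical coefficients to line up is where the actual computation lies, but no additional input is needed beyond Proposition~\ref{prop:RWTSsimple}, the Chandrasekhar relations, and the Teukolsky equations and boundary conditions themselves.
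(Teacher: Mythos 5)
Your proposal is correct and follows essentially the same route as the paper: invoke Proposition~\ref{prop:RWTSsimple} to get $\Psi^D_s=\Psi^R_a=0$, then translate these into the boundary conditions \eqref{eq:TeukRobinBCp2}--\eqref{eq:TeukRobinBCm2} by unfolding the Chandrasekhar transformations at $\mathcal{I}$ (using $w\to k^2$, $\Lb w, Lw\to 0$) and using the once-differentiated Teukolsky equations \eqref{eq:psirewrite} together with \eqref{eq:TeukBC} to trade the radial derivative of $\Psi^{[\pm2]}_s$ for tangential data. The only cosmetic difference is that the paper organises the symmetric case as a chain of equivalences through $\pr_{r^\star}\Psi^D_s\to 0$ (and hence also obtains the converse direction via unique continuation), whereas you only carry out the forward implication, which is all the corollary requires.
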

\begin{proof}
We have
  \begin{align*}
    \text{\eqref{eq:TeukRobinBCp2a} and~\eqref{eq:TeukRobinBCm2a}} \Leftrightarrow \Psi^{[\pm2]}_a \xrightarrow{r\to+\infty} 0 \Leftrightarrow \Psi^R_a \xrightarrow{r\to+\infty} 0 \textrm{\ and \ } r^2 \partial_r \Psi^R_a \xrightarrow{r\to+\infty} 0  \Leftrightarrow \Psi^R_a= 0,
  \end{align*}
 where the first equivalence follows from the definition of $\Psi^{[\pm2]}_a$, the second from combining (\ref{eq:defPsiDR}) and (\ref{eq:RWBCdecoRobin})
 and the last equivalence follows from unique continuation (see footnote \ref{footnoteUC}). Furthermore, using equations~\eqref{eq:psirewrite} and the boundary conditions~\eqref{eq:RWintermBC}, \eqref{eq:RWBCD},
  \begin{align*}
    \text{\eqref{eq:TeukRobinBCp2s} and~\eqref{eq:TeukRobinBCm2s}} & \Leftrightarrow \LL\le(\psi^{[+2]}_s+\psi^{[-2]}_s\ri) + \pr_t\Psi^{[+2]}_s + \pr_t\Psi^{[-2]}_s  \xrightarrow{r\to+\infty} 0 \\
                                                                      & \Leftrightarrow -\le(L\Psi^{[+2]}_s + \Lb\Psi^{[-2]}_s\ri) + \pr_t\Psi^{[+2]}_s + \pr_t\Psi^{[-2]}_s \xrightarrow{r\to+\infty} 0 \\
                                                                      & \Leftrightarrow \pr_{r^\star}\Psi^D_s \xrightarrow{r\to+\infty} 0 \Leftrightarrow \Psi^D_s = 0.
  \end{align*}
  The corollary then follows from Proposition~\ref{prop:RWTSsimple}.
\end{proof}

\begin{remark}
 To our knowledge, the type of higher order boundary conditions~\eqref{eq:TeukRobinBCp2} and~\eqref{eq:TeukRobinBCm2} for Teukolsky quantities first appeared in the case of mode solutions on Kerr-AdS in~\cite{Dia.San13} (see also a derivation in~\cite[Appendix A]{Gra.Hol23}). It was used in~\cite{Dia.San13,Car.Dia.Har.Leh.San14} and further works to compute the mode solutions to each of the Teukolsky equation alone, thereby avoiding to consider a system of coupled equations.
\end{remark}

Given the unusual form of the boundary conditions~\eqref{eq:TeukRobinBCp2} and~\eqref{eq:TeukRobinBCm2} it is not clear whether the Teukolsky equation for $\alt^{[+2]}$ or $\alt^{[-2]}$ with the boundary conditions~\eqref{eq:TeukRobinBCp2} or~\eqref{eq:TeukRobinBCm2} gives rise in itself to a well-posed evolution problem; see the discussion in~\cite[§6.3]{Hol.Luk.Smu.War20}.
To construct solutions to the decoupled initial boundary value problem for $\alt^{[+2]}$ (or for $\alt^{[-2]}$) satisfying the higher order boundary conditions of Corollary~\ref{cor:TeukIBVPdecoupled}, one can construct an auxiliary $\alt^{[-2]}$ (or $\alt^{[+2]}$ resp.) and apply the usual well-posedness result for the coupled system. From this point of view there is no gain in having the decoupled higher order formulation. We nevertheless state the result and its proof.

\begin{theorem}[Well-posedness of the Teukolsky IBVP with Teukolsky-Starobinsky constraints]\label{thm:WPDiasSantos}
  Let $t^\star_0\in\mathbb{R}$ and let $\alt^{[+2]}_0,\dot\alt^{[+2]}_0$ be two smooth spin-$+2$-weighted complex functions on $\Si_{t^\star_0}$, regular at the horizon and at infinity and satisfying the corner compatibility condition on the corner $\Si_{t^\star_0}\cap\II$
  \begin{align}
    \LL^{[+2]}\psi^{[+2]}_{0,s} + \pr_t\Psi^{[+2]}_{0,s} & \xrightarrow{r\to+\infty} 0,\label{eq:cornerconditionTeukolskybissnewp2}\\
    \pr_t^n\Psi^{[+2]}_{0,a} & \xrightarrow{r\to+\infty} 0,\label{eq:cornerconditionTeukolskybisanewp2}
  \end{align}
  for all $n\geq0$, where $\psi_{0}^{[+2]} := w^{-1}(\dot\alt^{[+2]}_0 - \pr_{r^\star}\alt^{[+2]})$, and where $\Psi_0^{[+2]}$ and its time derivatives are defined in terms of $\alt^{[+2]}_0,\dot\alt^{[+2]}_0$ consistently with the Teukolsky equation~\eqref{eq:Teukp2} being satisfied.
  There exists a unique smooth spin-$+2$-weighted function $\alt^{[+2]}$ on $\{t^\star\geq t^\star_0\}$, solution to the Teukolsky equation~\eqref{eq:Teukp2}, regular at the horizon and at infinity, and such that the boundary condition~\eqref{eq:TeukRobinBCp2} holds, \emph{i.e.}
  \begin{align*}
    k^2\LL^{[+2]} \Lb\alt^{[+2]}_s + \pr_t\Lb\Lb\alt^{[+2]}_s \xrightarrow{r\to+\infty} 0, && \text{and} && \Lb\Lb\alt^{[+2]}_a \xrightarrow{r\to+\infty} 0.
  \end{align*}
  Moreover, there exists a unique smooth spin-$-2$-weighted function $\alt^{[-2]}$ on $\{t^\star\geq t^\star_0\}$ solution of the Teukolsky equation~\eqref{eq:Teukm2}, regular at the horizon and at infinity, and such that the couple $(\alt^{[+2]},\alt^{[-2]})$ satisfy the conformal Anti-de Sitter boundary conditions~\eqref{eq:TeukBC} and the Teukolsky-Starobinsky constraints~\eqref{eq:TSidog}. For all $n\geq 3$ and all $t^\star_1\geq t^\star_0$, we have the estimates
  \begin{align}\label{est:energyredshiftDiasSantos}
    \begin{aligned}
      \mathrm{E}^{\mathfrak{T},n}[\alt](t^\star_1) & \les_{M,k,n} \mathrm{E}^{\mathfrak{T},n}[\alt](t^\star_0).
    \end{aligned}
  \end{align}
  The analogous result holds \emph{mutatis mutandis} with $\alt^{[\mp2]}$ instead of $\alt^{[\pm2]}$.
\end{theorem}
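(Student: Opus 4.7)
The plan is to reduce the problem to the well-established coupled Teukolsky IBVP (Theorem~1.6 of~\cite{Gra.Hol24}). Given $(\alt^{[+2]}_0,\dot\alt^{[+2]}_0)$, I will construct auxiliary initial data $(\alt^{[-2]}_0,\dot\alt^{[-2]}_0)$ so that (i)~the coupled AdS boundary conditions~\eqref{eq:TeukBC} are compatible at the corner $\Si_{t^\star_0}\cap\II$ to the order required by the coupled well-posedness theorem, and (ii)~the Teukolsky-Starobinsky identities~\eqref{eq:TSidog} hold at $t^\star=t^\star_0$ together with their first time derivative. The coupled theorem then furnishes a solution $(\alt^{[+2]},\alt^{[-2]})$, and the core of the argument is to propagate the TS identities; the decoupled higher-order boundary condition~\eqref{eq:TeukRobinBCp2} for $\alt^{[+2]}$ will follow at once from Corollary~\ref{cor:TeukIBVPdecoupled}.

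To prescribe $(\alt^{[-2]}_0,\dot\alt^{[-2]}_0)$, I first compute $(\Psi^{[+2]}_0,\dot\Psi^{[+2]}_0)$ on $\Si_{t^\star_0}$ from $(\alt^{[+2]}_0,\dot\alt^{[+2]}_0)$ via~\eqref{eq:Chandrap2}, trading higher time derivatives for spatial derivatives using the Teukolsky equation~\eqref{eq:Teukp2}. I then determine $(\Psi^{[-2]}_0,\dot\Psi^{[-2]}_0)$ uniquely by enforcing
\begin{align*}
  \Psi^D_s\big|_{t^\star_0} = \dot\Psi^D_s\big|_{t^\star_0} = \Psi^R_a\big|_{t^\star_0} = \dot\Psi^R_a\big|_{t^\star_0} = 0,
\end{align*}
with $\Psi^D,\Psi^R$ as in~\eqref{eq:defPsiDR} and second time derivatives replaced using~\eqref{eq:RW}; by Proposition~\ref{prop:RWTSsimple} these are exactly the initial-slice traces of the symmetric and antisymmetric components of~\eqref{eq:TSidog}. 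Inverting the Chandrasekhar relations~\eqref{eq:Chandram2} along $\Si_{t^\star_0}$ then recovers $(\alt^{[-2]}_0,\dot\alt^{[-2]}_0)$ as smooth functions, with the residual freedom (the kernel of the Chandrasekhar map) eliminated by requiring regularity at $\II$ and the Teukolsky boundary conditions~\eqref{eq:TeukBC} on the corner. The corner conditions~\eqref{eq:cornerconditionTeukolskybissnewp2}--\eqref{eq:cornerconditionTeukolskybisanewp2} on $\alt^{[+2]}_0$, translated through the boundary limits $\alo,\psio,\beo,\gao,\deo,\epo$ from the proof of Proposition~\ref{prop:Chandra}, precisely encode compatibility of the pair $(\alt^{[+2]}_0,\alt^{[-2]}_0)$ with~\eqref{eq:TeukBC} to all orders.

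Applying Theorem~1.6 of~\cite{Gra.Hol24} yields a unique smooth coupled solution $(\alt^{[+2]},\alt^{[-2]})$ on $\{t^\star\geq t^\star_0\}$. Forming the Chandrasekhar transforms $\Psi^{[\pm 2]}$ and the derived quantities $\Psi^D,\Psi^R$, Propositions~\ref{prop:Chandra} and~\ref{prop:RWdeco} show that these solve the decoupled homogeneous Regge-Wheeler problem. Because the projections $\phi\mapsto\half(\phi\pm(\CC\phi)^\ast)$ commute with $\RW$, with $\pr_t$, and with the ``Robin'' boundary operator of~\eqref{eq:RWBCdecoRobin} (via~\eqref{eq:TSconjug} and the intertwining $(\LL^{[-2]}\phi)^\ast=\LL^{[+2]}\phi^\ast$), the functions $\Psi^D_s$ and $\Psi^R_a$ solve, respectively, the Dirichlet and Robin Regge-Wheeler problems with trivial data. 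Uniqueness for these problems (the former classical, the latter established in Theorem~\ref{thm:WPRobin}) forces $\Psi^D_s\equiv 0$ and $\Psi^R_a\equiv 0$, whence Proposition~\ref{prop:RWTSsimple} yields~\eqref{eq:TSidog} throughout the evolution.

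Corollary~\ref{cor:TeukIBVPdecoupled} then delivers the decoupled higher-order boundary conditions~\eqref{eq:TeukRobinBCp2} for $\alt^{[+2]}$. Uniqueness of $\alt^{[+2]}$ (and of the accompanying $\alt^{[-2]}$) follows by feeding any second candidate through the same construction: the two resulting coupled IBVPs share initial data and hence solutions. The energy estimate~\eqref{est:energyredshiftDiasSantos} is inherited from the energy--redshift estimates for the coupled Teukolsky system established in~\cite{Gra.Hol24,Gra.Hol24a}; the analogous statement with $\alt^{[-2]}$ replacing $\alt^{[+2]}$ follows \emph{mutatis mutandis} by swapping the roles of $\pm 2$ throughout. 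The principal obstacle lies in paragraph two: one must verify that the intrinsic corner conditions~\eqref{eq:cornerconditionTeukolskybissnewp2}--\eqref{eq:cornerconditionTeukolskybisanewp2} on $\alt^{[+2]}_0$ translate \emph{exactly} into the coupled corner conditions for $(\alt^{[+2]}_0,\alt^{[-2]}_0)$ at every order, while simultaneously enforcing the four conditions $\Psi^D_s=\dot\Psi^D_s=\Psi^R_a=\dot\Psi^R_a=0$ at $t^\star_0$ and leaving no residual kernel in the inversion of the Chandrasekhar transformation along $\Si_{t^\star_0}$.
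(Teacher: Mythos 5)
Your overall architecture --- manufacture an auxiliary $\alt^{[-2]}$, run the coupled Teukolsky well-posedness theorem, and then propagate the Teukolsky--Starobinsky identities by showing that $\Psi^D_s$ and $\Psi^R_a$ solve the Dirichlet and Robin Regge--Wheeler problems with trivial data and invoking uniqueness (Theorem~\ref{thm:WPRobin}) --- agrees with the paper's in its second half, and that propagation step is sound. The genuine gap is exactly where you place the auxiliary construction: on the spacelike slice $\Si_{t^\star_0}$. Prescribing $(\Psi^{[-2]}_0,\dot\Psi^{[-2]}_0)$ through $\Psi^D_s=\dot\Psi^D_s=\Psi^R_a=\dot\Psi^R_a=0$ and then ``inverting the Chandrasekhar relations'' along $\Si_{t^\star_0}$ means solving an ODE system in $r^\star$ whose kernel consists of Teukolsky--Starobinsky modes (Remark~\ref{rem:specialkernel}). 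You assert, but do not show, that regularity at $\II$ together with the corner traces of~\eqref{eq:TeukBC} selects a unique element of this kernel, that the resulting $\alt^{[-2]}_0$ also has $\De^{-2}\alt^{[-2]}_0$ regular at $\HH^+$, and that the reconstructed pair satisfies the corner compatibility conditions demanded by Theorem~1.6 of~\cite{Gra.Hol24}. You flag this yourself as ``the principal obstacle'', but it cannot be deferred: it is the content of the theorem, and the hypotheses~\eqref{eq:cornerconditionTeukolskybissnewp2}--\eqref{eq:cornerconditionTeukolskybisanewp2} have to be consumed precisely there.

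The paper circumvents the inversion entirely. It first solves the pure IVP in $D^+(\Si_{t^\star_0})$, then works on the ingoing null cone $\CCb$ emanating from the corner $\Si_{t^\star_0}\cap\II$, where the Teukolsky--Starobinsky identity~\eqref{eq:TSidm2} becomes a \emph{transport} equation along $\Lb$ (tangent to $\CCb$) for the missing quantity, integrated from corner data supplied by the limits~\eqref{eq:TeukBCDirichlet}, \eqref{eq:TeukBCNeumann}, \eqref{eq:RWBCD}, \eqref{eq:PsiBCLLb}; the coupled characteristic IBVP is then solved on $D^+(\CCb\cup\II)$, and the remaining wedge $D^+(\Si_{t^\star_0})$ is filled by integrating the other TS identity backwards along $\pr_t$. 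This is also why the paper must treat the $[+2]$ and $[-2]$ cases asymmetrically (the TS identities give a transport equation along $\CCb$ only in one direction), a distinction your slice-based construction would erase if it could be completed. A secondary weak point: your uniqueness argument (``feed any second candidate through the same construction'') presupposes that an arbitrary smooth solution of~\eqref{eq:Teukp2} with the boundary condition~\eqref{eq:TeukRobinBCp2} arises from a coupled pair satisfying~\eqref{eq:TeukBC} and~\eqref{eq:TSidog}, which is not known a priori and is essentially what is being proved.
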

\begin{proof}
  We start with the $[-2]$ case. Let $\alt^{[-2]}_{\mathrm{IVP}}$ be the solution of the classical IVP for the Teukolsky equation~\eqref{eq:Teukm2} on $D^+(\Si_{t^\star_0})$. Define $\CCb := \{t+r^\star = t^\star_0 + \pi/2\}$, the ingoing null cone emanating from the corner $\Si_{t^\star_0}\cap\mathcal{I}$. On $\CCb$ we know $\alt^{[-2]}=\alt^{[-2]}_{\mathrm{IVP}}$ and all its derivatives and our goal is to construct $\alt^{[+2]}$ (and all its derivatives). At the corner $\CCb\cap\II$, we have $\alt^{[+2]}=(\alt^{[-2]})^\ast$ by the Teukolsky boundary condition~\eqref{eq:TeukBCDirichlet}. Moreover, for all $n\geq 1$, $(w^{-1}\Lb)^n\alt^{[+2]}$ is determined by $\alt^{[-2]}$ and its derivatives (see the proof of Proposition~\ref{prop:Chandra}, more precisely~\eqref{eq:TeukBCNeumann}, \eqref{eq:RWBCD}, \eqref{eq:PsiBCLLb}). The quantities $\alt^{[\pm2]}$ will have to satisfy the Teukolsky-Starobinsky identity~\eqref{eq:TSidm2}, which we rewrite for convenience
  \begin{align}\label{eq:TSidm2rewriteproof}
      w^2(w^{-1}\Lb)^4(\alt^{[+2]})^\ast & = \LL_{-1}^\dg\LL_{0}^\dg\LL_{1}^\dg\LL_{2}^\dg(\alt^{[-2]})^\ast + 12M\pr_t\alt^{[-2]}.
  \end{align}
  Since the right-hand side of~\eqref{eq:TSidm2rewriteproof} is known on $\CCb$, we define $\alt^{[+2]}$ as the solution of~\eqref{eq:TSidm2rewriteproof} -- seen as four transport equations along $\Lb$ --  with initial conditions at the corner given by~\eqref{eq:TeukBCDirichlet}, \eqref{eq:TeukBCNeumann}, \eqref{eq:RWBCD}, \eqref{eq:PsiBCLLb}. All the transversal derivatives of $\alt^{[+2]}$ can then be determined by the fact that $\alt^{[+2]}$ must satisfy the Teukolsky equation~\eqref{eq:Teukp2} and using expressions of $(w^{-1}L)^n\alt^{[+2]}$ in terms of derivatives of $\alt^{[-2]}$. Note moreover that with this definition $\alt^{[-2]}$ being regular at the horizon implies that $\alt^{[+2]}$ is regular at the future horizon. We can now define $(\alt^{[+2]},\alt^{[-2]})$ to be the solution on the domain $\DD := D^+\le(\CCb\cup\II\ri)$ of the system of Teukolsky equations~\eqref{eq:Teuk} with boundary conditions~\eqref{eq:TeukBC} which is given by a classical well-posedness result (see Theorem 1.6 in \cite{Gra.Hol24}).\\
  
  We want to check that the Teukolsky-Starobinsky identities~\eqref{eq:TSidog} hold on $\DD$, which  is equivalent to $\Psi^D_s=\Psi^R_a=0$ by the result of Proposition~\ref{prop:RWTSsimple}. First note that~\eqref{eq:TSidm2rewriteproof} implies
  \begin{align}\label{eq:transportPsiDsPsiRa}
    (w^{-1}\Lb)^2\Psi^D_s & = (w^{-1}\Lb)^2\Psi^R_a = 0 \quad\quad \text{on $\CCb$}.
  \end{align}
  The corner conditions~\eqref{eq:cornerconditionTeukolskybissnewp2},~\eqref{eq:cornerconditionTeukolskybisanewp2}, in the $[-2]$ case write as
  \begin{align}
    \LL^{[-2]}\psi^{[-2]}_{s} + \pr_t\Psi^{[-2]}_{s} & \xrightarrow{r\to+\infty} 0,\label{eq:cornerconditionTeukolskybissnewm2}\\
    \pr_t^n\Psi^{[-2]}_{a} & \xrightarrow{r\to+\infty} 0,\label{eq:cornerconditionTeukolskybisanewm2}
  \end{align}
  on $\CCb\cap\II$ and for all $n\geq0$. Since $\alt^{[+2]},\alt^{[-2]}$ satisfy the Teukolsky equations~\eqref{eq:Teuk} and the boundary conditions~\eqref{eq:TeukBC}, the limit~\eqref{eq:PsiBCLLbalo} of the proof of Proposition~\ref{prop:Chandra} holds and, using~\eqref{eq:cornerconditionTeukolskybissnewm2} and its equivalent for the $[+2]$ quantities obtained using~\eqref{eq:TeukBCNeumann}, \eqref{eq:RWBCD}, one deduces that $\Lb\Psi^D_s\to 0$ at the corner $\CCb\cap\II$. From that limit and the fact that $\Psi^D_s\to0$ at $\II$, we deduce by integrating two times~\eqref{eq:transportPsiDsPsiRa} that $\Psi^D_s=0$ on $\CCb$. By energy estimate, it follows that $\Psi^D_s=0$ on $\DD$ (see~\cite{Gra.Hol24a}).
  Now, \eqref{eq:cornerconditionTeukolskybisanewm2} together with~\eqref{eq:RWBCD} implies that $\Psi^R_a,\pr_t\Psi^R_a,\pr^2_t\Psi^R_a\to0$ at the corner $\CCb\cap\II$, which together with the ``Robin'' condition~\eqref{eq:RWBCdecoRobin}, also implies that $\Lb\Psi^R_a\to0$ at $\CCb\cap\II$. Arguing as for $\Psi^D_s$, we conclude that $\Psi^R_a=0$ on $\DD$, hence the Teukolsky-Starobinsky identities are satisfied by $\alt^{[+2]},\alt^{[-2]}$. By Corollary~\ref{cor:TeukIBVPdecoupled}, we infer that $\alt^{[-2]}$ satisfies the desired boundary conditions~\eqref{eq:TeukRobinBCm2}. It remains to extend $\alt^{[+2]}$ to the bottom part $D^+(\Si_{t^\star_0})$. This can be done by integrating Teukolsky-Starobinsky identity~\eqref{eq:TSidp2} -- seen as a first order ODE along $\pr_t$ for $\alt^{[+2]}$ -- backwards from $\CCb$ into $D^+(\Si_{t^\star_0})$. We leave to the reader to check that, by integration from $\CCb$, $\alt^{[+2]}$ satisfies all the desired properties in $D^+(\Si_{t^\star_0})$. By the red-shift and energy estimates of~\cite{Gra.Hol24a}, we have that~\eqref{est:energyredshiftDiasSantos} holds for $\alt^{[+2]},\alt^{[-2]}$ and this finishes the proof of the theorem in the $[-2]$ case.\\

  Let us now sketch the $[+2]$ case. The $[+2]$ case is less natural since the Teukolsky-Starobinsky identities~\eqref{eq:TSidog} do not give obvious transport equations for $\alt^{[-2]}$ along $\CCb$ with $\alt^{[+2]}$ as source terms. Nonetheless, one can still define $(\Psi^{[-2]}_s)^\ast = \Psi^{[+2]}_s$ on $\CCb$, and $\Psi^{[-2]}_a$ as the solution -- with the appropriate boundary conditions given by~\eqref{eq:RWBCD}, \eqref{eq:PsiBCLLb} -- of $w^2(w^{-1}L)^2(\Psi^{[-2]}_a)^\ast = -(\LL(\LL-2)+12M\pr_t)\alt^{[+2]}_a$ on $\CCb$, where we replace the $L$ by $\Lb$ on the LHS using that $(\LL(\LL-2)-6M(L+\Lb))(\Psi^{[-2]}_a)^\ast = - (\LL(\LL-2)+12M\pr_t)\Psi^{[+2]}_a$ (since $\Psi^R_a$ must vanish). Now that $\Psi^{[-2]}$ is defined on $\CCb$, one can define $\alt^{[-2]},\psi^{[-2]}$ using that~\eqref{eq:psirewritem2} gives an expression for $\alt^{[-2]}$ in terms of $\psi^{[-2]},\Psi^{[-2]}$, which can be plugged in~\eqref{eq:Teukrewritem2} to yield a transport equation for $\psi^{[-2]}$ along $\CCb$. The rest of the proof then follows along analogous ideas as in the $[-2]$ case and is left to the reader. This finishes the proof of the theorem.
\end{proof}
\begin{remark}
  Theorem~\ref{thm:WPDiasSantos} could be formulated for initial data posed on an ingoing null hypersurface $\CCb$. In that case, following the proof of the theorem, the initial data would be a triplet $(\alt; L\alt, L^2\alt)$ where $\alt$ is one of the Teukolsky quantities $\alt^{[\pm2]}$ on $\CCb$, and $L\alt,L^2\alt$ are its transverse derivatives at the corner $\CCb\cap\II$. Theorem~\ref{thm:WPDiasSantos} is then clearly a Teukolsky version of the well-posedness result for the full system of gravitational perturbations~\cite[Theorem 3.9]{Gra.Hol24} (and could actually be obtained, in the $[-2]$ case, as a corollary of that theorem). In fact the proof of Theorem~\ref{thm:WPDiasSantos} follows the same idea as ~\cite[Theorem 3.9]{Gra.Hol24}: constructing initial data for all the quantities using the constraints, then applying a standard well-posedness result for wave equations.
\end{remark}

\emph{Acknowledgements.} O.G. thanks Baptiste Devyver, \'Eric Dumas and Thierry Gallay for discussions around the well-posedness of initial boundary value problems with general boundary conditions. G.H.~acknowledges support by the Alexander von Humboldt Foundation in the framework of the Alexander von Humboldt Professorship endowed by the Federal Ministry of Education and Research as well as ERC Consolidator Grant 772249, as well as funding through Germany’s Excellence Strategy EXC 2044 390685587, Mathematics M\"unster: Dynamics-Geometry-Structure.

\bibliographystyle{graf_GR_alpha}
\bibliography{graf_GR_2}
\end{document}